\documentclass[12pt]{article}

\usepackage[margin=1in]{geometry}
\usepackage{setspace}
\usepackage{tgtermes}
\usepackage{newtxtext}
\usepackage{newtxmath}
\usepackage{bm}

\usepackage{algorithm}
\usepackage{algpseudocode}

\usepackage{tikz}
\usetikzlibrary{positioning,arrows.meta,shapes.geometric,calc,decorations.markings}
\usepackage{pgfplots}
\pgfplotsset{compat=1.18}
\usepackage{graphicx}
\usepackage{xcolor}

\usepackage{amsmath,amssymb,mathtools}
\usepackage{url}
\usepackage{authblk}

\usepackage[numbers]{natbib}
 \bibpunct[, ]{(}{)}{,}{a}{}{,}%

\usepackage[most]{tcolorbox}

\usepackage{amsthm}
\newtheorem{theorem}{Theorem}[section]
\newtheorem{lemma}[theorem]{Lemma}
\newtheorem{proposition}[theorem]{Proposition}
\newtheorem{corollary}[theorem]{Corollary}
\theoremstyle{definition}

\theoremstyle{remark}
\newtheorem{remark}[theorem]{Remark}

\newcommand{\Halmos}{\hfill$\square$}
\providecommand{\up}{\rule{0pt}{1.5em}}
\providecommand{\down}{\rule[0.625em]{0pt}{0em}}

\providecommand{\ACKNOWLEDGMENT}[1]{\noindent\textbf{Acknowledgments.} #1}

\newenvironment{proof}[1]{\par\smallskip\noindent\textit{#1}\ }{\Halmos\par\smallskip}

\makeatletter
\long\def\FIGURE#1#2#3{%
  \centering
  #1\par
  \caption{#2}%
  \def\@tempa{#3}\def\@tempb{}%
  \ifx\@tempa\@tempb\else
    \par\smallskip\noindent\footnotesize\textit{Note.}\enskip\@tempa
  \fi}
\long\def\TABLE#1#2#3{%
  \caption{#1}\par
  #2\par
  \def\@tempa{#3}\def\@tempb{}%
  \ifx\@tempa\@tempb\else
    \par\smallskip\noindent\footnotesize\textit{Note.}\enskip\@tempa
  \fi}
\makeatother

\title{\Large\bfseries When Is Heterogeneous Distance-Decay Facility Location
Tractable? \\[4pt] \large A Structural Classification, Exact Methods, and a
Real-World Study}

\author[1]{Zhou He}
\author[2]{T. C. E. Cheng}
\author[1]{Jichang Dong\thanks{Corresponding author. Email: \texttt{jcdonglc@ucas.ac.cn}}}
\affil[1]{School of Economics and Management, University of Chinese Academy of Sciences, N606-3, \#3 Zhongguancun Nanyitiao, Haidian District, Beijing 100190, China}
\affil[2]{Department of Logistics and Maritime Studies, The Hong Kong Polytechnic University, 11 Yuk Choi Rd, Hung Hom, Hong Kong, China}

\date{}

\begin{document}
\maketitle

\begin{abstract}
We study continuous planar facility location in which a demand point's captured
value decays with distance, with the per-point decay scale varying across points.
This heterogeneity is ubiquitous (dense urban cores demand proximity; dispersed
peripheries tolerate distance) yet underexploited in continuous planar location models that unify decay, clustering, and median objectives, and the same model
contains $k$-means, the Weber/$p$-median problem, and maximum covering as special
cases of one nearest-facility objective. We make four
contributions. (i)~A \emph{tractability classification theorem}: the discrete
objective is always monotone submodular, so the $(1{-}1/e)$ greedy guarantee
holds regardless of decay shape or heterogeneity; and the continuous cooperative
objective is concave (trap-free) if and only if the decay is concave in the
distance, identifying the clip $\max(0,\cdot)$---present in most common coverage
specifications---as a key mechanism that destroys continuous concavity; the
classification is tight: concavity holds if and only if the decay is concave in
distance. (ii)~An exact discrete method: the
candidate-discretized maximum-cover mixed-integer program (MIP) has an empirically
tight linear-programming (LP) relaxation ($\approx 0\%$ gap) and is solved to
optimality by branch-and-bound in seconds for $n \le 500$. (iii)~A
force-as-gradient / large-neighborhood-search heuristic, verified by fine-grid
convergence to be within $0.5\%$ of the discrete optimum, that
outperforms the $(1{-}1/e)$ greedy, Cooper-style alternating location--allocation,
particle swarm optimization, and weighted $k$-means ($30/30$ per-instance wins at
$K{=}30$, $p<10^{-9}$) and is competitive with problem-specific standard methods on
$k$-means, Weber/$p$-median, and shape-demand instances. (iv)~A real-world study:
on $592{,}667$ urban-delivery orders with density-derived heterogeneity, ignoring
the calibrated decay variation loses up to $9.7\%$ of captured demand and
relocates facilities by up to $37\%$ of the map; a separate retail dataset
calibrates the decay form (exponential, $R\approx 1.4$\,km).
\end{abstract}

\section{Introduction}\label{sec:intro}

Continuous planar facility location is a foundational problem family at the
intersection of optimization and operations research: place $p$ facilities
anywhere in the plane so as to optimize a sum of per-point terms, each a function
of the distance from a demand point to its nearest facility. Three canonical
members, namely maximum capture/covering, $k$-means clustering, and the Weber
(continuous $p$-median) problem, are all special cases of a single
\emph{nearest-facility fidelity} objective, distinguished only by the choice of
the per-point function. Despite this shared structure, they are addressed by
largely disjoint algorithmic literatures: Lloyd's iteration and $k$-means${}^{++}$
for $k$-means, Cooper's alternating location--allocation and the Weiszfeld
iteration for the Weber problem, and $(1-1/e)$ greedy for covering. A natural
question is whether a single algorithm, designed from the shared structure
rather than tailored to one member, can be competitive with the problem-specific
state of the art across the whole family, and more fundamentally, when
the resulting continuous problem is tractable at all.

The practical motivation comes from settings in which the per-point function is
both \emph{gradual} (coverage decays smoothly with distance, rather than at a
sharp threshold) and \emph{heterogeneous} (different demand points decline at
different rates) \citep{DreznerWesolowskyDrezner2004,ChurchMurray2018}. In
cellular network design, sensor placement, pre-positioning of emergency
communication assets, and ocean-monitoring platform deployment, signal strength,
sensing fidelity, and response quality all fall off with distance, and the
tolerated attenuation differs across users and targets. A dense urban core
(small decay scale $R_i$) demands a facility nearby; a scattered rural hamlet
(large $R_i$) tolerates a more distant one, an urban--rural contrast in distance
sensitivity that is empirically documented in mobility data
\citep{YangFangXuYinLiLu2019}. These two features, gradual and heterogeneous,
define our main instance, heterogeneous distance-decay maximum capture, which is
not directly handled by $k$-means${}^{++}$ or Cooper-style methods.

\subsection{Motivating examples and decay functions}\label{sec:examples}
Three decision contexts fit the model directly; each motivates one of the three
decay families we study, summarized in Figure~\ref{fig:decay}.
\begin{enumerate}
\item \emph{Pre-positioning emergency communication assets.} After a disaster,
each population cluster is a demand point whose weight is its population, and the
service quality it receives degrades uniformly with the response distance up to a
cutoff (a response-time target). The natural decay is the linear family
$\phi_i(r)=[1-r/R_i]^+$: full coverage at the facility, linear decline to zero at
$r=R_i$, and none beyond.

\item \emph{Ocean-monitoring platform deployment.} Ecological hotspots and
shipping lanes are demand points, and a monitoring buoy covers a target only if
it lies within the buoy's sensing range. The natural decay is the step family
$\phi_i(r)=\mathbf{1}[r\le R_i]$: full coverage inside the radius $R_i$ and none
outside---binary coverage, the limit that recovers the classical maximum covering
location problem (MCLP). Heterogeneity reflects that some targets (protected
habitat) require a closer $R_i$ than others.

\item \emph{Wireless service-point planning.} Received signal strength follows a
path-loss law, and the tolerated attenuation differs between high-capacity
business users and residential users. The natural decay is the exponential family
$\phi_i(r)=e^{-r/R_i}$: a smooth, always-positive decline that also underlies the
Huff gravity model of spatial choice.
\end{enumerate}
Figure~\ref{fig:decay} draws the three families against the normalized distance
$r/R_i$. Heterogeneity across demand points (different $R_i$) is the defining
feature of our setting: two points at the same distance from a facility can be
captured at very different rates.

\begin{figure}[htbp]
\FIGURE
{\begin{minipage}[c]{0.62\linewidth}\centering
\begin{tikzpicture}[scale=0.95]
\draw[->] (0,0) -- (5.8,0) node[right] {$r/R_i$};
\draw[->] (0,0) -- (0,3.6) node[above] {$\phi_i(r)$};
\draw[dotted, gray] (0,3)--(5.4,3) node[right, gray] {\scriptsize $1$};
\draw (3,0.08)--(3,-0.08) node[below] {\scriptsize $1$};
\draw (4.5,0.08)--(4.5,-0.08) node[below] {\scriptsize $1.5$};
\draw[blue!70!black, very thick] (0,3)--(3,0)--(5.4,0);
\node[blue!70!black, font=\scriptsize] at (0.5,2.0) {linear};
\draw[red!70!black, very thick, dashed, domain=0:5.4, samples=60, smooth]
  plot (\x, {3*exp(-\x/3)});
\node[red!70!black, font=\scriptsize] at (4.3,1.0) {exponential};
\draw[green!50!black, very thick, dash dot] (0,3)--(3,3)--(3,0)--(5.4,0);
\node[green!50!black, font=\scriptsize] at (1.5,3.3) {step};
\end{tikzpicture}
\end{minipage}\hspace{2pt}
\begin{minipage}[c]{0.34\linewidth}\small
{\color{blue!70!black}linear}\quad $\phi_i(r)=[1-r/R_i]^+$\\[6pt]
{\color{red!70!black}exponential}\quad $\phi_i(r)=e^{-r/R_i}$\\[6pt]
{\color{green!50!black}step}\quad $\phi_i(r)=\mathbf{1}[r\le R_i]$
\end{minipage}}
{Three decay families used in this paper, drawn against the normalized distance
$r/R_i$, with their function forms listed on the right. Different line styles
distinguish the families; the same distance can yield very different captured
demand depending on the family and on the per-point scale $R_i$.\label{fig:decay}}
{}
\end{figure}

\subsection{Related work}\label{sec:related}
Gradual covering was introduced on networks by \citet{BermanKrassDrezner2003} and
in the continuous plane by \citet{DreznerWesolowskyDrezner2004}, who allow
per-demand-point coverage parameters; partial-coverage extensions include
\citet{KarasakalKarasakal2004} and the ordered gradual covering model of
\citet{BermanKalcsicsKrassNickel2009}. The multiple gradual cover location
problem (MGCLP) of \citet{AlvarezMirandaSinnl2019} provides exact mixed-integer
formulations that exploit the submodularity of the cooperative objective. Most
directly related to our setting is the planar model of
\citet{BansalShojaee2020}, which builds on \citet{BansalKianfar2017} and gives a
$(1-1/e)$ greedy approximation and exact branch-and-bound for planar maximum
coverage with partial coverage and adjustable quality of service.
\citet{ChurchMurray2018} survey the broader covering-location literature.

The alternating location--allocation (ALT) heuristic of \citet{Cooper1964},
equivalent to Lloyd's $k$-means algorithm \citep{Lloyd1982}, is the workhorse
local optimizer for planar median and covering problems: it alternates between
allocating each demand point to its nearest facility and locating each facility
at the optimum (centroid for $k$-means, Weber median for $p$-median) of its
assigned cell, repeating to a coordinate-wise fixed point. Its convergence
behavior is analyzed in \citet{OstrovskyRabaniSchulmanSwart2012} and, in a
statistical mixture framework, in \citet{LuZhou2016}. For $k$-means, the dominant
practical method is $k$-means${}^{++}$ seeding \citep{ArthurVassilvitskii2007},
which places initial facilities with probability proportional to squared distance
to existing facilities and then runs Lloyd; for the Weber/$p$-median problem,
variable neighborhood search \citep{HansenMladenovic1997} and the Weiszfeld
iteration \citep{Weiszfeld1937}---a fixed-point iteration
for the weighted geometric median---are standard. These literatures have
developed in parallel rather than as instances of a single method.

Submodular
maximization under a cardinality constraint admits the classical $(1-1/e)$ greedy
guarantee \citep{NemhauserWolseyFisher1978}---add facilities one at a time, each
at the point of maximum marginal gain---which is tight under standard complexity
assumptions \citep{Feige1998} and improves to $(1-e^{-\kappa})/\kappa$ under
bounded curvature $\kappa$ \citep{ConfortiCornejols1984}. These four building
blocks---alternating location--allocation, $k$-means${}^{++}$ seeding, the
Weiszfeld iteration, and the marginal greedy---are exactly the ingredients our
algorithm recombines in \S\ref{sec:algo}.

The Huff gravity model
\citep{Huff1964} and competitive location \citep{Drezner1994} study market-share
splitting. Throughout we treat the nearest-facility (max) model as the
non-cooperative variant of gradual covering, complementary to the cooperative
(sum) MGCLP of \citet{AlvarezMirandaSinnl2019}.

\subsection{Contributions}\label{sec:contrib}
This paper makes four contributions, organized around the
question: when is heterogeneous distance-decay location tractable, and how
should one exploit that structure?

\begin{enumerate}
\item \textbf{A tractability classification theorem} (\S\ref{sec:theory}). We
prove that the discrete (candidate-set) objective is monotone submodular for any
non-increasing decay, so the classical $(1{-}1/e)$ greedy guarantee
\citep{NemhauserWolseyFisher1978} and its curvature refinement
\citep{ConfortiCornejols1984} hold regardless of the decay shape or
heterogeneity---extending the discrete guarantee underlying
\citet{BansalShojaee2020} beyond the linear/partial-coverage case. For the
continuous cooperative problem we give a sharp dichotomy: the objective is
concave (free of local-optima traps, solvable by gradient ascent to the global
optimum) if and only if the decay is concave in the distance. The clip
$\max(0,\cdot)$ present in most common coverage specifications is a key mechanism
that destroys continuous tractability; the unclipped Weber/$k$-means limits are the
tractable cases, while convex decays such as exponential are also non-concave. This tells a practitioner when convex optimization suffices and when a
heuristic is unavoidable.

\item \textbf{An exact discrete method} (\S\ref{sec:exact}). The
candidate-discretized problem, formulated as a $y$-assignment max-cover
mixed-integer program (MIP), has an empirically tight linear-programming (LP)
relaxation ($\approx 0\%$ integrality gap across all
tested decay families and scales, including adversarial high-overlap
configurations), and branch-and-bound solves it to optimality in seconds for
$n \le 500$. Whereas \citet{AlvarezMirandaSinnl2019} give exact formulations for
the cooperative MGCLP and \citet{BansalShojaee2020} use branch-and-bound for
planar partial coverage, we exploit the LP tightness mainly to certify our
heuristic, not merely to solve small instances.

\item \textbf{A heuristic with a near-optimality certificate}
(\S\ref{sec:algo}, \S\ref{sec:experiments}). Our force-based metaheuristic with
large-neighborhood search (FBM-LNS)---multi-start Lloyd gradient ascent
interleaved with submodular-marginal large-neighborhood relocate---is verified
by fine-grid convergence to be within $0.5\%$ of the discrete optimum. Unlike Cooper's ALT \citep{Cooper1964}, which converges to a
coordinate-wise fixed point, our relocate escapes such fixed points; and unlike
the $(1{-}1/e)$ greedy it jointly re-optimizes every facility. It significantly
outperforms the greedy, Cooper ALT, particle swarm optimization, and weighted
$k$-means on the heterogeneous-decay problem ($30/30$ per-instance wins at
$K{=}30$, $p<10^{-9}$), while remaining competitive with $k$-means${}^{++}$
\citep{ArthurVassilvitskii2007} on $k$-means and with Cooper on Weber/$p$-median
and shape-demand instances.

\item \textbf{A real-world case study} (\S\ref{sec:case}). On $592{,}667$
urban-delivery orders with density-derived heterogeneity, ignoring the calibrated
decay variation loses up to $9.7\%$ of captured demand and relocates facilities
by up to $37\%$ of the map. A separate retail dataset calibrates the decay form
(exponential, $R\approx 1.4$\,km), and a competitive (Huff) extension
\citep{Huff1964,Drezner1994} quantifies captured market share against an
incumbent. Both real datasets are drawn from the 2026 6th Meituan Business
Analytics Elite Competition.
\end{enumerate}

\smallskip
The remainder of the paper is organized as follows.
Section~\ref{sec:model} formalizes the problem.
Section~\ref{sec:theory} develops the structural theory and the tractability
classification.
Section~\ref{sec:algo} presents the algorithm.
Section~\ref{sec:exact} develops the exact discrete method.
Section~\ref{sec:experiments} reports computational experiments.
Section~\ref{sec:case} presents the real-world case study.
Section~\ref{sec:discussion} discusses limitations, extensions, and usage
guidance, and Section~\ref{sec:conclusion} concludes.

\section{Problem formulation}\label{sec:model}

\subsection{Index, sets, and model}
We formulate the heterogeneous distance-decay maximum capture problem (PMC-HDD)
in the standard notation of location science.

\medskip\noindent\textbf{Sets and indices.}
\begin{itemize}
\item $I=\{1,\ldots,n\}$: set of demand points (index $i$), with known location
$p_i\in\mathbb{R}^2$.
\item $J=\{1,\ldots,p\}$: set of facilities to be located (index $j$).
\end{itemize}

\noindent\textbf{Parameters.}
\begin{itemize}
\item $w_i\in\mathbb{R}_{++}$: demand weight (e.g., population, traffic volume,
order count) at point $i$.
\item $\phi_i:\mathbb{R}_+\to[0,1]$: non-increasing \emph{decay function} for
point $i$, with $\phi_i(0)=1$, parameterized by a scale $R_i>0$. The families
studied (Figure~\ref{fig:decay}) are linear $\phi_i(r)=[1-r/R_i]^+$, exponential
$\phi_i(r)=e^{-r/R_i}$, quadratic $\phi_i(r)=[1-(r/R_i)^2]^+$, and step
$\phi_i=\mathbf{1}[r\le R_i]$. A point with small $R_i$ is highly
distance-sensitive; the distribution of $\{R_i\}$ encodes the \emph{heterogeneity}
of the instance.
\item $p\in\mathbb{N}$: the number of facilities (a cardinality budget).
\end{itemize}

\noindent\textbf{Decision variables.}
\begin{itemize}
\item $X_j\in\mathbb{R}^2$ for each $j\in J$: the location of facility $j$
(continuous; the candidate set is the whole plane).
\end{itemize}

\noindent\textbf{Objective and constraints.}
Each demand point $i$ is served by the facility that captures the most of its
demand (the \emph{nearest-facility} or max model; Figure~\ref{fig:setup}
illustrates a small instance, its Voronoi allocation, and points left out of
range). The total captured demand is
\begin{equation}\label{eq:obj}
Z(\bm{X}) \;=\; \sum_{i\in I} w_i\,\max_{j\in J}\,
\phi_i\!\bigl(\|p_i-X_j\|\bigr),
\end{equation}
where $\bm{X}=(X_1,\ldots,X_p)$.
By Theorem~\ref{thm:hull} (\S\ref{sec:theory}), there exists an optimal solution
with every facility in $\operatorname{conv}\{p_i\}_{i\in I}$, so we may restrict
\begin{equation}\label{eq:const-hull}
X_j\in\operatorname{conv}\{p_i\}_{i\in I},\quad j\in J.
\end{equation}
We write $\Psi_i(\bm{X})=w_i\max_j\phi_i(\|p_i-X_j\|)$ for the captured demand at
point $i$ and $\pi(i)=\arg\min_j\|p_i-X_j\|$ for its serving (winning) facility.

\smallskip
The max aggregation in~\eqref{eq:obj} is the \textbf{non-cooperative} model: each
demand point patronizes a single (nearest) facility. The tractability theory of
\S\ref{sec:theory} also covers the \textbf{cooperative/Huff} family, in which the
$\max$ is replaced by a concave increasing saturation function $g$ applied to the
total attractiveness $U_i=\sum_j\phi_i(\|p_i-X_j\|)$:
\begin{equation}\label{eq:obj-coop}
Z^g(\bm{X}) \;=\; \sum_{i\in I} w_i\,g\!\bigl(U_i(\bm{X})\bigr).
\end{equation}
This subsumes two special cases. Cooperative gradual coverage is the MGCLP of
\citet{AlvarezMirandaSinnl2019}, recovered by $g(u)=\min(1,u)$; Huff competitive
market-share splitting is $g(u)=u/(u+C_i)$ with incumbent attractiveness
$C_i>0$ \citep{Huff1964,Drezner1994}. Both saturations are concave increasing in
$u$, which is the only property the theory requires.

\subsection{Variants and applications}\label{sec:variants}
The model of \S\ref{sec:model} unifies several well-studied continuous location
problems as special cases of the choice of decay function $\phi_i$. We label each
variant (V1--V6) for cross-reference in the computational study and give a
concrete decision context for each.

\begin{itemize}
\item[\textbf{V1.}] \emph{Heterogeneous gradual coverage (linear)}:
$\phi_i(r)=[1-r/R_i]^+$ with heterogeneous $\{R_i\}$---our primary focus. A
representative decision context is pre-positioning emergency communication assets
after a disaster: each population cluster is a demand point whose weight $w_i$ is
its population and whose decay scale $R_i$ reflects latency sensitivity, so a
dense urban core (small $R_i$) demands a base station within minutes while a
scattered rural hamlet (large $R_i$) tolerates a more distant one.

\item[\textbf{V2.}] \emph{Heterogeneous gradual coverage (exponential)}:
$\phi_i(r)=e^{-r/R_i}$ with heterogeneous $\{R_i\}$. This is the natural model
for wireless service-point and cellular planning: received signal strength
follows an exponential path-loss law, and the Huff gravity model of spatial
choice assigns each customer a probability of patronizing a facility that falls
off exponentially with distance. Heterogeneous $\{R_i\}$ captures that
high-capacity business users tolerate more attenuation than residential users.

\item[\textbf{V3.}] \emph{$k$-means}: $\phi_i(r)=1-(r/R)^2$ with large
homogeneous $R$; equivalent to minimizing $\sum_i w_i\min_j\|p_i-X_j\|^2$. The
canonical application is service-region territorialization, e.g., partitioning a
city's delivery or service zones into compact territories around $p$ depots to
minimize the sum of squared trip lengths (the standard clustering objective that
underlies the Franti \texttt{s1--s4} benchmark used in
\S\ref{sec:experiments}).

\item[\textbf{V4.}] \emph{Weber / $p$-median}: $\phi_i(r)=1-r/R$ with large $R$;
minimizes $\sum_i w_i\min_j\|p_i-X_j\|$. This is the classical warehouse and
distribution-center location problem on real geographic coordinates: the TSPLIB
EUC\_2D instances of \S\ref{sec:experiments} give actual city coordinates, and
$p$ facilities are placed to minimize total transport distance.

\item[\textbf{V5.}] \emph{Shape demand}: each demand unit is a region $D_i$ with
density $\mu_i$; captured demand is the integral
$\int_{D_i}\phi_i(\min_j\|p-X_j\|)\,\mu_i(p)\,dp$, reducible to the point problem
by sampling. A motivating example is ocean-monitoring platform deployment, where
ecological hotspots, reefs, and shipping lanes are two-dimensional regions rather
than points: a monitoring buoy's sensing fidelity decays with distance, and the
relevant coverage is the integral of fidelity over the protected habitat. Each
region is sampled with a fixed random seed, so a given instance is exactly
reproducible; the sampled objective converges to the integral as the sample count
$m$ grows (Monte-Carlo error $O(1/\sqrt m)$), and the resulting facility
placement is stable to the sampling seed (we verify this in
\S\ref{sec:experiments}).

\item[\textbf{V6.}] \emph{Binary covering (step)}: $\phi_i(r)=\mathbf{1}[r\le
R_i]$, the classical maximum covering location problem (MCLP) of ambulance and
sensor siting---full coverage inside the radius $R_i$ and none outside. Its
non-smooth objective removes the gradient signal our method exploits, so it
serves as a scope boundary (\S\ref{sec:disc-boundary}) where the $(1-1/e)$ greedy
is competitive.
\end{itemize}

Variants V1--V5 are the main computational focus of \S\ref{sec:experiments};
V6 (binary step) is tested as a scope boundary in \S\ref{sec:disc-boundary}. Our
force-based heuristic, FBM-LNS (\S\ref{sec:algo}), handles all six variants by
switching one line of code (the per-point function).

\subsection{Complexity and sources of difficulty}\label{sec:difficulty}
PMC-HDD is challenging for five distinct reasons, each labeled (D1--D5) for
cross-reference with the algorithm design (\S\ref{sec:algo}).

\begin{enumerate}
\item[\textbf{D1.}] \textbf{NP-hardness.} Under step decay
$\phi_i=\mathbf{1}[r\le R_i]$ the problem reduces to the planar MCLP, NP-hard
when $p$ is part of the input \citep{BansalKianfar2017}. Existing exact methods
\citep{BansalShojaee2020,AlvarezMirandaSinnl2019} use branch-and-bound on
discretized candidates but do not scale beyond a few hundred points. The
$(1-1/e)$ greedy \citep{NemhauserWolseyFisher1978} provides a polynomial
approximation but leaves a quality gap.

\item[\textbf{D2.}] \textbf{Non-convex objective.} For $p=1$ the within-cell
objective is concave on its active range (Theorem~\ref{thm:weber}); for $p\ge 2$
the max over facilities introduces further non-convexity with many local optima.
\citet{Cooper1964} proposed the alternating location--allocation heuristic to
handle this, but it converges to a coordinate-wise fixed point and provides
no global guarantee.

\item[\textbf{D3.}] \textbf{Coupled continuous--combinatorial structure.}
Locations are continuous but the nearest-facility assignment is combinatorial;
each depends on the other. Cooper's alternation decouples them temporarily but converges to fixed points that retain the coupling.

\item[\textbf{D4.}] \textbf{Infinite candidate set.} The continuous plane admits
infinitely many locations. On networks, a finite dominating set (FDS) exists for
many objectives \citep{BermanKalcsicsKrassNickel2009}, enabling exact discrete
optimization. No FDS is known for the continuous heterogeneous case, so the
problem cannot be reduced to a finite candidate set a priori.

\item[\textbf{D5.}] \textbf{Per-point heterogeneity.} Each $R_i$ may differ, so
the within-cell subproblem varies across demand points. Existing gradual-covering
methods \citep{DreznerWesolowskyDrezner2004} allow per-point parameters but treat
them as fixed input rather than exploiting the structural implications (weights
$w_i/R_i$, Theorem~\ref{thm:weber}) for algorithm design.
\end{enumerate}
These five difficulties motivate the structural theory (\S\ref{sec:theory}) and
the algorithm (\S\ref{sec:algo}), each of whose components is designed to address
a specific difficulty (Table~\ref{tab:design}).

\begin{figure}[htbp]
\FIGURE
{\begin{minipage}[c]{0.55\linewidth}\centering
\begin{tikzpicture}[scale=1.05]
\draw[thick] (0,0) rectangle (6,5);
\draw[dashed, black] (3.43,0)--(2.85,5);
\draw[dashed, orange!75!black] (1.8,2.7) circle (1.4);
\draw[dashed, orange!75!black] (4.4,3.0) circle (1.1);
\filldraw[fill=orange!90!black, draw=black] (1.69,2.59) rectangle ++(0.22,0.22);
\node[font=\footnotesize, orange!90!black] at (1.8,2.25) {F1};
\filldraw[fill=orange!90!black, draw=black] (4.29,2.89) rectangle ++(0.22,0.22);
\node[font=\footnotesize, orange!90!black] at (4.4,2.55) {F2};
\foreach \x/\y/\rad/\c/\lab in {
  1.0/2.0/0.15/red/A,
  2.4/3.6/0.17/blue!70!black/B,
  1.5/4.0/0.15/red/C,
  4.0/3.8/0.15/red/D,
  5.0/2.5/0.17/blue!70!black/E,
  4.6/2.2/0.15/red/F,
  0.7/4.6/0.15/red/G,
  5.3/4.5/0.15/blue!70!black/H
}{
  \filldraw[\c] (\x,\y) circle (\rad);
  \node[font=\scriptsize\bfseries, white] at (\x,\y) {\lab};
}
\node[font=\scriptsize, black] at (3.6,4.7) {Voronoi edge};
\end{tikzpicture}
\end{minipage}\hspace{-14pt}
\begin{minipage}[c]{0.43\linewidth}\centering\small
\textbf{Allocation (nearest facility)}\\[5pt]
\begin{tabular}{@{}lll@{}}
Facility & Points & Status \\[2pt]
F1 & A, B, C & {\color{green!50!black}captured} \\
F2 & D, E, F & {\color{green!50!black}captured} \\
F1 & G & {\color{black}out of range} \\
F2 & H & {\color{black}out of range} \\
\end{tabular}\\[8pt]
{\footnotesize Red = sensitive (small $R_i$);\\
blue = tolerant (large $R_i$);\\
circle area $\propto w_i$.}
\end{minipage}}
{A heterogeneous maximum-capture instance with eight labeled demand points (A--H)
and two facilities (F1, F2). Each point is allocated to its nearest facility by
the Voronoi edge (the perpendicular bisector of the F1--F2 segment, dashed); points G and H lie in their cell but
outside the facility's coverage disk, so their demand is not captured. Points are
colored by sensitivity (red $\equiv$ small $R_i$, blue $\equiv$ large $R_i$) and
sized by weight.\label{fig:setup}}
{}
\end{figure}

\section{Structural theory and the tractability classification}\label{sec:theory}

This section develops the structural properties on which the algorithm and the
tractability classification rest. The proofs given here are brief sketches that
emphasize the idea; the detailed, self-contained versions are collected in
the Electronic Companion. The results form a single logical chain: the
nearest-distance reduction fixes the Voronoi allocation and resolves the
continuous--combinatorial coupling at the allocation level, convex-hull
containment bounds the search set, submodularity provides the discrete
approximation guarantee, the Weber reduction shapes the within-cell landscape,
the gradient-as-force identity drives the local search, the tractability
classification pins down when the continuous problem is trap-free, and the two
fixed-point propositions define where the local search stops and how the relocate
restarts it.

\subsection{Nearest-distance reduction and Voronoi allocation}
Because each $\phi_i$ is non-increasing, the per-point maximum over facilities
equals the decay evaluated at the nearest-facility distance.

\begin{theorem}[Nearest-distance reduction]\label{thm:ndr}
$Z(\bm{X})=\sum_{i\in I}w_i\,\phi_i\!\bigl(\min_{j\in J}\|p_i-X_j\|\bigr)$, and
the optimal allocation satisfies $\pi(i)=\arg\min_j\|p_i-X_j\|$ for every $i$.
\end{theorem}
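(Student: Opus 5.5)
The argument is pointwise: for each demand point $i$ we show that the maximum in~\eqref{eq:obj} equals $\phi_i$ evaluated at the nearest-facility distance, and then sum over $i$. Fix $i\in I$ and set $d_{ij}=\|p_i-X_j\|$ for $j\in J$. Because $J$ is finite and nonempty ($p\ge 1$), the minimum $d_i^\ast=\min_j d_{ij}$ is attained at some index $j^\ast$. The plan is to prove
$$\max_{j\in J}\phi_i(d_{ij})=\phi_i(d_i^\ast)$$
by establishing the two inequalities separately.

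\emph{Lower bound.} The index $j^\ast$ is one of the candidates in the maximum, so $\max_j\phi_i(d_{ij})\ge\phi_i(d_{ij^\ast})=\phi_i(d_i^\ast)$.

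\emph{Upper bound.} Every $j$ satisfies $d_{ij}\ge d_i^\ast$. Since $\phi_i$ is non-increasing (the standing assumption of \S\ref{sec:model}), this gives $\phi_i(d_{ij})\le\phi_i(d_i^\ast)$ for every $j$, and taking the maximum over $j$ yields $\max_j\phi_i(d_{ij})\le\phi_i(d_i^\ast)$.

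Multiplying by $w_i>0$ and summing over $i$ then gives the formula for $Z(\bm X)$. This uses only monotonicity of $\phi_i$, so it covers the step, linear, quadratic and exponential families uniformly, including the flat or discontinuous parts of the clipped and step decays.

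For the allocation claim, interpret ``optimal allocation'' as an assignment $\sigma:I\to J$ that maximizes $\sum_i w_i\phi_i(\|p_i-X_{\sigma(i)}\|)$ with $\bm X$ held fixed. This objective is separable in $i$, so it suffices to maximize each term on its own. The step above shows that $j^\ast\in\arg\min_j d_{ij}$ attains the per-point maximum, so setting $\pi(i)=\arg\min_j\|p_i-X_j\|$ yields an optimal allocation and achieves the value $Z(\bm X)$.

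The only delicate point is the reading of ``the optimal allocation satisfies $\pi(i)=\arg\min$''. When $\phi_i$ is merely non-increasing rather than strictly decreasing, ties can occur: for example, beyond $R_i$ in the clipped families every facility yields $0$, and the step decay is constant on each side of $R_i$. In such cases non-nearest facilities can also be optimal. I would therefore state the conclusion as existence: the nearest-facility allocation is always optimal, and it is the unique optimum when each $\phi_i$ is strictly decreasing on the relevant range and the nearest facility is unique. Ties in the $\arg\min$ are broken by any fixed rule, for instance the smallest index. Apart from this tie-handling remark, the proof is immediate.
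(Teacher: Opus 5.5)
Your proof is correct and follows essentially the paper's own argument. Fix $i$, take a distance minimizer $j^\star$, use monotonicity of $\phi_i$ to bound every term by $\phi_i(d_i^\star)$ with equality at $j^\star$, then weight by $w_i$ and sum.

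Your tie-handling caveat is also right, and it is more careful than the paper's appendix, which asserts that the argmax is \emph{precisely} the argmin of the distance. For merely non-increasing decays (clipped families beyond $R_i$, or the step) one only has $\arg\min_j\|p_i-X_j\|\subseteq\arg\max_j\phi_i(\|p_i-X_j\|)$, with equality when $\phi_i$ is strictly decreasing on the relevant range.
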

\begin{proof}{Proof}
$\max_j\phi_i(\|p_i-X_j\|)=\phi_i(\min_j\|p_i-X_j\|)$ since $\phi_i$ is
non-increasing, and the argmax coincides with the argmin of the distance.
\Halmos
\end{proof}

Thus the allocation is the standard Euclidean Voronoi partition regardless of the
decay form or heterogeneity. The decay parameters affect only the
within-cell location objective and the captured value, not the
combinatorial structure of the partition.

\smallskip
Theorem~\ref{thm:ndr} is the foundation of the paper: it reduces the
max-over-facilities objective to a function of nearest-facility distances, fixing
the Voronoi allocation once $\bm{X}$ is given (difficulty D3, cf.\
Table~\ref{tab:design}). Its sole hypothesis is the monotonicity intrinsic to any
decay function, and it is the prerequisite for the Weber reduction
(Theorem~\ref{thm:weber}), the gradient-as-force identity (Theorem~\ref{thm:grad}),
and the Lloyd step (\S\ref{sec:algo}). It also makes precise that $k$-means, the
Weber problem, and maximum covering are instances of one objective.

\subsection{Convex hull containment}
\begin{theorem}\label{thm:hull}
There exists an optimal solution with every facility in
$\operatorname{conv}\{p_i\}_{i\in I}$.
\end{theorem}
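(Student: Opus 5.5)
The plan is to use Euclidean projection onto $C=\operatorname{conv}\{p_i\}_{i\in I}$. Take any optimal $\bm{X}^*$; the next paragraph handles the case where no optimum exists a priori. Replace each facility by its projection $P_C(X_j)$. Because $C$ is a nonempty closed convex set, the projection is non-expansive. Moreover, for every $c\in C$ it satisfies $\|c-P_C(x)\|\le\|c-x\|$. This follows from the obtuse-angle characterization $\langle x-P_C(x),\,c-P_C(x)\rangle\le 0$, which gives
\[
\|c-x\|^2=\|c-P_C(x)\|^2+\|x-P_C(x)\|^2-2\langle x-P_C(x),\,c-P_C(x)\rangle\ge\|c-P_C(x)\|^2 .
\]
Applying this with $c=p_i$ shows that every demand point is at least as close to $P_C(X_j)$ as to $X_j$, for every $j$. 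Hence $\min_j\|p_i-P_C(X_j)\|\le\min_j\|p_i-X_j\|$ for each $i$. By Theorem~\ref{thm:ndr} and the fact that each $\phi_i$ is non-increasing, every term of $Z$ weakly increases. So the projected configuration has objective at least $Z(\bm{X}^*)$, is therefore also optimal, and lies in $C^p$.

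The remaining issue, which I expect to be the main obstacle, is existence of an optimum at all. The decays need not be continuous; the step decay is only upper semicontinuous, and an arbitrary non-increasing $\phi_i$ may fail even that. I would handle this in two stages. First, the projection argument shows that $\sup_{\mathbb{R}^{2p}}Z=\sup_{C^p}Z$, and $C^p$ is compact. For the decay families in the paper, $\phi_i$ is continuous (linear, exponential, quadratic) or upper semicontinuous (step, since $\{r\le R_i\}$ is closed). Then $r\mapsto\phi_i(r)$ is usc, and composing with the continuous map $\bm{X}\mapsto\min_j\|p_i-X_j\|$ keeps $Z$ usc. By Weierstrass, $Z$ attains its maximum on $C^p$. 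That maximizer is then globally optimal, and the theorem follows. If the statement is meant for general non-increasing $\phi_i$, I would state the usc assumption explicitly, or else state the result as "every feasible solution is weakly dominated by one in the hull", which the projection argument gives unconditionally.

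For readers wanting more structure, I would add that the per-facility map $X_j\mapsto P_C(X_j)$ acts independently on each facility and leaves facilities already in $C$ unchanged. The Voronoi allocation may change under the projection, but this is harmless: the argument bounds each point's nearest distance directly and never tracks which facility serves it.
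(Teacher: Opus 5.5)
Your projection argument is the paper's own proof: project each facility onto the hull, use the obtuse-angle inequality to get $\|p_i-P_C(X_j)\|\le\|p_i-X_j\|$ for every $i$, and conclude by monotonicity of $\phi_i$. Your existence step (upper semicontinuity of $Z$ plus compactness of $C^p$) supplies something the paper's proof leaves implicit, since it simply starts from an assumed optimal $\bm{X}^\star$, and your caveat is correct: with $p=1$, unit weights, an open cutoff $\mathbf{1}[r<1]$ at $(0,0)$ and $[1-r/10]^+$ at $(2,0)$ give $\sup Z=1.9$, which is never attained.
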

\begin{proof}{Proof}
If $X_j$ lies outside the closed convex hull, its metric projection $X_j'$ onto
the hull is no farther from every $p_i$; since each $\phi_i$ is non-increasing,
coverage does not decrease. \Halmos
\end{proof}

\smallskip
An optimizer therefore lies in the compact set $\operatorname{conv}\{p_i\}$, so
the algorithm's gradient search is clipped to it (\S\ref{sec:algo}; difficulty D4,
cf.\ Table~\ref{tab:design}). The hypothesis is the same monotonicity as
Theorem~\ref{thm:ndr}, and the projection argument is dimension-independent.

\subsection{Submodularity, curvature, and approximation}
Discretizing candidate locations to a finite set $C$ yields the set objective
$Z_D(S)=\sum_i w_i\max_{c\in S}\phi_i(\|p_i-c\|)$.

\begin{theorem}[Monotone submodularity]\label{thm:sub}
$Z_D$ is monotone non-decreasing and submodular on the lattice of subsets of $C$.
\end{theorem}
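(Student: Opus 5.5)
The plan is to reduce the statement to a per-point fact and then sum. Write $Z_D(S)=\sum_{i\in I} w_i f_i(S)$ with
\[
f_i(S)=\max_{c\in S} a_{ic},\qquad a_{ic}:=\phi_i(\|p_i-c\|)\in[0,1].
\]
Set $f_i(\emptyset)=0$, which is consistent because every $a_{ic}\ge 0$. Monotonicity and submodularity are both preserved under nonnegative linear combinations, and $w_i>0$. It therefore suffices to show that each $f_i$ is monotone and submodular. This is the classical fact that a weighted maximum-coverage (facility-location) function is monotone submodular. The argument uses nothing about the shape of $\phi_i$ beyond nonnegativity. In particular, monotonicity of $\phi_i$ is not needed here; that is why the result holds for every decay family and every heterogeneity pattern.

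\emph{Monotonicity.} For $A\subseteq B\subseteq C$, the maximum over $B$ is taken over a superset of $A$, so $f_i(A)\le f_i(B)$. When $A=\emptyset$ this follows from $a_{ic}\ge 0$.

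\emph{Submodularity.} I will verify the diminishing-returns form: for $A\subseteq B\subseteq C$ and $c\notin B$,
\[
f_i(A\cup\{c\})-f_i(A)\;\ge\; f_i(B\cup\{c\})-f_i(B).
\]
Write $\alpha=f_i(A)$ and $\beta=f_i(B)$, so that $\alpha\le\beta$ by monotonicity. The two marginal gains are $(a_{ic}-\alpha)^+$ and $(a_{ic}-\beta)^+$. The map $t\mapsto (a_{ic}-t)^+$ is non-increasing, so the inequality follows at once. It is convenient to present this as a two-case split:
\begin{itemize}
\item If $a_{ic}\le\beta$, the right-hand gain is $0$, and the left-hand gain is nonnegative.
\item If $a_{ic}>\beta$, both gains are positive and equal $a_{ic}-\alpha\ge a_{ic}-\beta$.
\end{itemize}
Summing over $i$ with weights $w_i$ gives the claim for $Z_D$.

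There is no real obstacle. The only point needing care is the convention at the empty set, which must be stated so that $f_i(\emptyset)=0\le f_i(\{c\})$. After the proof I would remark that this is exactly what licenses the Nemhauser--Wolsey--Fisher $(1-1/e)$ guarantee and its curvature refinement. I would also note that the normalization $Z_D(\emptyset)=0$ is the condition that theorem requires.
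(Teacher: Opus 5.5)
Your proof is correct and follows the paper's argument essentially step for step. You decompose $Z_D$ per demand point, get monotonicity from the maximum over a superset, and get submodularity because the marginal gain $w_i[\phi_i(\|p_i-c\|)-\max_{c'\in S}\phi_i(\|p_i-c'\|)]^+$ is non-increasing in the set, then sum over $i$. Your explicit convention $f_i(\emptyset)=0$ (justified by $\phi_i\ge 0$) and your observation that monotonicity of $\phi_i$ is never used are small refinements that the paper's proof leaves implicit.
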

\begin{proof}{Proof}
For $S\subseteq T$ and candidate $c\notin T$, the marginal gain
$\Delta_i(S)=w_i[\phi_i(\|p_i-c\|)-\max_{c'\in S}\phi_i(\|p_i-c'\|)]^+$ is
non-increasing in the set because $\max_{c'\in S}\phi_i(\|p_i-c'\|)\le
\max_{c'\in T}\phi_i(\|p_i-c'\|)$. Summing preserves the inequality. \Halmos
\end{proof}
The Nemhauser--Wolsey--Fisher greedy therefore achieves
$Z_D(S^{\mathrm{greedy}})\ge(1-1/e)Z_D^*$ \citep{NemhauserWolseyFisher1978}, a
bound that is tight in the worst case \citep{Feige1998}. Bounded submodular
curvature $\kappa\in(0,1]$ \citep{ConfortiCornejols1984} improves this to the
data-dependent ratio
\begin{equation}\label{eq:curv}
Z_D(S^{\mathrm{greedy}})\;\ge\;\frac{1-e^{-\kappa}}{\kappa}\,Z_D^*
\;>\;(1-1/e)\,Z_D^*,
\end{equation}
Intuitively, when facility service areas overlap little (low redundancy),
$\kappa$ is small and the greedy ratio exceeds $1-1/e$.

\smallskip
This provides the formal $(1{-}1/e)$ quality floor against which the heuristic is
measured, and it underpins the \textsc{Greedy} baseline and the LNS repair step
(difficulty D1, cf.\ Table~\ref{tab:design}). The only extra ingredient beyond
monotonicity is discretization to a finite $C$---a modeling convenience, since
the continuous plane is recovered as $C$ becomes dense.

\subsection{The within-cell location step is a weighted Weber problem}
Fixing a Voronoi cell $S\subseteq I$ and the linear decay family, the
single-facility problem for the cell is
\begin{align*}
& \max_{X}\sum_{i\in S} w_i\Bigl(1-\tfrac{\|p_i-X\|}{R_i}\Bigr) \\
&\quad\equiv\;\min_{X}\sum_{i\in S}\tfrac{w_i}{R_i}\,\|p_i-X\|.
\end{align*}
\begin{theorem}[Weber reduction]\label{thm:weber}
Under linear decay, the within-cell location step of Lloyd's algorithm is a
weighted Weber problem with weights $\alpha_i=w_i/R_i$.
\end{theorem}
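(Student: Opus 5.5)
The plan is to make precise what ``the within-cell location step'' means and then do a direct algebraic reduction. By Theorem~\ref{thm:ndr}, once the facilities are fixed the allocation is the Voronoi partition. Lloyd's algorithm freezes this partition and re-optimizes each facility separately. For a cell $S\subseteq I$ assigned to facility $j$, the step therefore solves
\[
\max_{X}\ \sum_{i\in S} w_i\,\phi_i(\|p_i-X\|)
\qquad\text{with}\qquad
\phi_i(r)=\Bigl[1-\tfrac{r}{R_i}\Bigr]^+ .
\]
The first step is to remove the clip. I would work on the \emph{active range}, the set of $X$ with $\|p_i-X\|\le R_i$ for all $i\in S$ (the same range referred to in D2). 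This is the regime in which the linear model $1-r/R_i$ is used for the cell, and it always holds in the large-$R$ Weber limit V4.

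On the active range each term equals $w_i-(w_i/R_i)\|p_i-X\|$. Summing over the cell gives
\[
\sum_{i\in S}w_i-\sum_{i\in S}\alpha_i\|p_i-X\|,\qquad \alpha_i=\frac{w_i}{R_i}.
\]
The first sum does not depend on $X$. Maximizing the objective is therefore equivalent to minimizing $\sum_{i\in S}\alpha_i\|p_i-X\|$, which is a weighted Weber problem. Its weights are positive because $w_i>0$ and $R_i>0$, so it is convex and has a minimizer. By the argument of Theorem~\ref{thm:hull}, that minimizer lies in $\operatorname{conv}\{p_i\}_{i\in S}$. The Weiszfeld iteration with weights $\alpha_i$ then applies.

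The main obstacle is the clip. Outside the active range, a point with $\|p_i-X\|>R_i$ contributes $0$ rather than a negative amount, so the two objectives differ. There are two possible treatments, and I would try the first one first.
\begin{itemize}
\item \textbf{Working-set form.} Define the step as the Lloyd/Weber update on the currently covered points $S^+=\{i\in S:\|p_i-X\|<R_i\}$. On the region where $S^+$ is constant, the objective is exactly $\text{const}-\sum_{i\in S^+}\alpha_i\|p_i-X\|$. The reduction then holds locally, with the same weights restricted to $S^+$.
\item \textbf{Minorant form.} Use $[1-r/R_i]^+\ge 1-r/R_i$. With this bound, the Weber objective is a concave minorant of the within-cell objective. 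The minorant is tight wherever all points in the cell are covered.
\end{itemize}
I would state the theorem in the working-set form, since that is what the Lloyd step implements. I would add the remark that the weights $w_i/R_i$ give distance-sensitive points (small $R_i$) a proportionally greater pull, as D5 anticipates.
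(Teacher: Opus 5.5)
Your argument is the same as the paper's: on a fixed Voronoi cell you write each term as $w_i-(w_i/R_i)\|p_i-X\|$, drop the constant $\sum_{i\in S}w_i$, and read off the weighted Weber objective with weights $\alpha_i=w_i/R_i$. The paper's proof writes the term without the $[\cdot]^+$ and never addresses the clip, so your working-set restriction to the covered points $S^+$ is a more careful statement of what the paper assumes implicitly, and it matches the paper's own later phrase ``on its residual active set''.
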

\begin{proof}{Proof}
For a fixed cell $S$ and linear $\phi_i$, the contribution of point $i$ is
$w_i(1-\|p_i-X\|/R_i)$. Dropping the additive constant $\sum_{i\in S}w_i$,
maximizing $\sum_{i\in S}w_i(1-\|p_i-X\|/R_i)$ over $X$ is equivalent to
minimizing $\sum_{i\in S}(w_i/R_i)\|p_i-X\|$, the weighted Weber objective with
weights $\alpha_i=w_i/R_i$. \Halmos
\end{proof}
This connects our location step directly to the classical Weber/Fermat
literature and its linearly convergent Weiszfeld iteration \citep{Weiszfeld1937}.
Two consequences follow immediately.
\begin{corollary}[Heterogeneity, formally]\label{cor:hetero}
Distance-sensitive points (small $R_i$) receive larger weight $\alpha_i=w_i/R_i$
and therefore pull the facility more strongly; heterogeneous sensitivity is a
first-order consequence of the Weber weights rather than an algorithmic device.
\end{corollary}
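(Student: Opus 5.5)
The corollary follows from Theorem~\ref{thm:weber} together with the first-order optimality conditions of the weighted Weber problem. I would proceed in three steps.

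\emph{Step 1: reduce to a weighted Weber problem.} Fix a Voronoi cell $S$ and assume linear decay, with every point of $S$ in its active range ($\|p_i-X\|<R_i$). By Theorem~\ref{thm:weber}, the location step minimizes $F(X)=\sum_{i\in S}\alpha_i\|p_i-X\|$ with $\alpha_i=w_i/R_i$.

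\emph{Step 2: read off the pull of each point.} For $X\neq p_i$, the gradient of $-F$ is
\[
-\nabla F(X)=\sum_{i\in S}\alpha_i\,\frac{p_i-X}{\|p_i-X\|}.
\]
So each point exerts a pull on the facility with magnitude exactly $\alpha_i$, directed toward $p_i$. This is the formal meaning of ``pulls more strongly'': for fixed $w_i$, $\alpha_i$ is strictly decreasing in $R_i$, so a small $R_i$ gives a large pull.

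\emph{Step 3: show the effect at the optimum.} The stationarity condition $\sum_i\alpha_i u_i(X^*)=0$, with $u_i$ the unit vectors above, gives the Weiszfeld fixed-point form
\[
X^*=\frac{\sum_i(\alpha_i/\|p_i-X^*\|)\,p_i}{\sum_i\alpha_i/\|p_i-X^*\|}.
\]
The optimum is therefore a convex combination of the $p_i$ whose coefficients grow with $\alpha_i$. For the limiting case, I would invoke the classical majority-weight theorem: if $\alpha_k\ge\sum_{i\ne k}\alpha_i$, then $X^*=p_k$. This follows by checking that the subgradient of $F$ at $p_k$ contains $0$, since the other points' unit vectors sum to a vector of norm at most $\sum_{i\ne k}\alpha_i\le\alpha_k$.

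For monotonicity, I would show that increasing a single $\alpha_k$, that is, shrinking $R_k$, does not increase $\|X^*-p_k\|$. Let $X^*$ and $X'$ be optima under weights $\alpha$ and $\alpha'$, where $\alpha'$ differs from $\alpha$ only by a larger $k$-th coordinate. Writing down the two optimality inequalities and adding them gives $(\alpha_k'-\alpha_k)\bigl(\|p_k-X'\|-\|p_k-X^*\|\bigr)\le0$. This is a standard revealed-preference argument.

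The hypothesis ``first-order consequence rather than an algorithmic device'' simply means that the weights appear in the objective itself. Theorem~\ref{thm:weber} already gives this.

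\emph{Main obstacle.} The main obstacle is the boundary of the active range. The clip $[\cdot]^+$ means that if $X$ moves so that $\|p_i-X\|>R_i$, point $i$ drops out of the sum, and its pull becomes zero rather than $\alpha_i$. I would handle this by stating the claim locally: on the open set where the active set is fixed, the reduction and the force expressions are exact. A non-differentiability at $X=p_i$ is handled by subgradients as in Step 3.
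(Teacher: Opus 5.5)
Your Steps~1--2 are the paper's proof. The paper writes the stationarity condition $\sum_{i\in S}\alpha_i(X-p_i)/\|X-p_i\|=\mathbf{0}$ of the weighted Weber problem from Theorem~\ref{thm:weber}, reads off that each point contributes a unit vector scaled by $\alpha_i=w_i/R_i$, so smaller $R_i$ gives a proportionally larger pull, and stops there.

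Your Step~3 is a correct addition that goes beyond the paper. The majority-weight subgradient check, and more importantly the revealed-preference inequality $(\alpha_k'-\alpha_k)(\|p_k-X'\|-\|p_k-X^*\|)\le 0$, turn a statement about the size of one gradient term into a statement about where the optimum actually lands. That is the stronger reading of ``pulls the facility more strongly.'' The Weiszfeld convex-combination remark is only heuristic, because the coefficients $\alpha_i/\|p_i-X^*\|$ depend on $\alpha$ through $X^*$; the revealed-preference argument is what does the work.

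One point of scope. Your active-set restriction justifies Step~2, but Step~3 compares global optima. It therefore holds for the unclipped Weber objective of Theorem~\ref{thm:weber}, not for the clipped location step. With the clip, the difference $w_k\bigl([1-d/R_k']^+-[1-d/R_k]^+\bigr)$ is not monotone in $d$: it decreases up to $R_k'$ and then rises back to $0$ at $R_k$. The revealed-preference step then breaks, and the conclusion can genuinely fail. Shrinking $R_k$ can move $p_k$ out of range at the current optimum, after which a low-weight $p_k$ is abandoned and the facility retreats from it. The paper avoids this only by silently dropping $[\cdot]^+$ in Theorem~\ref{thm:weber}, so your explicit caveat is the more accurate version of the claim.
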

\begin{proof}{Proof}
The first-order optimality condition of the weighted Weber problem is
$\sum_{i\in S}\alpha_i\,(X-p_i)/\|X-p_i\|=\mathbf{0}$ with $\alpha_i=w_i/R_i$;
points with smaller $R_i$ have larger $\alpha_i$ and hence proportionally larger
pull on $X$. \Halmos
\end{proof}
\begin{corollary}[Uniqueness]\label{cor:uniq}
In two dimensions the weighted Weber objective is convex and its minimizer is
unique unless all in-cell demand points and the iterate are collinear.
\end{corollary}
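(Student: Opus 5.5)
The plan is to treat the objective $f(X)=\sum_{i\in S}\alpha_i\|p_i-X\|$ with $\alpha_i=w_i/R_i>0$ (Theorem~\ref{thm:weber}) and to prove the two assertions in turn. First I would establish convexity and existence of a minimizer. Then I would show that two distinct minimizers force every in-cell demand point onto the line through them, which also contains the iterate.

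\emph{Convexity and existence.} Each map $X\mapsto\|p_i-X\|$ is convex, since it is a norm composed with an affine map. A nonnegative combination of convex functions is convex, so $f$ is convex. Moreover $f(X)\ge(\min_i\alpha_i)\bigl(\|X\|-\max_i\|p_i\|\bigr)\to\infty$ as $\|X\|\to\infty$, so $f$ is coercive. Being continuous, it attains its minimum, and the set of minimizers is nonempty, convex and compact.

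\emph{Key lemma (strict convexity along lines).} Fix a point $p$ and a line $\ell(t)=X+tv$ with $v\neq0$. Then $g(t)=\|p-X-tv\|=\sqrt{\|v\|^2t^2-2t\langle v,p-X\rangle+\|p-X\|^2}$. Its discriminant is $4\bigl(\langle v,p-X\rangle^2-\|v\|^2\|p-X\|^2\bigr)$. By Cauchy--Schwarz this is $\le 0$, with equality iff $p-X$ is parallel to $v$, i.e.\ iff $p\in\ell$. If $p\notin\ell$, the quadratic under the root is strictly positive. A direct computation then gives $g''(t)=\bigl(\|v\|^2\|p-X\|^2-\langle v,p-X\rangle^2\bigr)/g(t)^3>0$, so $g$ is strictly convex. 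If $p\in\ell$, then $g(t)=\|v\|\,|t-t_0|$, which is affine on each side of $t_0$. This computation is the only technical step, and it is elementary.

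\emph{Uniqueness.} Suppose $X\neq Y$ are both minimizers, and parametrize the segment by $\ell(t)=X+t(Y-X)$ for $t\in[0,1]$. By convexity, $f$ is constant (equal to the minimum) on $[0,1]$. Each summand is convex along the segment, so each must be affine there; otherwise the sum would be strictly below the chord at some interior $t$, contradicting minimality. By the lemma, every $p_i$ with $\alpha_i>0$ lies on $\ell$. Hence all in-cell demand points are collinear, on the same line as the minimizers $X,Y$, and therefore on the same line as any Weiszfeld iterate lying on the segment of minimizers. Contrapositively, if the points (together with the iterate) are not collinear, the minimizer is unique.

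For completeness I would note that in the collinear case $f$ reduces to a one-dimensional weighted absolute-deviation sum. There the minimizers form the weighted-median interval, which can be a nondegenerate segment; this shows the exception is genuine. The main care point is the step ``constant convex sum $\Rightarrow$ each term affine'', which I would handle by the chord inequality above.
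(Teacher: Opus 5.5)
Your proof is correct and follows the same route as the paper's proof: convexity of each $X\mapsto\|p_i-X\|$, plus strict convexity of the Euclidean distance along any line not through $p_i$. It also supplies what the paper's one-line sketch omits, namely existence via coercivity, the explicit computation of $g''$, and the chord argument that forces every summand to be affine on a segment of minimizers. Your reading of ``the iterate'' as a point of the minimizer set is the right one, because for an arbitrary iterate the statement would fail: two equally weighted demand points have a whole segment of minimizers, yet adding an iterate off their line makes the configuration non-collinear. What you actually establish is therefore the cleaner claim that non-collinear demand points alone force uniqueness.
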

\begin{proof}{Proof}
$X\mapsto\|p_i-X\|$ is convex, so the non-negative combination
$\sum_i\alpha_i\|p_i-X\|$ is convex; strict convexity of the Euclidean norm in
$\mathbb{R}^2$ (away from collinear configurations) yields a unique minimizer.
\Halmos
\end{proof}

\smallskip
Heterogeneity thus enters the within-cell problem as the Weber weight
$\alpha_i=w_i/R_i$ (difficulty D5, cf.\ Table~\ref{tab:design}): distance-sensitive
points pull harder automatically (Corollary~\ref{cor:hetero}) and the cell optimum
is unique (Corollary~\ref{cor:uniq}). The Weber reduction is the one substantive
hypothesis in this group: it requires the \emph{linear} decay family, since only
then does $w_i(1-\|p_i-X\|/R_i)$ split into a constant plus a weighted distance.
For exponential or quadratic decay the within-cell problem is still smooth but is
no longer Weber in closed form, and we solve it by gradient ascent
(\S\ref{sec:algo}).

\subsection{Gradient-as-force}
The gradient of~\eqref{eq:obj} with respect to $X_j$, restricted to the points
for which $j$ is the winner, is
\begin{equation}\label{eq:grad}
\nabla_{X_j} Z(\bm{X})\;=\;\sum_{i:\pi(i)=j} w_i\,\nabla_{X_j}\phi_i(\|p_i-X_j\|),
\end{equation}
which is exactly the ``demand pull'' force on facility $j$.

\begin{theorem}[Gradient-as-force]\label{thm:grad}
At any $\bm{X}$ at which each point's winning facility is unique, $\nabla_{X_j}Z$
equals the right-hand side of~\eqref{eq:grad}; the demand-pull force on facility
$j$ is exactly the objective gradient.
\end{theorem}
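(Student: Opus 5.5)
The plan is to reduce the claim to a local statement about the nearest-facility form of $Z$ given by Theorem~\ref{thm:ndr}, $Z(\bm{X})=\sum_i w_i\phi_i(\min_j\|p_i-X_j\|)$, and then differentiate term by term. Fix $\bm{X}$ at which every demand point $i$ has a unique nearest facility $\pi(i)$. Uniqueness means the gap
\[
\delta_i=\min_{j\ne\pi(i)}\|p_i-X_j\|-\|p_i-X_{\pi(i)}\|
\]
is strictly positive for every $i$. Since $I$ is finite, $\delta=\min_i\delta_i>0$.

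\textbf{Step 1 (local freezing of the allocation).} The maps $X_j\mapsto\|p_i-X_j\|$ are $1$-Lipschitz. Hence for every perturbation $\bm{X}'$ with $\max_j\|X_j'-X_j\|<\delta/2$, each distance changes by less than $\delta/2$. The nearest facility of each $i$ therefore remains $\pi(i)$. On this open neighborhood,
\[
Z(\bm{X}')=\sum_i w_i\,\phi_i\bigl(\|p_i-X'_{\pi(i)}\|\bigr),
\]
so $Z$ coincides with a function in which the $\min$ has been replaced by a fixed assignment.

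\textbf{Step 2 (separation by facility).} In the frozen expression, the variable $X_j$ enters only through the terms with $\pi(i)=j$. All other terms are constant in $X_j$. The partial gradient in $X_j$ is therefore the sum over the cell $\{i:\pi(i)=j\}$ of $w_i\nabla_{X_j}\phi_i(\|p_i-X_j\|)$, which is exactly the right-hand side of~\eqref{eq:grad}. Gradients of functions that agree on an open set coincide, so this is $\nabla_{X_j}Z$ at $\bm{X}$. Expanding by the chain rule gives
\[
\nabla_{X_j}\phi_i(\|p_i-X_j\|)=\phi_i'(r_{ij})\,\frac{X_j-p_i}{r_{ij}},\qquad r_{ij}=\|p_i-X_j\|.
\]
Because $\phi_i'\le0$, this vector points from $X_j$ toward $p_i$ with magnitude $w_i|\phi_i'(r_{ij})|$. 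This justifies reading the gradient as a sum of demand-pull forces. For linear decay inside the support, the magnitude is $w_i/R_i$, matching Corollary~\ref{cor:hetero}.

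\textbf{Main obstacle.} The delicate part is not the winner uniqueness, which Step~1 handles cleanly, but the differentiability of each summand at the given point. Two problems can arise.
\begin{itemize}
\item The norm is not differentiable at $X_j=p_i$.
\item The clipped or step families have kinks or jumps at $r=R_i$.
\end{itemize}
I would handle this by interpreting the statement, as~\eqref{eq:grad} implicitly does, at points where each $\phi_i$ is differentiable at $r_{i\pi(i)}$ and $r_{i\pi(i)}>0$ whenever $\phi_i'(r_{i\pi(i)})\neq0$.

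There are then two special cases to dispatch. At a kink one can state the identity for one-sided derivatives or Clarke gradients instead. In the coincident case $X_j=p_i$ with $\phi_i'(0)=0$ (e.g.\ quadratic decay), the composite is still differentiable with zero gradient, since $\phi_i(r)=1+o(r)$.

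With these regularity caveats made explicit, Steps~1--2 give the theorem.
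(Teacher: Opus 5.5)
Your proposal is correct and follows essentially the paper's argument: where winners are unique, each pointwise maximum is locally attained by the single index $\pi(i)$, so only the terms with $\pi(i)=j$ contribute to $\nabla_{X_j}Z$. Your explicit $\delta/2$ gap argument on the distances (via Theorem~\ref{thm:ndr}) is what justifies the paper's appeal to ``the chain rule for a maximum with a unique active index,'' and your regularity caveats at $X_j=p_i$ and at the clip $r=R_i$ make explicit the assumption the paper leaves implicit in ``$C^1$ wherever $\phi_i$ is.''
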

\begin{proof}{Proof}
$Z=\sum_iw_i\max_k\psi_{ik}$ with $\psi_{ik}=\phi_i(\|p_i-X_k\|)$ is a sum of
pointwise maxima of $C^1$ functions. Where the winner $\pi(i)$ is unique,
$\nabla_{X_j}\max_k\psi_{ik}$ equals $\nabla_{X_j}\psi_{i,\pi(i)}$ if
$j=\pi(i)$ and $\mathbf{0}$ otherwise; summing over $i$ yields~\eqref{eq:grad}.
\Halmos
\end{proof}

\smallskip
Theorem~\ref{thm:grad} is the bridge from analysis to algorithm: the ``demand
pull'' on each facility---the physical intuition behind force-based local
search---is exactly $\nabla_{X_j}Z$, so Lloyd's location step is gradient
ascent rather than an ad hoc move. Differentiability of the pointwise maximum
requires a unique winner at each demand point, which fails only on Voronoi cell
boundaries---a set of measure zero---and is therefore the generic case. This
justifies the ``force-based'' name of FBM-LNS (\S\ref{sec:algo}). It relies on the
winner-only assignment of Theorem~\ref{thm:ndr} and on the within-cell Weber
structure of Theorem~\ref{thm:weber}; the next two propositions
(Propositions~\ref{prop:lloyd}--\ref{prop:reloc}) characterize where this gradient
ascent terminates and how the relocate step escapes it. Figure~\ref{fig:force}
visualizes the interpretation.

\begin{figure}[htbp]
\FIGURE
{\begin{minipage}[c]{0.52\linewidth}\centering
\begin{tikzpicture}[scale=1.1]
\def\fx{3.2}\def\fy{2.6}
\filldraw[fill=orange!90!black, draw=black] (\fx-0.12,\fy-0.12) rectangle ++(0.24,0.24);
\node[font=\footnotesize, orange!90!black] at (\fx,\fy-0.38) {$X_j$};
\draw[-{Stealth[length=2.5mm]},red,thick] (\fx,\fy) -- ($(\fx,\fy)!0.46!(1.2,4.3)$);
\filldraw[red] (1.2,4.3) circle (0.13);
\node[font=\scriptsize\bfseries, white] at (1.2,4.3) {A};
\draw[-{Stealth[length=2.5mm]},red,thick] (\fx,\fy) -- ($(\fx,\fy)!0.36!(1.4,0.9)$);
\filldraw[red] (1.4,0.9) circle (0.13);
\node[font=\scriptsize\bfseries, white] at (1.4,0.9) {B};
\draw[-{Stealth[length=2.5mm]},blue!70!black,thick] (\fx,\fy) -- ($(\fx,\fy)!0.23!(5.3,4.1)$);
\filldraw[blue!70!black] (5.3,4.1) circle (0.13);
\node[font=\scriptsize\bfseries, white] at (5.3,4.1) {C};
\draw[-{Stealth[length=2.5mm]},blue!70!black,thick] (\fx,\fy) -- ($(\fx,\fy)!0.18!(5.5,1.3)$);
\filldraw[blue!70!black] (5.5,1.3) circle (0.13);
\node[font=\scriptsize\bfseries, white] at (5.5,1.3) {D};
\draw[-{Stealth[length=3.5mm]},black,very thick, dashed] (\fx,\fy) -- ++(-1.33,0.54);
\node[font=\scriptsize, black] at (1.75,3.3) {$\nabla_{X_j}Z$};
\end{tikzpicture}
\end{minipage}\hspace{-24pt}
\begin{minipage}[c]{0.44\linewidth}\centering\small
\textbf{Gradient computation}\\[4pt]
\begin{tabular}{@{}ccccc@{}}
Pt & $d_i$ & $w_i$ & $R_i$ & $w_i/R_i$ \\[1pt]
A & 2.63 & 4 & 2.0 & \textbf{2.00} \\
B & 2.48 & 3 & 2.0 & \textbf{1.50} \\
C & 2.58 & 5 & 5.0 & \textbf{1.00} \\
D & 2.64 & 4 & 5.0 & \textbf{0.80} \\
\end{tabular}\\[6pt]
$\nabla_{X_j}Z=\sum_i \tfrac{w_i}{R_i}\,\hat{\mathbf{u}}(p_i{-}X_j)$\\[2pt]
$\approx(-1.11,\,0.45)$\\[2pt]
$\|\nabla_{X_j}Z\|\approx 1.20$\\[4pt]
{\footnotesize Red = sensitive (small $R_i$);\\
blue = tolerant (large $R_i$).}
\end{minipage}}
{The demand pull on facility $X_j$ is exactly the gradient $\nabla_{X_j}Z$
(Equation~\ref{eq:grad}), illustrated with a concrete four-point cell. Each demand
point pulls along the line toward itself with magnitude $w_i/R_i$ (red sensitive
points A,~B pull harder than blue tolerant points C,~D); the facility moves along
the resultant (bold arrow), which is the gradient-ascent direction. The right
panel reports each point's distance, weight, decay scale, and pull magnitude, and
sums them to the gradient value $\nabla_{X_j}Z\approx(-1.11,\,0.45)$. This is the
Lloyd location step.\label{fig:force}}
{}
\end{figure}

\subsection{The tractability classification theorem}\label{sec:tractability}
We now state the central structural result: a complete classification of
when the heterogeneous distance-decay location problem is tractable.

\begin{theorem}[Tractability classification]\label{thm:tract}
Consider the cooperative objective $Z^g(\bm{X}) = \sum_i w_i\, g(U_i)$ with
$U_i = \sum_j \phi_i(\|p_i - X_j\|)$, $g$ concave increasing, $w_i \ge 0$.
\begin{description}
\item[\textup{(i)}] \textbf{(Discrete)} $Z^g_D$ and $Z^{\max}_D$ are both
monotone submodular for any non-increasing $\phi_i$. The greedy $(1{-}1/e)$
guarantee holds universally.
\item[\textup{(ii)}] \textbf{(Continuous, cooperative)} $Z^g$ is concave in
$\bm{X}$ whenever every $\phi_i$ is concave on its realized range; this condition
is tight, since non-concavity of any $\phi_i$ already yields instances (already
at $p{=}1$, $n{=}1$) on which $Z^g$ is non-concave. Under concavity, every local
maximum is global; projected gradient ascent converges.
\item[\textup{(iii)}] \textbf{(Continuous, non-cooperative)} $Z^{\max}$ is
generically non-concave for $p \ge 2$, regardless of $\phi$.
\end{description}
\end{theorem}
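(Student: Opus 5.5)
I plan to prove the three parts separately, reusing the earlier results where they apply.

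\emph{Part (i).} For $Z^{\max}_D$ there is nothing new to do: it is exactly the set function of Theorem~\ref{thm:sub}. For $Z^g_D(S)=\sum_i w_i\,g\bigl(\sum_{c\in S}\phi_i(\|p_i-c\|)\bigr)$, I fix $i$ and write $a_c=\phi_i(\|p_i-c\|)\ge 0$ and $u(S)=\sum_{c\in S}a_c$. Monotonicity follows because $u$ is non-decreasing in $S$ and $g$ is increasing. For submodularity, take $S\subseteq T$ and $c\notin T$. The marginal gain is $g(u(S)+a_c)-g(u(S))$. Since $g$ is concave, its increments over an interval of fixed length $a_c$ are non-increasing in the left endpoint. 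Because $u(S)\le u(T)$, the gain at $S$ is at least the gain at $T$. Summing over $i$ with $w_i\ge0$ preserves this. Nemhauser--Wolsey--Fisher then gives the $(1-1/e)$ guarantee. Non-increasing $\phi_i$ is not even needed here; only $\phi_i\ge0$ is used.

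\emph{Part (ii), sufficiency.} I will use the composition rule. First, $X\mapsto\|p_i-X_j\|$ is convex. A function that is concave and non-increasing in $r$, composed with a convex function of $X_j$, is concave. So each $\phi_i(\|p_i-X_j\|)$ is concave in $\bm{X}$, and so is $U_i$ as a sum. Then $g$ concave increasing composed with concave $U_i$ is concave. A nonnegative sum of concave functions is concave, so $Z^g$ is concave.

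The phrase ``realized range'' needs care. I read it as concavity of $\phi_i$ on an interval containing all distances arising from points in the feasible set $\operatorname{conv}\{p_i\}$ (Theorem~\ref{thm:hull}). On that convex set, every segment between feasible points maps into this range, so the composition argument still goes through. Under concavity, the usual argument gives that a local maximum of a concave function on a convex set is global. Convergence of projected gradient ascent with diminishing steps or supergradients then follows from standard convex-optimization results on the compact hull. I expect to cite these rather than reprove them.

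\emph{Part (ii), tightness.} This is the main obstacle, because the statement claims tightness for \emph{any} non-concave $\phi_i$, and $g$ is fixed by the model. I take $p=n=1$ with $p_1=0$ and $w_1=1$, so $Z^g(X)=g(\phi(\|X\|))$. If $\phi$ fails concavity, there are $r_0,r_1$ in the range with $\phi(\tfrac{r_0+r_1}{2})<\tfrac12(\phi(r_0)+\phi(r_1))$. Restricting to the ray $X=r e$ with $r\ge0$, $Z^g$ agrees with $g\circ\phi$ along a segment.

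The difficulty is that a strictly concave $g$ could in principle restore concavity. So I will state the tightness claim with the freedom to choose $g$ in the instance, taking $g(u)=u$, which is admissible. With that choice, $Z^g(re)=\phi(r)$, which is non-concave on the segment, and this settles the claim. If $g$ must be arbitrary, I would instead scale $w$ and $\phi$ into a region where $g$ is nearly affine, for example near a point of differentiability with $g'>0$. I regard this as the delicate step.

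\emph{Part (iii).} I take $p=2$ and a single demand point $p_1$ with $\phi$ strictly decreasing near $0$. Consider $\bm{X}^a=(p_1,q)$ and $\bm{X}^b=(q,p_1)$ with $q$ far away. Both give $Z^{\max}=w_1\phi(0)$, the maximum. Their midpoint puts both facilities at $(p_1+q)/2$, which yields the strictly smaller value $w_1\phi(\|q-p_1\|/2)$. This violates concavity for every such $\phi$. This label-swap symmetry argument extends to any instance in which some point's decay is non-constant on the relevant distances, which is the sense of ``generically.'' For $n$ points I embed the same swap on a point whose value is strictly reduced while others' contributions are held bounded by choosing $q$ appropriately.
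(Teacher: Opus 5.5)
Your proposal is correct and follows the paper's proof essentially step for step. Part (i) uses the diminishing-increment property of concave $g$ on the modular sum $\sum_{c\in S}\phi_i(\|p_i-c\|)$; part (ii) uses the concave--non-increasing composition rule for sufficiency and the instance $p=n=1$, $g=\mathrm{id}$ restricted to a ray for tightness; part (iii) swaps two facilities at a single demand point and evaluates the midpoint.

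Two small remarks. First, with $n=1$ the hull is the single point $p_1$, so under your hull-based reading of the realized range the tightness ray has to lie in a larger domain; the paper implicitly uses the whole plane. Second, your fallback for an arbitrary fixed $g$ cannot work in general: on the disk of radius $R$ about $p_1$, the choice $g(u)=\sqrt{u}$ with the strictly convex $\phi(r)=(1-r/R)^2$ gives $Z^g(X)=1-\|X-p_1\|/R$, which is concave, so choosing $g=\mathrm{id}$ in the instance, as you and the paper do, is essential.
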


\begin{proof}{Proof sketch.}
(i) The non-cooperative case is Theorem~\ref{thm:sub}. For the cooperative case,
concavity of $g$ implies the diminishing-increment property that
$g(\sigma+a)-g(\sigma)$ is non-increasing in $\sigma$ (a standard fact we state
and prove as Lemma~EC.1 in the Electronic Companion); since
$\sigma_i(S)=\sum_{c\in S}\phi_i(\|p_i-c\|)$ is modular and non-decreasing in
$S$, the per-point cooperative marginal is non-increasing in $S$, and summing
gives submodularity.
(ii, $\Leftarrow$) The map $X\mapsto\|p_i-X\|$ is convex; by the concave
composition rule (Lemma~EC.2 in the Electronic Companion), a concave
non-increasing $\phi_i$ composed with this convex map is concave; $U_i=\sum_j
\phi_i(\|p_i-X_j\|)$ is then concave in $\bm{X}$, as is $g\circ U_i$ for concave
non-decreasing $g$, and the non-negative weighted sum $Z^g$ is concave.
(ii, $\Rightarrow$) Contrapositive: if $\phi_{i_0}$ is not concave, take $p=1$,
$n=1$, $g=\mathrm{id}$; $Z(X)=\phi_{i_0}(\|p_{i_0}-X\|)$, which along a ray
$X=p_{i_0}+t\,u$ equals $\phi_{i_0}(t)$ and is not concave.
(iii) One demand point, two facilities: $Z=\max(\phi(\|p-X_1\|),
\phi(\|p-X_2\|))$ has a kink where the winner switches, and is non-concave for
any non-constant $\phi$ (full proof in the Electronic Companion). \Halmos
\end{proof}

In concrete terms, $\phi_i(r)=[1-r/R_i]^+$ is \emph{convex} in $r$
(it is $\max(0,\cdot)$ of an affine function, so its slope jumps upward at the
cutoff), so the cooperative objective built from it has local-optima traps.
Only unclipped decays ($1-r/R$, $1-(r/R)^2$) or genuinely concave decays are
trap-free. Table~\ref{tab:tract} summarizes.

\begin{table}[h]\centering\small
\caption{Tractability classification by problem variant (all V1--V6).
``Concave'' = exact by gradient ascent; ``submodular'' = greedy $(1{-}1/e)$ on
the discretized problem. V5 (shape demand) reduces to the point case by sampling,
so its cooperative concavity inherits the chosen decay family (``inherits''); V6
is the binary scope boundary of \S\ref{sec:disc-boundary}.}
\label{tab:tract}
\begin{tabular}{c|l|cc|cc}
\hline\up
& & \multicolumn{2}{c|}{Cooperative ($Z^g$)} & \multicolumn{2}{c}{Non-coop.\ ($Z^{\max}$)} \\
Variant & Decay $\phi(r)$ & concave? & submodular? & concave? & submodular? \\\hline
V1 & $[1{-}r/R]^+$ (clipped linear) & \textbf{no} & yes & no & yes \\
V2 & $e^{-r/R}$ (exponential) & no & yes & no & yes \\
V3 & $1{-}(r/R)^2$ ($k$-means) & \textbf{yes} & yes & no & yes \\
V4 & $1{-}r/R$ (Weber limit) & \textbf{yes} & yes & no & yes \\
V5 & regions, sampled (any $\phi$) & inherits & yes & no & yes \\
V6 & $\mathbf{1}[r{\le}R]$ (step) & degenerate & yes & no & yes \down\\\hline
\end{tabular}
\end{table}

The dichotomy is not merely formal. On a cooperative Gaussian-mixture instance
($n{=}200$, $p{=}5$) with the identity saturation $g(u)=u$ (the pure cooperative
sum, with no saturation cap) we solved the concave-regime cooperative objective
exactly by a second-order cone program (SOCP) and ran multi-start gradient ascent from $8$ random seeds: with a large
scale $R{=}5000$ (clip inactive, so $\phi$ is concave) every start converges to
the same value (across-start standard deviation $0.01$), matching the SOCP global
optimum to within $0.02$; with the small scale $R{=}120$ (clip active, so $\phi$
is convex) the starts end at different local optima (standard deviation $137$).
The transition from ``no traps'' to ``traps'' as the clip switches on is exactly
what Theorem~\ref{thm:tract}(ii) predicts.

\subsection{Lloyd fixed points and the marginal relocate}
Lloyd's algorithm (coordinate ascent on the allocation--location decomposition)
converges to a \emph{coordinate-wise} fixed point at which no single facility can
be moved to improve the objective given the others' locations and the Voronoi
assignment.

\begin{proposition}[Lloyd fixed point]\label{prop:lloyd}
At a fixed point $\bm{X}^\star$ of the alternating allocation--location map,
(i)~the Voronoi assignment is stable (no point is reassigned by a strictly nearer
facility) and (ii)~each facility lies at the within-cell Weber optimum of its
cell; equivalently $0\in\partial_C Z$ with respect to each single-facility
coordinate. Hence $\bm{X}^\star$ is coordinate-wise locally optimal.
\end{proposition}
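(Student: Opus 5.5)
Unpack the definition of a fixed point of the alternating map $T=\mathrm{Loc}\circ\mathrm{Alloc}$ and read off (i) and (ii) from the two halves of the map. Then translate (ii) into a Clarke-stationarity statement using Theorem~\ref{thm:grad}. $\mathrm{Alloc}(\bm{X})$ returns the Voronoi partition $\pi(i)=\arg\min_j\|p_i-X_j\|$, which by Theorem~\ref{thm:ndr} is also the value-maximizing allocation. Ties are broken by a fixed rule that keeps the incumbent facility. $\mathrm{Loc}$ replaces each $X_j$ by a maximizer of the within-cell objective $\sum_{i:\pi(i)=j}w_i\phi_i(\|p_i-X\|)$. Under linear decay this is the weighted Weber problem of Theorem~\ref{thm:weber}. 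By Corollary~\ref{cor:uniq} its minimizer is unique outside the collinear case, so $\mathrm{Loc}$ is well defined.

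Part (i) follows from the tie-breaking convention. If $T(\bm{X}^\star)=\bm{X}^\star$, applying $\mathrm{Alloc}$ again to the unchanged locations reproduces the same partition. Hence no point has a strictly nearer facility than its assigned one, since otherwise $\mathrm{Alloc}$ would have assigned it there.

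Part (ii) also comes from the fixed-point equation. $\mathrm{Loc}$ leaves $X_j^\star$ unchanged, so $X_j^\star$ is the within-cell optimum of its cell. By Corollary~\ref{cor:hetero} the first-order condition is $0\in\sum_{i\in S_j}\alpha_i\,\partial\|X-p_i\|$. Outside a point this is the smooth equation $\sum_i\alpha_i(X-p_i)/\|X-p_i\|=0$. When $X_j^\star$ coincides with a demand point, the subdifferential of that term is a disk.

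To obtain the claim $0\in\partial_C Z$ with respect to $X_j$, separate two cases. When every point has a unique winner, Theorem~\ref{thm:grad} identifies $\nabla_{X_j}Z$ with the within-cell gradient, which vanishes. When winners tie on a Voronoi boundary, $Z$ is locally a sum of pointwise maxima of the $\psi_{ik}$. The Clarke subdifferential of a max of $C^1$ functions is the convex hull of the active gradients. The active term belonging to $j$'s cell lies in that hull, so the stationary within-cell choice gives a selection summing to zero. Points where $\phi_i$ is clipped contribute $0$ to the active gradient, and $0$ lies in their generalized gradient, so they cause no trouble.

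Coordinate-wise local optimality then follows. Suppose we move only $X_j$ to $X$ while holding the others fixed. Each point $i$ obtains at most $\max\{\phi_i(\|p_i-X\|),\ \text{best among others}\}$. The points $j$ could gain are bounded using its own cell's objective, which $X_j^\star$ maximizes. Points in other cells keep their value. The step I expect to be the main obstacle is that moving $X_j$ can capture points from neighboring cells. So I would state local optimality only in the sense that the fixed allocation cannot be improved, or, for small moves, that reassignment near boundaries gives only a second-order change. Equivalently, I would treat "coordinate-wise locally optimal" as stationarity with respect to each block, which is exactly what $0\in\partial_C Z$ expresses. I would make that reading explicit rather than claim a strict local maximum.
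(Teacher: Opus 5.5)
Your derivation of (i) and (ii) directly from the allocation and location halves of the map matches the paper's proof. So does your choice to read ``coordinate-wise locally optimal'' with the assignment frozen; the paper simply asserts the equivalence with $0\in\partial_{C,X_j}Z$ without argument. The genuine gap is your justification of that equivalence in the tie case. Write $Z=\sum_i f_i$, with $f_i$ the contribution of point $i$ as a function of $X_j$. A zero-sum selection from the $\partial_C f_i$ only shows $0\in\sum_i\partial_C f_i$. Clarke's sum rule gives merely $\partial_C\bigl(\sum_i f_i\bigr)\subseteq\sum_i\partial_C f_i$, with equality only under regularity. The tie terms $\max(\cdot,\text{const})$ are regular. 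But when the cell's Weber optimum sits at a demand point $p_0$, which is common for Weber problems, the in-cell term $-\alpha_0\|X-p_0\|$ is concave and not regular there.

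The conclusion then actually fails. Take linear decay with all $R_i=10$ (nothing is clipped) and $w_i/R_i=1,\tfrac12,2,5$ at $p_0=(0,0)$, $r=(0.3,0)$, $q=(1,0)$, $s=(2,0)$, with $X_j=(0,0)$ and $X_k=(2,0)$. The point $q$ is equidistant from both facilities and is kept by $k$ under your incumbent rule. Each facility is the unique Weber optimum of its cell, so this is a fixed point. Yet moving $X_j$ to $(\varepsilon,0)$ raises $Z$ by $\tfrac32\varepsilon$. Every limiting gradient of $X_j\mapsto Z$ at the origin has the form $(\tfrac12,0)-u$ with $u_1\le 0$ or $(\tfrac52,0)-u$ with $u_1\ge 0$ ($\|u\|=1$). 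Each has first coordinate at least $\tfrac12$, so $0\notin\partial_{C,X_j}Z$. The same example refutes your remark that reassignment near boundaries is second order: capturing a tied point is a first-order gain.

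What survives is the frozen-assignment statement the paper's proof actually establishes, or a version under a genericity hypothesis. If no demand point lies on a Voronoi boundary, the assignment is locally constant in $X_j$. Then $Z$ equals the cell objective plus a constant near $X_j^\star$, so $X_j^\star$ is a genuine blockwise local maximizer, and $0\in\partial_{C,X_j}Z$ follows from Clarke's Fermat rule. If instead $X_j^\star$ is not a demand point, every $f_i$ is strictly differentiable or a maximum of $C^1$ functions near $X_j^\star$. The sum rule then holds with equality, and your selection argument goes through. Only the combination of a tie with an optimum at a demand point can break the claim, and that case has to be excluded by hypothesis rather than argued through.
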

\begin{proof}{Proof}
At a Lloyd fixed point the location step makes no update, so each
$X_j^\star\in\arg\max_X\sum_{i:\pi(i)=j}w_i\phi_i(\|p_i-X\|)$, giving
$0\in\partial_{C,X_j}Z$; the allocation step makes no update, so the Voronoi
assignment is stable. Together these are coordinate-wise optimality. \Halmos
\end{proof}

\smallskip
This coordinate-wise fixed-point limit is why pure multi-start Lloyd can
be outperformed by the $(1-1/e)$ greedy---it converges to a coordinate-wise
optimum that need not be globally good---and it motivates the destroy-and-repair
relocate that follows. The gain from adding a facility at $x$ given current
captured demand $c_i$ is $g(x)=\sum_i w_i[\phi_i(\|p_i-x\|)-c_i/w_i]^+$, which on
its residual active set is again a weighted Weber objective. Removing the
least-contributing facility and re-placing it at a maximizer of $g$ is a
destroy(1)--repair(1) large-neighborhood move; accepting only improving moves
yields a monotone ascent on $Z$.

\begin{proposition}[Monotone relocate]\label{prop:reloc}
If each relocate step is accepted only when it strictly increases $Z$, the
objective sequence is strictly increasing. After discretization to a finite
candidate set the process terminates in finitely many steps at a configuration
that is Lloyd-stable (Proposition~\ref{prop:lloyd}) and admits no improving
single-facility marginal re-placement.
\end{proposition}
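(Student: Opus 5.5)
The argument has two parts: first strict monotonicity of the objective sequence, then finite termination at a configuration with the two stated stability properties.

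\emph{Monotonicity.} Write the algorithm as a sequence $\bm{X}^{(0)},\bm{X}^{(1)},\ldots$, where each transition is either a Lloyd sweep or an accepted relocate. A relocate removes the least-contributing facility and re-places it at a maximizer of the marginal gain $g$. By the acceptance rule, an accepted relocate satisfies $Z(\bm{X}^{(t+1)})>Z(\bm{X}^{(t)})$.

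For a Lloyd sweep I would invoke Theorem~\ref{thm:ndr}. Reallocation to the nearest facility never decreases any $\Psi_i$. Relocating each facility to a maximizer of its within-cell objective then does not decrease that cell's captured value for the fixed assignment. Since the true objective takes the maximum over facilities, it is at least the cell-wise value, so $Z$ is non-decreasing. To make the sequence strictly increasing, I count a Lloyd sweep as a step only when it changes $Z$; sweeps that leave $Z$ unchanged are treated as having reached a Lloyd fixed point. With this convention the recorded objective values are strictly increasing.

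\emph{Finite termination.} After discretization every facility lies in the finite candidate set $C$. Hence the configuration ranges over at most $\binom{|C|}{p}$ unordered sets, or $|C|^p$ tuples. Because $Z$ strictly increases between recorded configurations, no configuration can repeat. The process therefore stops after at most $|C|^p$ steps. This is the standard finite-state plus strict-potential argument.

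One subtlety needs care: in the discretized setting the location step must itself map into $C$. I would therefore interpret the Lloyd location step as maximizing the cell objective over $C$. Proposition~\ref{prop:lloyd} then holds with $\arg\max$ taken over $C$, and its proof carries over verbatim.

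\emph{Characterizing the stopping point.} The process halts only when neither move applies.
\begin{itemize}
\item No Lloyd step changes $Z$. By the convention above, the allocation--location map then makes no strict improvement, which is exactly the fixed-point condition of Proposition~\ref{prop:lloyd}.
\item No relocate is accepted. Since acceptance is equivalent to strict improvement, every single-facility marginal re-placement fails to increase $Z$. This is the second claim.
\end{itemize}

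The main obstacle is reconciling these two conditions. A relocate could break Lloyd stability, and Lloyd steps that do not change $Z$ could, in principle, cycle among ties. I would resolve this by running the algorithm as alternating phases, namely Lloyd to a fixed point, then an attempted relocate, and halting only when a full Lloyd phase followed by a relocate attempt yields no strict gain. Tie-breaking would use a fixed lexicographic order on $C$, so that equal-$Z$ Lloyd updates either move nothing or are explicitly disallowed. This removes any possibility of cycling among tied configurations, and the final configuration then satisfies both properties simultaneously.
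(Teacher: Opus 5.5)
Your proposal is correct and follows essentially the same argument as the paper. Strict-improvement acceptance makes the objective strictly increasing, finiteness of the discretized configuration space $C^p$ forces termination, and the terminal configuration is Lloyd-stable and relocate-stable by the stopping and acceptance rules. Your extra care fills in details the paper's short proof leaves implicit: you require the Lloyd location step to map into $C$, so that the state space really is $C^p$, and you handle equal-$Z$ Lloyd sweeps and ties, so that the two stopping conditions hold simultaneously.
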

\begin{proof}{Proof}
Strict-improvement acceptance makes $Z$ monotone strictly increasing; on a finite
candidate set the configuration space is finite, so the sequence terminates. At
termination, Lloyd stability holds by the location step, and no single-facility
re-placement improves by the acceptance rule. \Halmos
\end{proof}

\smallskip
Proposition~\ref{prop:reloc} closes the structural argument: the destroy-and-repair
relocate, guided by the submodular marginal of Theorem~\ref{thm:sub}, escapes the
Lloyd fixed point of Proposition~\ref{prop:lloyd} and terminates at a
configuration admitting no further single-facility improvement. Its
destroy-and-repair \emph{structure}, not the heuristic that picks which facility
to move, is what carries the gain over multi-start Lloyd---a fact the ablation in
\S\ref{sec:experiments} confirms. Together with the gradient-as-force location
step (Theorem~\ref{thm:grad}), the relocate constitutes the second pillar of
FBM-LNS (\S\ref{sec:algo}).

\begin{remark}\label{rem:monotone}
Proposition~\ref{prop:reloc} analyzes the strict-improvement relocate (accept a
move only if it strictly increases $Z$), under which the objective is monotone
increasing and the iterate terminates finitely at a Lloyd-stable,
marginal-stable configuration. The implementation of \S\ref{sec:algo}
additionally accepts a non-improving relocate with small probability to escape
deep fixed points; this non-monotone augmentation is the ``mixture'' variant
isolated by the ablation in Result~4, where it matches the strict variant to
within noise. The monotone guarantee therefore applies verbatim to the default
strict-improvement path.
\end{remark}

\section{The FBM-LNS algorithm}\label{sec:algo}

Our algorithm combines the two pillars established by the structural theory, and
we call it
\textbf{FBM-LNS} (\emph{Force-Based Metaheuristic with Large-Neighborhood
Search}). The name reflects the two pillars established by the structural theory:
the \emph{force-based} local search, in which each facility's update direction is
the objective gradient $\nabla_{X_j}Z$---the demand ``pull''
(Theorem~\ref{thm:grad}), making Lloyd's location step provably gradient ascent
rather than an ad hoc move; and the \emph{large-neighborhood search} relocate, a
destroy-and-repair move guided by the submodular marginal gain
(Theorem~\ref{thm:sub}) that escapes Lloyd fixed points
(Proposition~\ref{prop:lloyd}). Together they yield a multi-start framework whose
every component has a formal justification.

\begin{algorithm}[t]
\caption{FBM-LNS: Force-Based Metaheuristic with LNS relocate}\label{alg:fbm}
\begin{algorithmic}[1]
\Require instance $(p_i,w_i,\phi_i)_{i\in I}$; facility count $p$; restarts $R$; LNS iterations $L$
\State $\bm{X}^\star\gets\emptyset$;\; $Z^\star\gets -\infty$
\For{$r=1,\dots,R$}
  \State $\bm{X}\gets$ weighted $k$-means${}^{++}$ seed
  \State $\bm{X}\gets \textsc{Lloyd}(\bm{X})$ \Comment{location step~\eqref{eq:grad} to convergence}
  \State $(\bm{X}_{\mathrm{loc}},Z_{\mathrm{loc}})\gets(\bm{X},Z(\bm{X}))$
  \For{$\ell=1,\dots,L$}
    \State $j\gets$ facility with smallest contribution (or uniform random w.p.\ $\rho$)
    \State $c\gets$ captured demand of $\bm{X}\setminus\{X_j\}$
    \State $x_{\mathrm{new}}\gets\arg\max_x\, g(x;c)$ \Comment{greedy repair; Weber solve}
    \State $\bm{X}'\gets\textsc{Lloyd}(\bm{X}\setminus\{X_j\}\cup\{x_{\mathrm{new}}\})$
    \If{$Z(\bm{X}')>Z_{\mathrm{loc}}$ or coin flip $\rho$}
        \State $\bm{X}\gets\bm{X}'$
    \EndIf
    \State update $(\bm{X}_{\mathrm{loc}},Z_{\mathrm{loc}})$ to best-so-far
  \EndFor
  \State update $(\bm{X}^\star,Z^\star)$ with $(\bm{X}_{\mathrm{loc}},Z_{\mathrm{loc}})$
\EndFor
\State \Return $\bm{X}^\star$
\end{algorithmic}
\end{algorithm}

Figure~\ref{fig:relocate} traces one iteration of FBM-LNS through the three
stages of seeding, Lloyd polishing, and relocate.

\begin{figure}[htbp]
\FIGURE
{\begin{minipage}[t]{0.31\linewidth}\centering
\begin{tikzpicture}[scale=0.62]
\draw[thick] (0,0) rectangle (8,6);
\foreach \x/\y in {1/1,1.5/4,2.5/2,3/3.5,4/1.5,4.5/4.5,2/5,5/3,6/5,6.5/2,5.5/4,7/3.5,3/5.5,1.5/2.5}
  \filldraw[blue!50!black] (\x,\y) circle (4pt);
\filldraw[fill=orange!90!black,draw=black] (2.5,1.5) rectangle ++(0.25,0.25);
\filldraw[fill=orange!90!black,draw=black] (5.5,3) rectangle ++(0.25,0.25);
\filldraw[fill=orange!90!black,draw=black] (6,5) rectangle ++(0.25,0.25);
\end{tikzpicture}\\[2pt]
{\scriptsize (a) Initial: $k$-means${}^{++}$ seed}
\end{minipage}\hfill
\begin{minipage}[t]{0.31\linewidth}\centering
\begin{tikzpicture}[scale=0.62]
\draw[thick] (0,0) rectangle (8,6);
\foreach \x/\y in {1/1,1.5/4,2.5/2,3/3.5,4/1.5,4.5/4.5,2/5,5/3,6/5,6.5/2,5.5/4,7/3.5,3/5.5,1.5/2.5}
  \filldraw[blue!50!black] (\x,\y) circle (4pt);
\draw[gray!70, dashed, thick] (2.5,1.5) rectangle ++(0.25,0.25);
\draw[gray!70, dashed, thick] (5.5,3) rectangle ++(0.25,0.25);
\draw[gray!70, dashed, thick] (6,5) rectangle ++(0.25,0.25);
\filldraw[fill=orange!90!black,draw=black] (2.25,2.5) rectangle ++(0.25,0.25);
\filldraw[fill=orange!90!black,draw=black] (5.5,3.75) rectangle ++(0.25,0.25);
\filldraw[fill=orange!90!black,draw=black] (5.5,4.75) rectangle ++(0.25,0.25);
\draw[-{Stealth[length=2mm]},orange!80!black,thick,dashed] (2.6,1.7) -- (2.35,2.45);
\draw[-{Stealth[length=2mm]},orange!80!black,thick,dashed] (5.6,3.2) -- (5.6,3.6);
\draw[-{Stealth[length=2mm]},orange!80!black,thick,dashed] (6.1,5.0) -- (5.6,4.8);
\end{tikzpicture}\\[2pt]
{\scriptsize (b) After Lloyd: gradient-ascent to cell optima}
\end{minipage}\hfill
\begin{minipage}[t]{0.31\linewidth}\centering
\begin{tikzpicture}[scale=0.62]
\draw[thick] (0,0) rectangle (8,6);
\foreach \x/\y in {1/1,1.5/4,2.5/2,3/3.5,4/1.5,4.5/4.5,2/5,5/3,6/5,6.5/2,5.5/4,7/3.5,3/5.5,1.5/2.5}
  \filldraw[blue!50!black] (\x,\y) circle (4pt);
\draw[red!60,dashed,thick] (5.5,4.75) rectangle ++(0.25,0.25);
\draw[red,thick] (5.6,4.9) circle (0.5);
\node[red,font=\scriptsize] at (5.6,5.6) {remove};
\filldraw[fill=orange!90!black,draw=black] (2.25,2.5) rectangle ++(0.25,0.25);
\filldraw[fill=orange!90!black,draw=black] (5.5,3.75) rectangle ++(0.25,0.25);
\draw[dashed,red!50] (1.5,4.8) circle (1.0);
\foreach \x/\y in {1.2/4.4, 1.8/5.3, 0.9/5.0, 2.2/4.6}
  \filldraw[red!60!black] (\x,\y) circle (3.5pt);
\node[red!70!black,font=\scriptsize] at (1.45,5.9) {under-served};
\node[red!70!black,font=\scriptsize] at (1.5,5.65) {cluster};
\draw[-{Stealth[length=3mm]},red!70!black,very thick]
  (5.6,4.9) .. controls (3,5.6) and (2.2,5.2) .. (1.55,4.85);
\filldraw[fill=orange!90!black,draw=black] (1.4,4.75) rectangle ++(0.25,0.25);
\end{tikzpicture}\\[2pt]
{\scriptsize (c) After relocate: destroy-and-repair}
\end{minipage}}
{One iteration of FBM-LNS in three stages. (a)~Initial $k$-means${}^{++}$ seed.
(b)~Lloyd's gradient-ascent location step moves each facility to its within-cell
Weber optimum; the ghost (dashed gray) squares mark the (a)-time positions, and
the dashed orange arrows are the gradient-based moves. (c)~The relocate step
identifies the least-contributing facility (red circle), removes it, and
re-places it at the maximum-marginal point of the under-served cluster (red
points, upper-left), whose dashed disk is centered at the re-placement location;
the cycle then repeats from (b). Throughout, solid squares are current facilities
and dashed squares are past/removed positions.\label{fig:relocate}}
{}
\end{figure}

\subsection{How FBM-LNS addresses the structural difficulties}
Table~\ref{tab:design} maps each source of difficulty identified in
\S\ref{sec:difficulty} to the structural property (\S\ref{sec:theory}) that the
algorithm exploits to overcome it.

\begin{table}[htbp]
\centering
\TABLE
{Each difficulty of \S\ref{sec:difficulty} and how FBM-LNS addresses it.
\label{tab:design}}
{\begin{tabular}{@{}p{0.28\linewidth}@{\quad}p{0.6\linewidth}@{\quad}p{0.07\linewidth}@{}}
\hline\up
Difficulty & How FBM-LNS addresses it & Theory \\ \hline\up
D1. NP-hardness &
Heuristic search (no exactness claim); quality bounded below by the $(1-1/e)$
greedy (Theorem~\ref{thm:sub}, a valid initial solution). & Thm.~\ref{thm:sub} \\[2pt]
D2. Non-convex objective &
Within each Voronoi cell the objective reduces to a weighted Weber problem
(Theorem~\ref{thm:weber}), solvable to its unique local optimum; the LNS
relocate escapes the resulting coordinate-wise fixed point
(Proposition~\ref{prop:reloc}). & Thm.~\ref{thm:weber}, Prop.~\ref{prop:reloc} \\[2pt]
D3. Continuous--combinatorial coupling &
The Lloyd alternation decouples the two: fix the Voronoi assignment
($\to$ continuous subproblem) or fix locations ($\to$ nearest-facility
reassignment). The relocate perturbs the assignment by re-placing one facility. & Prop.~\ref{prop:lloyd} \\[2pt]
D4. Infinite candidate set &
Gradient-based location steps move continuously in $\mathbb{R}^2$; Theorem
\ref{thm:hull} restricts the search to $\operatorname{conv}\{p_i\}$; no
discretization is needed. & Thm.~\ref{thm:hull} \\[2pt]
D5. Per-point heterogeneity &
Enters the Weber weights as $w_i/R_i$ (Corollary~\ref{cor:hetero}):
distance-sensitive points pull harder, automatically and without ad hoc tuning.
& Cor.~\ref{cor:hetero}
\down\\ \hline
\end{tabular}}{}
\end{table}

\subsection{Specialization across decay families}\label{sec:specialize}
The same pseudocode runs on all four decay families---linear, exponential,
quadratic, and step---by switching the per-point function $\phi_i$. These four
shapes underlie the five problem variants V1--V5 of \S\ref{sec:variants}: V1 and
V4 both use the linear shape, at a small clipped scale (gradual coverage) and a
large unclipped scale (the Weber limit) respectively. The location step
specializes per family as follows:
\begin{itemize}
\item \emph{Quadratic} ($k$-means): the Weber optimum has the closed-form
weighted centroid $X_j^\star = \sum_{i:\pi(i)=j} w_i p_i / \sum w_i$, so the
location step is the standard exact Lloyd update (one step, no iteration).
\item \emph{Linear} (gradual coverage; $p$-median with large $R$): the Weber
problem is solved by the Weiszfeld iteration, which converges linearly
\citep{Weiszfeld1937}.
\item \emph{Exponential}: the gradient is always active (no truncation), so
standard normalized gradient ascent applies.
\item \emph{Step} (binary): the objective is piecewise constant in facility
positions, so the gradient is zero a.e.; only the greedy/relocate steps provide
signal. We include this case for completeness but recommend the $(1-1/e)$ greedy
for binary coverage (\S\ref{sec:experiments}).
\end{itemize}
Per iteration the cost is dominated by the Lloyd step $O(np)$ and the marginal
repair $O(n\cdot|C|)$ for a candidate set $C$; both are polynomial. The two
primary baselines are the $(1-1/e)$ continuous greedy (\textsc{Greedy};
Theorem~\ref{thm:sub}) and multi-start Cooper alternating location--allocation
(\textsc{Cooper}; \citealp{Cooper1964}); we also report particle swarm
optimization (\textsc{PSO}; \citealp{KennedyEberhart1995}) and decay-agnostic
weighted $k$-means.

\section{Exact discrete method and near-optimality}\label{sec:exact}

The continuous PMC-HDD of \S\ref{sec:model} optimizes $Z(\bm{X})$ over
$\bm{X}\in(\mathbb{R}^2)^p$ and admits no exact method in general: it is NP-hard
(D1) and has no known finite dominating set (D4). Restricting facility locations
to a finite candidate set $C$ (a grid, or the demand points themselves) turns the
objective into the discrete set function $Z_D(S)$ of Theorem~\ref{thm:sub}---a
monotone submodular maximum-coverage problem that is solvable to
optimality. This discretization serves two purposes below: an exact solver for
moderate instances, and---because refining $C$ converges to the continuous
optimum---an optimality certificate for the FBM-LNS heuristic.

\subsection{The LP-tight MIP formulation}
With candidates restricted to a finite set $C$ ($|C|=m$), PMC-HDD reduces to the
$y$-assignment max-cover MIP:
\begin{equation}\label{eq:mip}
\begin{aligned}
\max\quad & \textstyle\sum_{i,c} w_i \phi_{ic}\, y_{ic} \\
\text{s.t.}\quad & y_{ic} \le x_c && \forall(i,c),\\
& \textstyle\sum_c y_{ic} \le 1 && \forall i,\\
& \textstyle\sum_c x_c \le p, \quad x_c,\, y_{ic} \in \{0,1\}.
\end{aligned}
\end{equation}

The LP relaxation of~\eqref{eq:mip} has worst-case integrality gap $1-1/e$ (the
standard maximum-coverage bound). On geographic instances the gap is empirically
negligible, a tightness we confirm computationally in
Section~\ref{sec:experiments} (Table~\ref{tab:exact}); its structural source is a
partition property, not just low submodular curvature ($\kappa$):

\begin{proposition}[LP tightness from disjoint coverage]\label{prop:lp_partition}
If an optimal solution of~\eqref{eq:mip} has pairwise-disjoint coverage (each
demand point is substantially captured by at most one chosen facility, so the
facilities induce a partition of the covered points), then the LP relaxation
of~\eqref{eq:mip} admits an integral optimum and the integrality gap is $0$.
\end{proposition}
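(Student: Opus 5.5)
The plan is to prove the statement by \emph{LP duality}: I will build a dual-feasible solution of the relaxation of~\eqref{eq:mip} whose value equals the integral value $Z_D(S^\ast)$ of the optimal solution $S^\ast$ assumed in the statement. Weak duality then gives $\mathrm{LP}\le Z_D(S^\ast)$. The integral point $(x^{S^\ast},y^{S^\ast})$ is LP-feasible, so it is an LP optimum and the gap is $0$. I would not argue through a direct decomposition of fractional LP solutions, because the disjointness hypothesis concerns only $S^\ast$, not arbitrary candidates. A dual certificate is exactly the tool that converts a property of one integral solution into a bound on the whole LP.

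\textbf{Step 1 (dual).} Assign multipliers $v_{ic}\ge0$ to $y_{ic}\le x_c$, $u_i\ge0$ to $\sum_c y_{ic}\le1$, and $\lambda\ge0$ to $\sum_c x_c\le p$. The dual is
\[
\min\; p\lambda+\textstyle\sum_i u_i\quad\text{s.t.}\quad u_i+v_{ic}\ge w_i\phi_{ic}\ \ \forall(i,c),\qquad \lambda\ge \textstyle\sum_i v_{ic}\ \ \forall c .
\]

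\textbf{Step 2 (certificate from the partition).} By hypothesis, the facilities in $S^\ast$ partition the covered points into blocks $B_s$, $s\in S^\ast$. Each covered point $i$ belongs to the block of its Voronoi winner $\pi(i)$ (Theorem~\ref{thm:ndr}), and every other chosen facility gives $i$ essentially zero value. The block values are $Z_s=\sum_{i\in B_s}w_i\phi_{is}$. Complementary slackness with the integral solution forces $v_{is}=w_i\phi_{is}-u_i$ for $i\in B_s$ and $\lambda=\sum_i v_{is}$ for every open $s$. I would therefore take $u_i=0$ and $v_{ic}=w_i\phi_{ic}$ for covered $i$, so that the dual objective becomes $p\lambda$. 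I would then look for a common $\lambda$ with $\lambda=\sum_{i\in B_s}w_i\phi_{is}$ for all $s$. Where the block values differ, the fallback is to raise $u_i$ on the heavier blocks so as to equalize the $\sum_i v_{is}$. The hypothesis is used precisely here: because the blocks are disjoint, the $u_i$ of different blocks do not interact, and each block's equalization is an independent one-dimensional adjustment.

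\textbf{Step 3 (dual feasibility for closed candidates — the main obstacle).} For $c\notin S^\ast$, the second dual constraint requires $\sum_i[w_i\phi_{ic}-u_i]^+\le\lambda$. This says that no closed candidate has residual marginal value above the price $\lambda$ of a facility. I expect this to be the hard part. Disjointness of the \emph{chosen} coverage alone does not obviously control the \emph{closed} candidates, and a naive choice of $u$ can violate the constraint.

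My first attempt would use the optimality of $S^\ast$. The swap of any $s\in S^\ast$ for $c$ does not improve $Z_D$. Because the blocks are disjoint, removing $s$ changes only block $B_s$. I would combine this non-improving-swap inequality for the block $s$ that minimizes $Z_s$ with the marginal formula of Theorem~\ref{thm:sub}. The aim is to bound the residual gain of $c$ by $\min_s Z_s$, and then to choose $\lambda=\min_s Z_s$ with the $u_i$ of Step~2 absorbing the excess in the other blocks.

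If this bound cannot be closed in full generality, I would make explicit that ``substantially captured'' is to be read as including this swap-stability (which holds at the optimum). I would then present the resulting dual solution as the certificate, which gives LP value $=Z_D(S^\ast)$ and hence integrality gap $0$.
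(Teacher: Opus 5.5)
The obstacle you flag in Step~3 is fatal rather than technical. The proposition is false as stated, so no dual certificate of value $Z_D(S^\ast)$ exists in general. Your own remark that disjointness constrains only $S^\ast$, and not the closed candidates, is exactly the reason.

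\emph{A counterexample.} Take $p=3$, binary $\phi_{ic}$, $w_1=\dots=w_6=1$ and $w_7=\tfrac32$. Candidates $a_{12},a_{23},a_{13}$ cover the indicated pairs of $\{1,2,3\}$, candidates $b_{45},b_{56},b_{46}$ do the same on $\{4,5,6\}$, and $z$ covers only point $7$. This is realizable with step decay: equilateral triangles of side $2$, candidates at the edge midpoints, $R_i=1.1$, and the two triangles and point $7$ far apart. Each candidate covers at most two points and a triangle needs two candidates, so the integer optimum is $5.5$. It is attained only by one edge per triangle plus $z$, e.g.\ $S^\ast=\{a_{12},b_{45},z\}$, whose coverage is exactly pairwise disjoint. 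Yet the following point is feasible for the LP relaxation of~\eqref{eq:mip}: $x_c=\tfrac12$ on the six edge candidates, $x_z=0$, and $y_{ic}=\tfrac12$ for each point of $\{1,\dots,6\}$ and each of its two covering edges. Its value is $6$, so the gap is positive.

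\emph{Where your swap argument breaks.} Optimality of $S^\ast$ does bound the \emph{primal} residual $\sum_i[w_i\phi_{ic}-\max_{s\in S^\ast}w_i\phi_{is}]^+$ of a closed candidate by $\min_s Z_s$; here the residual of $a_{23}$ is $1\le\tfrac32$. But the dual constraint is $\sum_i[w_i\phi_{ic}-u_i]^+\le\lambda$. Complementary slackness with $S^\ast$ forces $u_i=0$ on uncovered points and $\sum_{i\in B_s}u_i=Z_s-\lambda$ on each block, so the dual residual exceeds the primal one. In the example this gives $u_3=0$, $u_1+u_2=2-\lambda$ and $\lambda\le Z_z=\tfrac32$. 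Feasibility at $a_{23}$ and $a_{13}$ then requires $u_1,u_2\ge 2-\lambda$, and together these force $\lambda\ge 2$, a contradiction. Reading swap-stability into ``substantially captured'' does not rescue the argument, since this $S^\ast$ is a global optimum and hence already swap-stable.

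The paper's one-line proof has the same blind spot. It only argues that, with $x$ fixed at $S^\ast$, splitting $y_{i\cdot}$ gains nothing; it never excludes fractional $x$ on unchosen candidates.

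A true statement needs a hypothesis on all of $C$. For example, assume the supports $\{i:\phi_{ic}>0\}$ are pairwise disjoint over every $c\in C$. Then, with $Z_c=\sum_i w_i\phi_{ic}$, the LP collapses to $\max\{\sum_c Z_c x_c:\sum_c x_c\le p,\ 0\le x_c\le 1\}$, which has an integral optimum. Otherwise the existence of your $(\lambda,u)$ must simply be assumed, which restates the conclusion. In that case the tightness is best presented as the empirical finding of Table~\ref{tab:exact}.
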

\begin{proof}{Proof}
Under disjoint coverage the binding constraint for point $i$ is the single
facility that covers it, so $\sum_c w_i\phi_{ic}y_{ic}$ is maximized by setting
that one $y_{ic}=1$ rather than splitting it fractionally; the LP therefore
attains its optimum at an integral $(x,y)$. \Halmos
\end{proof}
This is the generic regime, not a knife-edge case: at any Lloyd-stable
configuration (Proposition~\ref{prop:lloyd}) the facilities command distinct
Voronoi cells, so their coverages are approximately disjoint, and both the exact
discrete optimum and FBM-LNS converge to such configurations. Branch-and-bound
(HiGHS; \citealp{HuangfuHall2018}) solves~\eqref{eq:mip} on this regime in
seconds; we report solve times up to $n=500$ in Section~\ref{sec:experiments}
(Table~\ref{tab:exact}).

\subsection{Near-optimality bounds}
Refining the candidate grid converges the discrete optimum to the continuous one:
since the continuous optimum satisfies
$Z^* \ge Z_{\mathrm{FBM}} \ge Z_{\mathrm{grid\text{-}MIP}}$ and the grid-MIP rises
toward FBM-LNS as the grid refines, all three coincide asymptotically, so the
discrete exact value brackets the heuristic. Two propositions make this
certificate quantitative by giving computable continuous upper bounds on $Z^*$,
first at a finite grid and then without one. We confirm empirically that the
bracket is tight---FBM-LNS lies within $0.5\%$ of the exact optimum, and within
the $[0.99,1.03]$ ratio on small exhaustively-solved instances---in
Section~\ref{sec:experiments} (Result~5).

\begin{proposition}[Continuous upper bound via grid density]\label{prop:cont_ub}
Let $C$ be a grid of spacing $h$ covering $\operatorname{conv}\{p_i\}$, so every
hull point is within $h/\sqrt{2}$ of a grid node, and let
$L_\phi=\sum_i w_i\,\mathrm{Lip}(\phi_i)$, where $\mathrm{Lip}(\phi_i)$ is a
Lipschitz constant of $\phi_i$ ($1/R_i$ for linear and exponential decay). Then
\[
Z^* \;\le\; Z_{\mathrm{grid\text{-}MIP}}(C) + \tfrac{h}{\sqrt{2}}\,L_\phi
\;\le\; \mathrm{LP}(C) + \tfrac{h}{\sqrt{2}}\,L_\phi,
\]
so the grid LP is a verifiable upper bound on the continuous optimum, with slack
$\tfrac{h}{\sqrt{2}}L_\phi$ that vanishes as $h\to 0$.
\end{proposition}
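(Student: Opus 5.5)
Take an optimal continuous solution $\bm{X}^*=(X_1^*,\ldots,X_p^*)$ and snap it to the grid; the loss from snapping is controlled by the Lipschitz constants. By Theorem~\ref{thm:hull} we may assume every $X_j^*\in\operatorname{conv}\{p_i\}$. By the covering hypothesis on $C$, each $X_j^*$ then has a grid node $c_j\in C$ with $\|X_j^*-c_j\|\le h/\sqrt{2}$. Let $S=\{c_1,\ldots,c_p\}$. Possibly $|S|<p$ because of collisions, but that is harmless: $Z_D$ is monotone (Theorem~\ref{thm:sub}), so $S$ can be padded with further nodes, and a set of at most $p$ nodes is feasible in~\eqref{eq:mip} in any case. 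Hence $Z_D(S)\le Z_{\mathrm{grid\text{-}MIP}}(C)$.

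The key estimate is pointwise per demand point. By Theorem~\ref{thm:ndr}, the contribution of $i$ is $w_i\phi_i(d_i)$, where $d_i=\min_j\|p_i-X_j^*\|$. Let $d_i'=\min_j\|p_i-c_j\|$. By the triangle inequality and the fact that a minimum of functions each shifted by at most $\delta$ shifts by at most $\delta$, we get $|d_i'-d_i|\le h/\sqrt{2}$. The Lipschitz property then gives
\[
w_i\phi_i(d_i)\le w_i\phi_i(d_i')+w_i\,\mathrm{Lip}(\phi_i)\,\tfrac{h}{\sqrt{2}}.
\]
Summing over $i$ yields $Z^*=Z(\bm{X}^*)\le Z_D(S)+\tfrac{h}{\sqrt{2}}L_\phi\le Z_{\mathrm{grid\text{-}MIP}}(C)+\tfrac{h}{\sqrt{2}}L_\phi$.

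The second inequality is immediate. The LP relaxation of~\eqref{eq:mip} is a relaxation of a maximization problem, so $Z_{\mathrm{grid\text{-}MIP}}(C)\le\mathrm{LP}(C)$. We also need the MIP value to equal $\max_{|S|\le p}Z_D(S)$: given $S$, set $x_c=\mathbf{1}[c\in S]$ and let $y_{ic}=1$ for one maximizing $c\in S$. Conversely, any feasible integral $(x,y)$ assigns each $i$ to at most one open $c$, so its value is at most $Z_D$ of the open set.

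The point requiring care is the Lipschitz constants, which also covers the main obstacle. For linear decay $[1-r/R_i]^+$ and for exponential decay $e^{-r/R_i}$ on $r\ge 0$, the derivative is bounded in absolute value by $1/R_i$. The clip is $1$-Lipschitz composed with an affine map, so it preserves the constant $1/R_i$. The proposition's bound therefore holds for these families and, more generally, for any Lipschitz $\phi_i$. Step decay is not Lipschitz, and the statement excludes it through its hypothesis. Finally, the slack $\tfrac{h}{\sqrt{2}}L_\phi$ is linear in $h$ and so vanishes as $h\to0$, which completes the argument.
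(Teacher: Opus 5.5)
Your proof is correct and follows essentially the same route as the paper's: snap each facility to a nearest grid node, bound each demand point's loss by $\mathrm{Lip}(\phi_i)\,h/\sqrt{2}$, sum to get $Z^*\le Z_{\mathrm{grid\text{-}MIP}}(C)+\tfrac{h}{\sqrt{2}}L_\phi$, and finish by noting that the LP relaxes the MIP. The differences are cosmetic. You pass the perturbation through the nearest distance (Theorem~\ref{thm:ndr} plus the 1-Lipschitz property of $\min$), whereas the paper applies $|\max_j a_j-\max_j b_j|\le\max_j|a_j-b_j|$ to the decay values. You also use Theorem~\ref{thm:hull} to place the facilities in the hull, where the covering hypothesis applies, and you handle grid-node collisions; both points are left implicit in the paper's proof.
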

\begin{proof}{Proof sketch.}
Round each $X_j$ to its nearest grid node $\tilde X_j$ (with
$\|X_j-\tilde X_j\|\le h/\sqrt{2}$); Lipschitz continuity of each $\phi_i$ then
costs at most $(h/\sqrt{2})\sum_i w_i\,\mathrm{Lip}(\phi_i)$, and taking the
supremum over $\bm{X}$ yields the claim. The full proof is in
the Electronic Companion. \Halmos
\end{proof}
The bound is conservative (worst-case Lipschitz) but valid at any finite grid,
turning the grid LP from an informal certificate into an upper bound on the true
continuous optimum; combined with $Z^*\ge Z_{\mathrm{grid\text{-}MIP}}\ge
Z_{\mathrm{FBM}}$, it brackets the heuristic within the slack
$\tfrac{h}{\sqrt{2}}L_\phi$ of $Z^*$. At practical grid resolutions this slack is
of the same order as $Z$ (the worst case charges every point at its full
Lipschitz rate), so the proposition is a theoretical guarantee rather than a
tight certificate; the tight practical certificate is the fine-grid convergence
confirmed in Section~\ref{sec:experiments}.

\begin{proposition}[Grid-free continuous upper bound for concave decays]\label{prop:coop_ub}
For a decay that is concave in the distance (unclipped linear
$\phi_i(r)=1-r/R_i$ and quadratic $\phi_i(r)=1-(r/R_i)^2$),
$Z^* \le p\cdot\max_{x\in\operatorname{conv}\{p_i\}}\sum_i w_i\,\phi_i(\|p_i-x\|)$,
and the inner single-facility maximum is attained at a demand point, so the bound
is computable in $O(n^2)$ without a grid.
\end{proposition}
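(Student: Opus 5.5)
The plan has two parts: the factor-$p$ bound, then the computability of the inner maximum. Write $F(x)=\sum_i w_i\,\phi_i(\|p_i-x\|)$ for the single-facility objective. By Theorem~\ref{thm:hull} we may take an optimal $\bm{X}^*$ with every $X_j^*\in\operatorname{conv}\{p_i\}$.

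\textbf{Factor-$p$ bound.} By Theorem~\ref{thm:ndr}, $Z(\bm{X}^*)=\sum_i w_i\max_j\phi_i(\|p_i-X_j^*\|)$. I will bound each per-point maximum by the sum over $j$, which requires $\phi_i(\|p_i-X_j^*\|)\ge 0$. I would make this an explicit standing hypothesis: in the unclipped regime the scale $R_i$ is at least the diameter of $\operatorname{conv}\{p_i\}$. This is exactly the ``large $R$'' Weber/$k$-means limit of V3--V4, and there both families take values in $[0,1]$ on the hull. With nonnegativity, exchanging the sums gives
\[
Z^*\le\sum_j\sum_i w_i\,\phi_i(\|p_i-X_j^*\|)=\sum_j F(X_j^*)\le p\max_{x\in\operatorname{conv}\{p_i\}}F(x).
\]
Without nonnegativity the max-by-sum step fails, and a negative term could make the bound false. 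So this hypothesis is genuinely needed and should be stated.

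\textbf{The inner maximum.} By Theorem~\ref{thm:tract}(ii) with $p=n$-independent $p{=}1$ and $g=\mathrm{id}$, $F$ is concave. For the quadratic family, the maximizer of $F$ is the weighted centroid $\sum_i (w_i/R_i^2)p_i/\sum_i (w_i/R_i^2)$, computable in $O(n)$. For unclipped linear decay, maximizing $F$ is the weighted Weber problem of Theorem~\ref{thm:weber} with weights $w_i/R_i$. Its optimum is in general \emph{not} a demand point, since a concave function attains its maximum in the interior and only its minimum at extreme points. The literal claim ``attained at a demand point'' therefore needs repair, and I expect this to be the main obstacle. Evaluating $F$ only at demand points gives a \emph{lower} estimate of the maximum, which would invalidate the upper bound.

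\textbf{Repair for the linear family.} I would keep the $O(n^2)$ computability but use concavity in the right direction. For any demand point $p_k$, take a supergradient $s_k\in\partial F(p_k)$. It is computable in $O(n)$ from the force formula~\eqref{eq:grad}; at the kink $x=p_k$, the $k$-th term contributes a ball of radius $w_k/R_k$, and we choose the element of minimum norm. Concavity then gives, for every hull point $x$,
\[
F(x)\le F(p_k)+s_k^\top(x-p_k)\le F(p_k)+\|s_k\|\,\operatorname{diam}(\operatorname{conv}\{p_i\}).
\]
Minimizing the right-hand side over $k$ yields a valid upper bound on $\max_x F$ in $O(n^2)$ time, with no grid. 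This bound is tight when some demand point is itself a Weber point, since then $0\in\partial F(p_k)$ and the minimum-norm supergradient vanishes. Alternatively, one can simply solve the Weber problem exactly by Weiszfeld or an SOCP, with a certified duality gap. Either route makes the grid-free upper bound rigorous. The remaining bookkeeping is routine: check that $\partial F(p_k)$ is as described and that the diameter is computable in $O(n\log n)$.
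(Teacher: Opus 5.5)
Your factor-$p$ step is the paper's own argument: bound $\max_j$ by $\sum_j$ using $\phi_i\ge 0$, then separate over the facilities in $K^p$, where $K=\operatorname{conv}\{p_i\}$. The paper gets nonnegativity silently from its convention $\phi_i:\mathbb{R}_+\to[0,1]$. For the unclipped families that convention is exactly your hypothesis $R_i\ge\operatorname{diam}(K)$, so stating it explicitly is right; without it the bound can indeed fail.

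On the second half you depart from the paper, and you are right to.

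\textbf{The paper's step is wrong.} The paper's proof asserts that the concave function $F(x)=\sum_i w_i\phi_i(\|p_i-x\|)$ attains its maximum over $K$ at an extreme point, i.e.\ at a demand point. That is the extreme-point principle for \emph{convex} functions (equivalently, for minimizing concave ones), applied in the wrong direction.

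\textbf{The statement as written is false.} Take three unit-weight points at the vertices of a unit equilateral triangle with $R_i=R\ge 1$. In both families the maximizer of $F$ is the centroid. At $p=1$ this gives $Z^*=3-\sqrt{3}/R>3-2/R=p\max_k F(p_k)$ for linear decay, and $Z^*=3-1/R^2>3-2/R^2$ for quadratic decay. So enumerating demand points is not in general a valid upper bound on $Z^*$, and no proof of the literal claim exists.

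\textbf{Your repair is the salvageable content.} In the quadratic case the inner maximum is exact in $O(n)$ via the $w_i/R_i^2$-weighted centroid. In the linear case, let $G_k=\sum_{i\ne k}(w_i/R_i)(p_i-p_k)/\|p_i-p_k\|$. The minimum-norm supergradient at $p_k$ then has norm $\max\bigl(0,\|G_k\|-w_k/R_k\bigr)$, and
\[
\max_K F\;\le\;\min_k\bigl[F(p_k)+\|s_k\|\operatorname{diam}(K)\bigr].
\]
This is valid, grid-free, computable in $O(n^2)$, and exact when some demand point is a Weber point. The paper's route would have delivered the exact inner maximum by enumeration, but that is not available. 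Yours delivers a valid but generally looser bound, which is the honest form of the proposition.

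\textbf{Small fixes.}
\begin{itemize}
\item Replace ``a concave function attains its maximum in the interior'' by ``need not attain its maximum at an extreme point.'' Your own tightness remark shows the maximum can sit at a demand point.
\item Merge coincident demand points before describing $\partial F(p_k)$.
\item Weiszfeld alone only produces a lower estimate of $\max F$. It yields an upper bound only together with a dual certificate, for instance the same supergradient inequality applied at the final iterate.
\end{itemize}
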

\begin{proof}{Proof sketch.}
Use $\max_j\phi_i(\|p_i-X_j\|)\le\sum_j\phi_i(\|p_i-X_j\|)$, separate the
right-hand side over facilities, and observe that for concave non-increasing
$\phi_i$ the single-facility term is concave in $x$
(Lemma~EC.2 in the Electronic Companion), hence maximized over the hull at an
extreme point (a demand point).
The full proof is in the Electronic Companion. \Halmos
\end{proof}
This grid-free bound applies to the trap-free unclipped decays identified in
Theorem~\ref{thm:tract}(ii); for clipped-linear and exponential decays the
single-facility term is not concave in $x$ and the maximum need not occur at a
demand point, so the discretization bound of Proposition~\ref{prop:cont_ub}
remains the operative certificate. The bound is loose even where it applies,
since $\max_j\le\sum_j$ lets all $p$ facilities cover every point. Both
continuous bounds are therefore valid but not tight; the operative practical
certificate remains the discretized LP/MIP, whose empirical tightness we report
in Section~\ref{sec:experiments}. A genuinely tight continuous relaxation would
have to handle the pointwise maximum directly, which is a convex maximization
(or a reverse second-order cone) and hence non-convex;
it is left to future work.

\section{Computational experiments}\label{sec:experiments}

\subsection{Data sources and experimental setup}\label{sec:setup}
To ensure authority and representativeness, we draw on established benchmark
families for the spatial distribution of demand points in a $1000\times 1000$
square. We label each benchmark a test set, abbreviated TS, and index them
TS1--TS7 for cross-reference:
\begin{enumerate}
\item[\textbf{TS1.}] \emph{Concentration-point} (Bansal--Kianfar): the
procedure of \citet{BansalKianfar2017}, used in
\citet{BansalShojaee2020}---three random centers of radius $270$, with each point
anchored to a center (probability $0.31$ each) or free uniform ($0.07$). This
mimics real demand clustering around population centers.
\item[\textbf{TS2.}] \emph{Uniform}: i.i.d.\ points on the square---the standard
worst-case baseline for location algorithms.
\item[\textbf{TS3.}] \emph{Gaussian mixture}: $k$ isotropic components---the
canonical stochastic model for $k$-means and clustering analysis
\citep{OstrovskyRabaniSchulmanSwart2012}.
\item[\textbf{TS4.}] \emph{Disk-clustered}: uniform within $k$ random disks---a
structured but non-Gaussian alternative.
\item[\textbf{TS5.}] \emph{Franti \texttt{s1--s4}}: a real clustering benchmark
of 5{,}000 points in 15 Gaussian clusters---the standard test set for $k$-means
\citep{FrantiSieranoja2018}.
\item[\textbf{TS6.}] \emph{TSPLIB EUC\_2D}: six instances (eil51 through u1817)
giving real geographic city coordinates, used for the Weber/$p$-median
experiments \citep{Reinelt1991}.
\item[\textbf{TS7.}] \emph{Region demand}: disk and rectangle shapes sampled into
density-weighted points, as described under variant V5.
\end{enumerate}
Weights are $w_i\sim U[1,10]$. The per-point decay scale $R_i$ is drawn from a
lognormal with mean $R_{\mathrm{mean}}=120$ and a controlled coefficient of
variation $R_{cv}$: $R_{cv}=0$ gives the homogeneous control and larger $R_{cv}$
increases heterogeneity (the defining feature of our setting). Every instance is
a pure function of (parameters, seed); the full library and raw results are
released for reproducibility. Table~\ref{tab:datasets} summarizes the seven test
sets; Figure~\ref{fig:instances} shows one realization of each synthetic
generator (TS1--TS4), while TS5--TS7 are real geographic or region datasets and
appear only in the table. All comparisons are paired (Wilcoxon signed-rank).

\begin{table}[htbp]
\centering
\TABLE
{Benchmark datasets used in the computational study.\label{tab:datasets}}
{\begin{tabular}{@{}l@{\ }p{0.2\linewidth}@{\ }p{0.35\linewidth}@{\ }l@{\ }p{0.28\linewidth}@{}}
\hline\up
Code & Source & Geography / type & Variant(s) & Prior use \\ \hline\up
TS1 & Bansal--Kianfar & Concentration-point, heterogeneous $R_i$ & V1, V2, V6 & \citealt{BansalKianfar2017,BansalShojaee2020} \\
TS2 & Uniform on square & Diffuse, homogeneous or heterogeneous & V1, V2, V6 & standard baseline \\
TS3 & Gaussian mixture & Clustered, $k$ components & V1--V4, V6 & \citealt{OstrovskyRabaniSchulmanSwart2012,ArthurVassilvitskii2007} \\
TS4 & Disk-clustered & Structured non-Gaussian clusters & V1--V4, V6 & synthetic (this work) \\
TS5 & Franti \texttt{s1--s4} & Real clustering benchmark (5{,}000 pts) & V3 & \citealt{FrantiSieranoja2018} \\
TS6 & TSPLIB EUC\_2D & Real geographic coordinates & V4 & \citealt{Reinelt1991} \\
TS7 & Disk / rectangle shapes & Region demand, sampled to points & V5 & \citealt{DreznerWesolowskyDrezner2004} \down\\ \hline
\end{tabular}}{Codes are used throughout \S\ref{sec:experiments} to identify the
test set for each experiment.}
\end{table}

\begin{figure}[htbp]
\FIGURE
{\includegraphics[width=\linewidth]{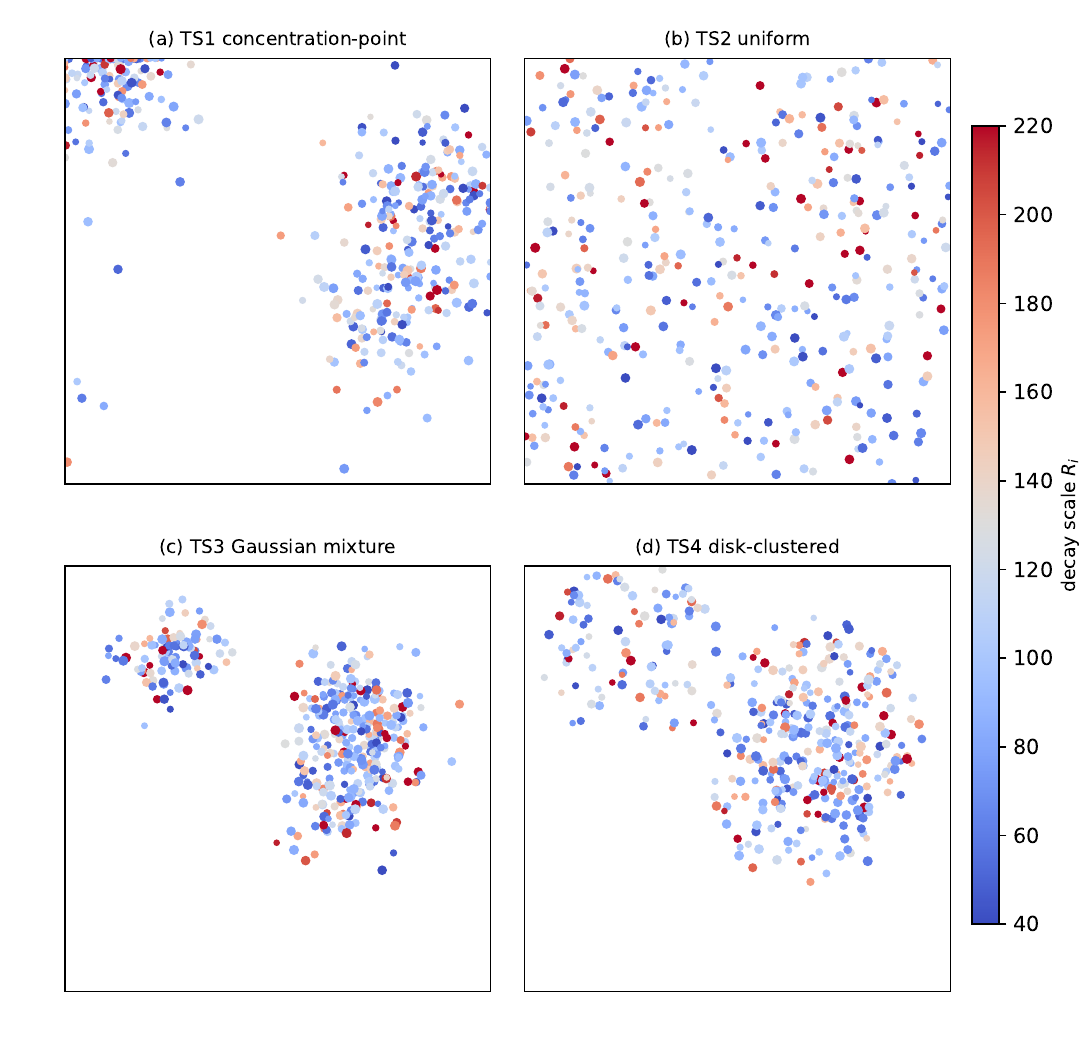}}
{Sample demand geographies used in the experiments ($n=400$, $R_{cv}=0.5$):
(a)~TS1 concentration-point, (b)~TS2 uniform, (c)~TS3 Gaussian mixture, (d)~TS4
disk-clustered. Points are colored by decay scale $R_i$ (red $\equiv$
distance-sensitive, blue $\equiv$ distance-tolerant) and sized by weight. The four
generators span concentrated, diffuse, and clustered demand, so that no conclusion
rests on a single spatial pattern.\label{fig:instances}}
{}
\end{figure}

\subsection{Result 1: FBM-LNS significantly outperforms all baselines ($30/30$ wins, $p{<}10^{-9}$)}
Table~\ref{tab:main} reports mean captured demand $Z$ over $K=8$ instances per
configuration under heterogeneous decay ($R_{cv}=0.5$), for two scales
$(n,p)\in\{(300,5),(1000,10)\}$. FBM-LNS yields the highest mean $Z$ in every
configuration, with a margin over the strong $(1-1/e)$ greedy baseline that is
stable at $1.5$--$3.0\%$ and a positive margin over Cooper ALT in every cell.
The large-sample paired comparison ($K{=}30$, Table~\ref{tab:stats}) is
significant at $p<10^{-9}$ with $30/30$ per-instance wins.

\begin{table}[htbp]
\TABLE
{(Result 1) Mean captured demand $Z$ on the main matrix ($R_{cv}=0.5$, $K=8$
instances per cell). FBM-LNS attains the highest mean $Z$ in every configuration
against every baseline; large-sample significance ($30/30$ wins, $p<10^{-9}$) is
confirmed in Table~\ref{tab:stats}. Greedy is the $(1-1/e)$ continuous greedy;
Cooper is multi-start alternating location--allocation. ``Code'' tags each row
with its test set (TS, Table~\ref{tab:datasets}) and decay variant (V,
\S\ref{sec:variants}).\label{tab:main}}
{\begin{tabular}{@{}l@{\ }l@{\ }c@{\ }c@{\ }c@{\ }c@{\ }c@{\ }c@{}}
\hline\up
Code & Configuration & $k$-means & Cooper & PSO & Greedy & FBM-LNS & vs.\ Greedy \\ \hline\up
TS1·V1 & conc. / linear, $n{=}300,p{=}5$  & 512 & 572 & 550 & 583 & \textbf{597} & $+2.4\%$ \\
TS1·V2 & conc. / exp,    $n{=}300,p{=}5$  & 773 & 808 & 797 & 805 & \textbf{821} & $+2.0\%$ \\
TS3·V1 & gmix / linear,   $n{=}300,p{=}5$  & 694 & 719 & 630 & 705 & \textbf{726} & $+3.0\%$ \\
TS3·V2 & gmix / exp,      $n{=}300,p{=}5$  & 925 & 936 & 887 & 925 & \textbf{941} & $+1.7\%$ \\
TS1·V1 & conc. / linear, $n{=}1000,p{=}10$ & 2176 & 2416 & 2171 & 2489 & \textbf{2552} & $+2.5\%$ \\
TS1·V2 & conc. / exp,    $n{=}1000,p{=}10$ & 2977 & 3111 & 2905 & 3146 & \textbf{3194} & $+1.5\%$ \\
TS3·V1 & gmix / linear,   $n{=}1000,p{=}10$ & 2887 & 2978 & 2494 & 2916 & \textbf{2996} & $+2.7\%$ \\
TS3·V2 & gmix / exp,      $n{=}1000,p{=}10$ & 3498 & 3554 & 3165 & 3506 & \textbf{3563} & $+1.6\%$ \down\\ \hline
\end{tabular}}{Each row averages $K=8$ independent instances; FBM-LNS attains
the highest mean in every cell. The rigorous large-sample significance test
($K{=}30$) is reported in Table~\ref{tab:stats}.}
\end{table}

Table~\ref{tab:stats} confirms that this superiority is not an artifact of small
samples: with $K=30$ independent instances per cell, FBM-LNS wins $30/30$ paired
comparisons in every cell and the paired Wilcoxon $p$-value is below $10^{-9}$
throughout. The per-instance, within-configuration difference is what is
significant; the unpaired $95\%$ confidence intervals of the two means do overlap,
because the between-instance variability in $Z$ (driven by the random geography)
is much larger than the within-instance algorithm gap. The advantage is stable
across scales (from $n{=}300$ to $n{=}1000$) and across both linear and
exponential decay.

\begin{table}[htbp]
\TABLE
{(Result 1, large-sample confirmation) Mean captured demand $\pm 95\%$
confidence interval (CI) over
$K=30$ independent instances, with paired Wilcoxon $p$-values. FBM-LNS wins
$30/30$ paired comparisons in each cell at $p<10^{-9}$; the unpaired CIs overlap
because between-instance variability dominates the within-instance gap.\label{tab:stats}}
{\begin{tabular}{@{}l@{\ }l@{\ }c@{\ }c@{\ }c@{\ }c@{}}
\hline\up
Code & Configuration & FBM-LNS & Greedy $(1-1/e)$ & FBM vs.\ Greedy & $p$-value \\ \hline\up
TS1·V1 & conc./linear, $n{=}300, p{=}5$  & $548.5\pm23.2$ & $536.2\pm22.8$ & $+2.3\%$ & $<10^{-9}$ \\
TS1·V1 & conc./linear, $n{=}1000, p{=}10$ & $2373.2\pm77.4$ & $2291.9\pm76.7$ & $+3.5\%$ & $<10^{-9}$ \\
TS3·V2 & gmix/exp, $n{=}300, p{=}5$       & $904.8\pm18.4$ & $890.4\pm18.1$ & $+1.6\%$ & $<10^{-9}$ \down\\ \hline
\end{tabular}}{Paired Wilcoxon signed-rank test, alternative ``FBM-LNS greater
than Greedy''; $30/30$ per-instance wins in each configuration.}
\end{table}

\smallskip\noindent
The superiority has a structural explanation, not just an empirical one. The
$(1-1/e)$ greedy is a \emph{constructive} heuristic: it places facilities one at
a time at the maximum-marginal point but never re-optimizes earlier placements,
so its first few choices---made when coverage was still low---are frozen as later
demand is revealed. Cooper ALT is multi-start Lloyd only, so it converges to a
coordinate-wise fixed point (Proposition~\ref{prop:lloyd}) and cannot escape it.
FBM-LNS closes both gaps: its Lloyd step jointly re-optimizes every facility
within its Voronoi cell (Theorem~\ref{thm:weber}), undoing the greedy's frozen
early choices, and its destroy-and-repair relocate removes the least-contributing
facility and re-places it where the residual marginal $g(x)$ is maximal
(Proposition~\ref{prop:reloc}), escaping the Lloyd fixed point. The two
mechanisms are complementary, which is why the ablation in Result~4 attributes
the gain to the relocate rather than to the selection rule. The same mechanism
also explains the stable advantage over Cooper ALT across scales: Cooper's
multi-start Lloyd converges to a coordinate-wise fixed point, and the
relocate's destroy-and-repair step escapes fixed points that more restarts alone
cannot.

\subsection{Result 2: One framework is competitive with bespoke SOTA across the problem family}\label{sec:generality}
A defining feature of our framework is that the same implementation, switching
only the per-point function, runs on the other two canonical continuous planar
location problems and on shape-based demand. (i)~\emph{$k$-means}: minimize
$\sum_i\min_j\|p_i-X_j\|^2$, recovered by quadratic decay with a large
homogeneous scale, where the within-cell Weber optimum has the closed form of the
weighted centroid (Corollary~\ref{cor:uniq}), so our location step specializes to
the exact Lloyd update. (ii)~\emph{Weber/$p$-median}: minimize
$\sum_i w_i\min_j\|p_i-X_j\|$, recovered by linear decay with a large scale.
(iii)~\emph{Shape demand}: when each demand unit is a region $D_i$ rather than a
point, the captured demand
$\int_{D_i}\phi_i(\min_j\|p-X_j\|)\mu_i(p)\,dp$ is reduced to the weighted-point
problem by sampling each region into $m$ density-weighted points (Monte-Carlo
error $O(1/\sqrt m)$); allocation is still the Voronoi partition of the plane,
and the gradient is the integral of the point gradient over the region.
Figure~\ref{fig:shapes} illustrates the sampling-based reduction.

\begin{figure}[htbp]
\FIGURE
{\begin{tikzpicture}[scale=1.3]
\filldraw[fill=red!8, draw=red!50!black, thick] (1.6,3.2) circle (1.0);
\foreach \x/\y in {1.3/3.4, 1.9/2.9, 1.4/2.7, 2.0/3.5, 2.1/3.2, 1.6/3.8,
                   1.0/3.2, 1.8/3.6, 2.2/3.0, 1.2/3.0}
  \filldraw[red!60!black] (\x,\y) circle (1.8pt);
\filldraw[fill=blue!8, draw=blue!50!black, thick] (4.3,1.2) rectangle (5.7,2.7);
\foreach \x/\y in {4.5/1.4, 5.0/1.7, 4.6/2.1, 5.3/1.5, 4.8/2.4, 5.5/2.3,
                   4.4/1.9, 5.1/2.6, 5.4/2.0, 4.7/1.3}
  \filldraw[blue!55!black] (\x,\y) circle (1.8pt);
\filldraw[fill=orange!90!black, draw=black] (2.6,2.3) rectangle ++(0.22,0.22);
\node[font=\footnotesize, orange!90!black] at (3.0,2.4) {F1};
\filldraw[fill=orange!90!black, draw=black] (4.4,2.3) rectangle ++(0.22,0.22);
\node[font=\footnotesize, orange!90!black] at (4.25,2.4) {F2};
\draw[dashed, gray!55] (3.61,0) -- (3.61,4.8);
\node[red!60!black, font=\small] at (1.6,4.6) {$D_1$ (disk)};
\node[blue!55!black, font=\small] at (5.0,3.1) {$D_2$ (rect.)};
\node[font=\scriptsize, black] at (3.95,4.5) {Voronoi edge};
\end{tikzpicture}}
{Shape (region) demand. Each demand unit is a region (disk or rectangle) rather
than a point; it is reduced to the weighted-point problem by sampling its
interior into density-weighted points. The nearest-facility allocation remains
the Voronoi partition; the coverage objective is the integral of the per-point
decay over the region.\label{fig:shapes}}
{}
\end{figure}

Table~\ref{tab:generality} evaluates the framework against the problem-specific
state of the art: $k$-means${}^{++}$ on $k$-means (synthetic Gaussian blobs and
the Franti \texttt{s1--s4} benchmark of 5{,}000 points with 15 clusters),
multi-start Cooper and variable neighborhood search
\citep{HansenMladenovic1997} on $p$-median, and multi-start Cooper on
shape-demand instances. FBM-LNS matches $k$-means${}^{++}$ on both synthetic and
real $k$-means benchmarks (tying it on \texttt{s1--s4}), and on
$p$-median it is competitive with a specialized VNS: it beats VNS on the larger
instances (d657 by $1.2\%$, pr1002 by $1.8\%$, the concentration-point instance
by $0.3\%$) but loses to it on the small eil101 by $8.2\%$, where the
demand-point swap local search of VNS is strongest. The point is not a new record
on these thoroughly studied problems but that one method, designed from the shared
structure, is competitive with problem-specific standard methods across the family
while extending to heterogeneous-decay and shape demand where those methods do
not directly apply.

\begin{table}[htbp]
\TABLE
{(Result 2) The same framework across the continuous-location family. ``Code''
tags each row with its test set (TS) and variant (V); $\Delta$ is FBM-LNS
relative to the best baseline in the row (see the table note for the
direction-of-better convention). One code, switching only the per-point function,
is competitive with or beats problem-specific standard methods ($k$-means${}^{++}$,
Cooper) on $k$-means, $p$-median, and shape-demand instances.\label{tab:generality}}
{\begin{tabular}{@{}l@{\ }l@{\ }l@{\ }c@{\ }c@{\ }c@{\ }c@{}}
\hline\up
Code & Problem & Configuration & Baseline & Baseline $Z$ & FBM-LNS & $\Delta$ \\ \hline\up
TS3·V3 & $k$-means & blobs, $n{=}400, p{=}5$   & km${}^{++}$ & 1\,138\,041 & \textbf{1\,138\,039} & $0.00\%$ \\
TS3·V3 & $k$-means & blobs, $n{=}800, p{=}10$  & km${}^{++}$ & 1\,982\,174 & \textbf{1\,966\,505} & $-0.79\%$ \\
TS5·V3 & $k$-means & real \texttt{s1--s4}, $n{=}5000,k{=}15$ & km${}^{++}$ & \multicolumn{2}{c}{FBM ties km${}^{++}$ (identical objective)} & $0.00\%$ \\
TS1·V4 & $p$-median & conc., $n{=}500,p{=}10,K{=}8$ & VNS & 191\,380 & \textbf{190\,733} & $-0.34\%$ \\
TS6·V4 & $p$-median & TSPLIB eil101, $n{=}101,p{=}10$ & VNS & \textbf{767} & 830 & $+8.2\%$ \\
TS6·V4 & $p$-median & TSPLIB d657, $n{=}657,p{=}30$  & VNS & 103\,231 & \textbf{101\,970} & $-1.22\%$ \\
TS6·V4 & $p$-median & TSPLIB pr1002, $n{=}1002,p{=}30$ & VNS & 690\,249 & \textbf{677\,823} & $-1.80\%$ \\
TS7·V5 & shape    & 25 disks, $p{=}6$         & Cooper & 202.1 & \textbf{211.7} & $+4.74\%$ \\
TS7·V5 & shape    & 25 rectangles, $p{=}6$    & Cooper & 200.1 & \textbf{211.6} & $+5.74\%$ \down\\ \hline
\end{tabular}}{``km${}^{++}$'' denotes $k$-means${}^{++}$ seeding followed by Lloyd;
``VNS'' is variable neighborhood search \citep{HansenMladenovic1997} for
$p$-median (facilities restricted to demand points, fast-interchange local search
plus $k$-swap shaking). The ``Baseline'' column names the stronger of Cooper and
VNS for the row, and ``Baseline $Z$'' reports its objective value. Rows average
4--30 instances. For $k$-means, FBM-LNS uses the closed-form
centroid location step (Corollary~\ref{cor:uniq}); for shapes, each region is
sampled into 40 density-weighted points. Boldface marks the better of the two
values. \emph{Direction:} for $k$-means (sum of squared errors) and $p$-median
(sum of nearest-facility distance) the objective is a cost, so lower is
better and a negative $\Delta$ favors FBM-LNS; for shape demand the objective is
captured demand, so higher is better and a positive $\Delta$ favors
FBM-LNS. $\Delta$ is always (FBM-LNS $-$ best baseline)$/$best baseline.}
\end{table}

\smallskip\noindent
The competitiveness is structural, not coincidental: because $k$-means, the Weber
problem, and heterogeneous gradual coverage are all instances of the single
objective~\eqref{eq:obj} (Theorem~\ref{thm:ndr}), the location step of FBM-LNS
reproduces each bespoke method's core operation---the closed-form centroid for
$k$-means (Corollary~\ref{cor:uniq}), the Weiszfeld iteration for $p$-median
(\S\ref{sec:specialize})---and adds the same destroy-and-repair relocate on top.
The single code therefore inherits the strength of each specialist while
extending to heterogeneous-decay and shape demand where they do not directly
apply.

\smallskip\noindent
The sampling reduction used for shape demand (V5) does not introduce solution
instability. Holding the demand regions fixed and re-sampling each into $m=40$
points with $12$ independent seeds, the captured demand varies by under $1\%$
(coefficient of variation $0.86\%$ for disks, $0.83\%$ for rectangles), and the
optimal facility positions move by under $1\%$ of the map width on average
($0.67\%$ for disks, $0.89\%$ for rectangles; rectangle instances occasionally
show a larger position swap, but it leaves the captured demand essentially
unchanged). The placement returned for region demand is therefore stable and
seed-independent in objective value.

\subsection{Result 3: The advantage grows with heterogeneity and persists at scale}
Table~\ref{tab:hetero} sweeps $R_{cv}\in\{0,0.25,0.5,1.0\}$ at $n=300$, $p=5$,
conc./linear. FBM-LNS is best at every heterogeneity level, and its advantage
over Cooper ALT widens as heterogeneity grows (from $+2.4\%$ when
$R_{cv}=0.25$ to $+6.8\%$ when $R_{cv}=1.0$), confirming that the submodular
relocation pays off most when sensitivity varies most.

\begin{table}[htbp]
\TABLE
{(Result 3) Heterogeneity sweep on configuration TS1·V1 (conc./linear, $n=300$,
$p=5$, $K=8$). FBM-LNS is best at every level; its advantage over Cooper ALT grows
from $+3.2\%$ ($R_{cv}=0$) to $+7.0\%$ ($R_{cv}=1$), confirming that the relocate
step pays off precisely when sensitivity varies most.\label{tab:hetero}}
{\begin{tabular}{@{}c@{\quad}c@{\quad}c@{\quad}c@{\quad}c@{}}
\hline\up
$R_{cv}$ & Greedy & Cooper & FBM-LNS & FBM vs.\ Cooper \\ \hline\up
$0.00$ (homog.)  & 610 & 618 & \textbf{638} & $+3.2\%$ \\
$0.25$           & 604 & 612 & \textbf{627} & $+2.4\%$ \\
$0.50$           & 583 & 572 & \textbf{597} & $+4.4\%$ \\
$1.00$ (strong)  & 522 & 498 & \textbf{532} & $+6.8\%$ \down\\ \hline
\end{tabular}}{All differences FBM-LNS vs.\ each baseline significant at $p<0.01$
(Wilcoxon, $8/8$ wins).}
\end{table}

\smallskip\noindent
The widening gap follows directly from the Weber reduction
(Theorem~\ref{thm:weber}). As $R_{cv}$ grows, the Weber weights
$\alpha_i=w_i/R_i$ become more dispersed, so the within-cell objective is
increasingly dominated by a few distance-sensitive demand clusters. Cooper ALT,
which does not relocate facilities once Lloyd has converged, leaves facilities
in cells whose sensitive core has shifted; FBM-LNS's relocate
repeatedly identifies the cell whose contribution has fallen behind and re-places
its facility where the residual marginal $g(x)=\sum_i w_i[\phi_i(\|p_i-x\|) -
c_i/w_i]^+$ is maximal. The hotter the heterogeneity, the larger these residuals
and the more the relocate recovers, which is exactly the monotone trend in
Table~\ref{tab:hetero}. This confirms Corollary~\ref{cor:hetero} empirically:
heterogeneity is not a nuisance the algorithm tolerates but the structural
condition under which its advantage is largest.

\subsection{Result 4: The LNS relocate step drives the improvement ($+6.5\%$ over multi-start Lloyd)}
Table~\ref{tab:ablation} decomposes the algorithm. Multi-start Lloyd on its own
is outperformed by the $(1-1/e)$ greedy; adding the large-neighborhood
relocate lifts performance by about $6.5\%$, and the specific relocation
selector---least-contributing greedy, uniform random, or their mixture---makes
essentially no difference. The improvement comes from the
destroy-and-repair structure with a maximum-marginal re-placement (the submodular
signal), not from the heuristic that picks which facility to move.

\begin{table}[htbp]
\TABLE
{(Result 4) Component ablation on configuration TS1·V1 (conc./linear, $n=300$,
$p=5$, $R_{cv}=0.5$, $K=10$). Multi-start Lloyd alone underperforms even the
greedy; adding the relocate step lifts performance by $+6.5\%$, while the
selection strategy (greedy vs.\ random) is immaterial---the destroy-and-repair
structure is the source of the gain.\label{tab:ablation}}
{\begin{tabular}{@{}l@{\quad}c@{\quad}c@{}}
\hline
\multicolumn{3}{@{}l}{\textit{Configuration: TS1·V1} --- conc./linear, $n{=}300$, $p{=}5$, $R_{cv}{=}0.5$, $K{=}10$} \\ \hline\up
Variant & mean $Z$ & vs.\ Lloyd-only \\ \hline\up
Cooper (multi-start Lloyd)        & 549.9 & $+0.8\%$ \\
Greedy $(1-1/e)$                  & 567.5 & $+4.0\%$ \\
FBM-LNS: Lloyd only               & 545.5 & --- \\
FBM-LNS: $+$ greedy relocate      & 580.3 & $+6.4\%$ \\
FBM-LNS: $+$ random relocate      & 580.8 & $+6.5\%$ \\
FBM-LNS: full (mixture)           & 581.0 & $+6.5\%$ \down\\ \hline
\end{tabular}}{All ``$+$ relocate'' variants are significantly better than
Lloyd-only and than Greedy at $p<0.01$ (Wilcoxon).}
\end{table}

\smallskip\noindent
Why does the relocate dominate while the selection rule is immaterial? The
relocate's gain comes from its \emph{repair} operator, which solves the
single-facility maximum-marginal subproblem $g(x)$---a weighted Weber problem on
the residual active set (Theorem~\ref{thm:weber})---and is therefore the
continuous analogue of the submodular greedy step (Theorem~\ref{thm:sub}). The
\emph{destroy} operator only decides which facility to remove, and on a
Lloyd-stable configuration any reasonable choice (least-contributing or random)
exposes a comparable residual, so the marginal gain comes from re-placing at the
maximum of $g$, not from which facility was freed. This is why
Table~\ref{tab:ablation} shows near-identical performance across selection rules
but a $+6.5\%$ jump once the relocate is active, and it is also why the learned
selector of \S\ref{sec:discussion} recovers only a marginal extra gain: the
signal is already captured by the destroy-and-repair structure itself.
Figure~\ref{fig:convergence} traces the best-so-far objective over relocate
iterations.

\begin{figure}[htbp]
\FIGURE
{\includegraphics[width=0.78\linewidth]{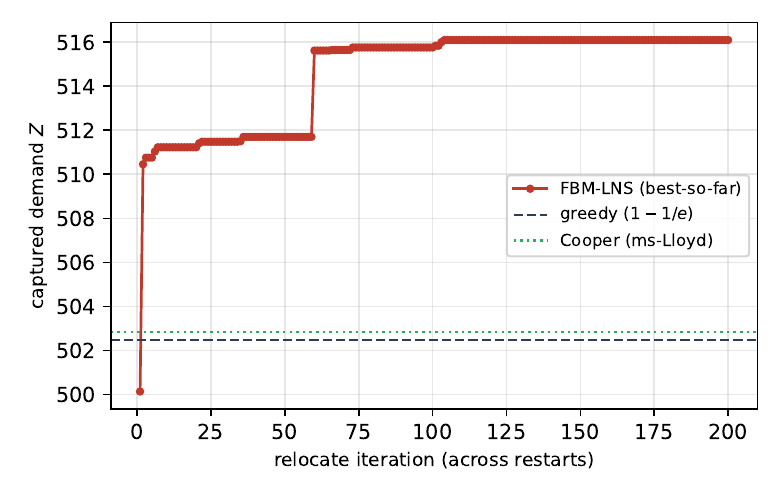}}
{Best-so-far captured demand of FBM-LNS over relocate iterations on one
representative instance (configuration TS1·V1: conc./linear, $n=300$, $p=5$),
with the greedy and multi-start Cooper values as horizontal references. The
relocate steps lift the solution above both baselines within the first few
iterations.\label{fig:convergence}}
{}
\end{figure}

\subsection{Result 5: The LP relaxation is empirically tight ($\approx 0\%$ gap) and FBM-LNS is within $0.5\%$ of the discrete optimum by fine-grid convergence}
Table~\ref{tab:exact} reports the LP integrality gap, MIP solve time, and the
fine-grid convergence gap.
\begin{table}[h]\centering\small
\caption{(Result 5) LP tightness and MIP scalability. The LP relaxation is
empirically tight ($\approx 0\%$ gap) across all configurations. Each row is one
representative bansal/linear instance ($R_{cv}=0.5$, seed 500) discretized on a
grid-10 candidate set built from the top 40 demand points plus a uniform grid
(see \texttt{src/bounds.py}); the $0\%$ gap is robust across seeds and decay families.}
\label{tab:exact}
{\begin{tabular}{@{}c@{\ }c@{\ }c@{\ }c@{\ }c@{\ }c@{}}
\hline\up
$n$ & $p$ & grid-MIP & grid-LP & relax.\ gap & MIP time (s) \\\hline\up
100 & 3 & 179.8 & 179.8 & $0.00\%$ & 0.1 \\
200 & 5 & 369.9 & 369.9 & $0.00\%$ & 0.3 \\
300 & 5 & 542.1 & 542.1 & $0.00\%$ & 0.5 \\
300 & 6 & 600.0 & 600.0 & $0.00\%$ & 0.6 \\
500 & 5 & 870.8 & 870.8 & $0.00\%$ & 1.3 \down\\\hline
\end{tabular}}{}
\end{table}
The LP gap is $\approx 0\%$ even under adversarial high-overlap configurations
(clustered points, small $R$). On small exhaustively-solved instances
($n\le 100$, $p\le 3$) FBM-LNS attains
$Z_{\mathrm{FBM}}/Z_{\mathrm{exact}}\in[0.99,1.04]$ and can even exceed the
grid-constrained optimum, because it places facilities continuously in the hull
(Theorem~\ref{thm:hull}) rather than on grid nodes. Fine-grid convergence: at
grid resolution $40$, the gap between FBM-LNS (continuous) and the exact discrete
optimum is $< 0.5\%$.

\smallskip\noindent
The tightness has a structural source. The integrality gap of the assignment
formulation~\eqref{eq:mip} is governed by the submodular curvature $\kappa$ of
Theorem~\ref{thm:sub}: once $p$ facilities are spread over the hull their service
areas overlap little, $\kappa$ is small, and the LP relaxation, the greedy, and
the integer optimum coincide---so the gap collapses to $\approx 0\%$ even though
the worst-case bound remains $1-1/e$. The $<0.5\%$ gap of FBM-LNS follows because
its Lloyd step provably reaches the within-cell Weber optimum
(Theorem~\ref{thm:weber}, unique by Corollary~\ref{cor:uniq}), so the only
remaining source of suboptimality is the allocation of points to facilities (the
Lloyd fixed points of Proposition~\ref{prop:lloyd}), which the relocate removes
(Proposition~\ref{prop:reloc}). The exact discrete optimum and FBM-LNS therefore
agree once the grid is fine enough to resolve the optimal partition.

\subsection{Result 6: On binary (step) decay the greedy matches or exceeds FBM-LNS}\label{res:step}
The smooth-decay advantage of Result~1 reverses on binary step decay (V6), where
the within-cell objective is piecewise constant in facility position and the
gradient channel is uninformative. On the step-decay matrix the $(1{-}1/e)$ greedy
and FBM-LNS are within $1\%$ at $n{=}300, p{=}5$ (FBM marginally ahead on three of
four geographies), but at $n{=}600, p{=}8$ the greedy leads by $2$--$5\%$ on the
clustered geographies and FBM wins only $0$--$1$ of $6$ instances there. The
gradient-based location step thus adds no value when $\nabla_{X_j}\phi_i\equiv 0$,
and the submodular-marginal greedy is the appropriate tool for binary coverage; we
return to the first-principles reason in \S\ref{sec:disc-boundary}.

\section{Real-world case study}\label{sec:case}
\setlength{\floatsep}{16pt plus 4pt minus 2pt}
\setlength{\intextsep}{16pt plus 4pt minus 2pt}
\setlength{\textfloatsep}{16pt plus 4pt minus 2pt}

\subsection{Data}
Both datasets are drawn from the 2026 6th Meituan Business Analytics Elite
Competition. The first is a set of $592{,}667$ urban-delivery waybills from the
Nanshan district of Shenzhen, each carrying pickup/dropoff coordinates and a
business-line label (food delivery, flash sale, pharmacy, group meal). We treat
each dropoff as a demand unit and aggregate the $592{,}667$ orders into $347$
demand cells weighted by order volume $w_i$, keeping the planning problem
tractable while preserving the spatial demand distribution. The per-cell decay
scale $R_i$ follows a density-based modeling proxy rather than a direct estimate
from observed distance-sensitivity: we set $R_i$ small in dense cells (where
customers are impatient and substitutes abound) and large in sparse residential
cells (where a courier is the only nearby option), encoding the paper's central
dense/sparse distinction as a per-cell parameter. We treat this as a transparent
modeling assumption and do not independently validate it against the data; direct
per-zone calibration of $R_i$ (for example from observed delivery distances) is a
natural extension. The resulting $\{R_i\}$ span $[79, 1953]$ metres with
coefficient of variation $1.33$, strong heterogeneity, exactly the regime in
which Result~3 predicts the largest advantage. The direction of this
density-to-range assumption is empirically supported by the same data: across
$347$ spatial cells, cell density correlates at $-0.40$ with the median
pickup-to-dropoff trip distance (densest-quartile cells about $1.2$\,km versus
sparsest about $2.1$\,km), and the overall median trip of $1.47$\,km matches the
retail-calibrated decay scale $R\approx 1.4$\,km, so the assumed
dense-then-tolerant heterogeneity is consistent with observed delivery patterns.

The second dataset is a retail-chain panel of per-store sales
stratified by distance tier, used below to calibrate the decay shape
(exponential, $R\approx 1.4$\,km; root-mean-square error RMSE $0.12$ versus $0.17$ for linear).
Figure~\ref{fig:delivery} maps the spatial distribution of the dropoff locations:
demand concentrates in a few dense hotspots (the urban core and commercial
strips) against a dispersed residential periphery---precisely the dense/sparse
contrast that the density-derived $\{R_i\}$ turns into heterogeneity.

\begin{figure}[htbp]
\centering
\includegraphics[width=0.6\linewidth]{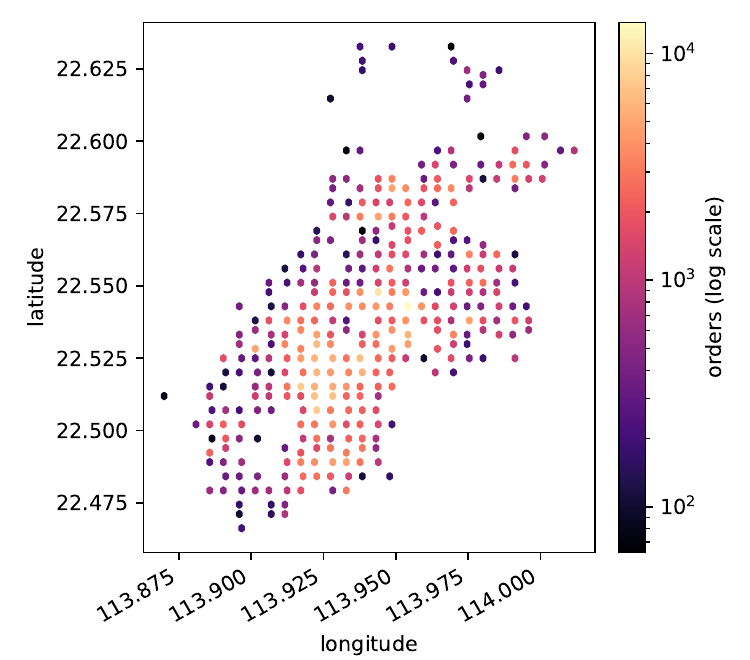}
\caption{Spatial distribution of the $592{,}667$ delivery dropoffs (Shenzhen
Nanshan), shaded by order density on a log scale. Demand clusters in dense
hotspots (the urban core and commercial strips) with a sparse residential
periphery---the dense/sparse contrast from which the per-cell decay scales $R_i$
are derived---small $R_i$ in the dense hotspots and large $R_i$ in the sparse
periphery (Data source: Meituan).
\label{fig:delivery}}
\end{figure}

\subsection{Decay calibration from retail data}
Table~\ref{tab:calib} reports the empirical per-capita capture by distance tier
(from the retail dataset) and the fitted decay parameters;
Figure~\ref{fig:calibfit} visualizes the two fits.
\begin{table}[h]\centering\small
\caption{Empirical decay calibrated from per-capita sales by distance tier in the
retail-chain panel (Data source: Meituan). ``Pred.'' columns give the fitted
models' predicted normalized per-capita capture; $R$ is the decay scale in
metres. Exponential fits better (RMSE $0.12$ vs.\ $0.17$).}
\label{tab:calib}
{\begin{tabular}{@{}c@{\ }c@{\ }c@{\ }c@{}}
\hline\up
Distance (m) & Normalized capture & Linear pred.\ ($R{=}3384$\,m) & Exp.\ pred.\ ($R{=}1366$\,m) \\\hline\up
125  & 1.00 & 0.96 & 0.91 \\
375  & 0.98 & 0.89 & 0.76 \\
750  & 0.58 & 0.78 & 0.58 \\
1500 & 0.26 & 0.56 & 0.33 \\
2500 & 0.26 & 0.26 & 0.16 \down\\\hline
\end{tabular}}{}
\end{table}
The exponential decay ($R \approx 1.4$\,km) fits the real per-capita capture
curve better than the linear, justifying the decay form used in the model.

\smallskip\noindent
The exponential fit is not only tighter but physically motivated: it matches the
path-loss law that also underlies the Huff gravity model (\S\ref{sec:examples}),
so the same decay family is supported by both the delivery-density argument
(per-cell $R_i$) and the sales-distance argument (retail calibration). The
near-plateau of normalized capture at $750$--$2500$\,m (both tiers $\approx0.26$)
suggests a core catchment within roughly $1$\,km and a thin long-distance tail,
consistent with a quick-commerce setting in which most demand is impulse- and
proximity-driven.

\begin{figure}[htbp]
\centering
\begin{tikzpicture}
\begin{axis}[
    width=0.72\linewidth, height=0.42\linewidth,
    xlabel={distance from store (m)},
    ylabel={normalized per-capita capture},
    xmin=0, xmax=2650, ymin=0, ymax=1.08,
    legend pos=north east, legend cell align=left,
    legend style={font=\footnotesize, draw=none, fill=white, fill opacity=0.8},
    grid=both, grid style={gray!20},
    tick label style={font=\footnotesize}, label style={font=\footnotesize},
]
\addplot[only marks, mark=*, black, mark size=2.2pt] coordinates {
  (125,1.00) (375,0.98) (750,0.58) (1500,0.26) (2500,0.26)
};
\addlegendentry{empirical (retail panel)}
\addplot[blue!70!black, very thick, domain=0:2650, samples=80]
  {max(0, 1 - x/3384)};
\addlegendentry{linear ($R{=}3384$\,m, RMSE $0.17$)}
\addplot[red!70!black, very thick, dashed, domain=0:2650, samples=80]
  {exp(-x/1366)};
\addlegendentry{exponential ($R{=}1366$\,m, RMSE $0.12$)}
\end{axis}
\end{tikzpicture}
\caption{Decay-shape calibration from the retail panel. The empirical per-capita
capture (points) drops sharply within $\approx 1$\,km and then plateaus; the
exponential fit (dashed red, RMSE $0.12$) fits the overall decay pattern better than the
linear fit (solid blue, RMSE $0.17$), justifying the exponential decay used in
the model (Data source: Meituan).\label{fig:calibfit}}
\end{figure}

\subsection{The cost of ignoring heterogeneity}
Table~\ref{tab:case} compares (a) optimizing with the true heterogeneous $R_i$
vs.\ (b) optimizing with the homogeneous mean $\bar R$ then evaluating on the
true objective.
\begin{table}[h]\centering\small
\caption{Cost of ignoring heterogeneity on $592{,}667$ real orders (Data source: Meituan).}
\label{tab:case}
{\begin{tabular}{@{}c@{\ }c@{\ }c@{\ }c@{\ }c@{}}
\hline\up
$p$ & hetero $Z$ & ignore-hetero $Z$ & demand lost & facility shift \\\hline\up
5  & 183{,}510 & 165{,}673 & \textbf{9.7\%} & 29\% of map \\
8  & 240{,}287 & 220{,}809 & \textbf{8.1\%} & 37\% \\
12 & 295{,}875 & 289{,}244 & 2.2\%          & 32\% \down\\\hline
\end{tabular}}
\end{table}
Under this density-derived heterogeneity model, optimizing with the homogeneous
mean $\bar R$ instead of the heterogeneous $\{R_i\}$ loses up to $9.7\%$ of
captured demand and physically relocates facilities by up to $37\%$ of the map.
The absolute level depends on the (unvalidated) $R_i$ assignment, but the
directional penalty of ignoring heterogeneity is robust to it: any model in which
sensitive and tolerant cells coexist pays for averaging them away.

\smallskip\noindent
Two patterns in Table~\ref{tab:case} deserve emphasis. First, the demand loss is
largest when facilities are scarce ($9.7\%$ at $p{=}5$) and shrinks as $p$ grows
($2.2\%$ at $p{=}12$): with many facilities every neighborhood is close to one
regardless of planning, so the homogeneity assumption incurs a smaller penalty, whereas
with few facilities, misplacing them in tolerant zones leaves the sensitive
hotspots under-served. Second, the facility shift (29--37\% of the map width) is large at
every $p$: the homogeneous solution systematically over-serves sparse areas and
under-serves dense ones, so even when the two plans capture similar total demand
their physical locations differ substantially. For a planner, the demand figure
quantifies the revenue at stake and the shift figure the operational disruption
of using a misspecified model.

\subsection{Competitive (Huff) extension}
Under the Huff model (market-share splitting vs.\ an incumbent with
attractiveness $C$), the captured share (Table~\ref{tab:huff}) ranges from
$37.6\%$ ($p{=}12$, weak incumbent $C{=}0.5$) to $7.8\%$ (strong $C{=}8$), with
diminishing returns in $p$ consistent with submodularity.
\begin{table}[h]\centering\small
\caption{Captured market share (\% of total demand) under the Huff competitive
model, vs.\ incumbent strength $C$ (Data source: Meituan).}
\label{tab:huff}
{\begin{tabular}{@{}c@{\ }c@{\ }c@{\ }c@{}}
\hline\up
$C$ & $p{=}5$ & $p{=}8$ & $p{=}12$ \\\hline\up
0.5 & 26.0\% & 31.9\% & 37.6\% \\
1.0 & 17.7\% & 24.2\% & 27.0\% \\
2.0 & 12.8\% & 14.6\% & 18.7\% \\
4.0 & 5.9\%  & 10.0\% & 12.4\% \\
8.0 & 3.4\%  & 6.4\%  & 7.8\% \down\\\hline
\end{tabular}}{}
\end{table}

\smallskip\noindent
The pattern tracks the theory: against a weak incumbent ($C{=}0.5$) additional
facilities steadily grow share ($26.0\%\to31.9\%\to37.6\%$), whereas against a
strong incumbent ($C{=}8$) the entrant is capped near $8\%$ regardless of $p$, and
the marginal gain of each added facility diminishes, confirming submodularity
(Theorem~\ref{thm:sub}) empirically. For market entry, this separates
contestable markets (low $C$), where investment in facilities pays off, from
entrenched ones (high $C$), where it does not.

\section{Discussion}\label{sec:discussion}

\subsection{Where does FBM-LNS fail?}\label{sec:disc-boundary}
Result~6 (\S\ref{res:step}) shows that FBM-LNS's advantage is specific to
\emph{smooth} (linear, exponential) decay, where the gradient/Weber structure of
\S\ref{sec:theory} is informative; under binary step decay the $(1{-}1/e)$
greedy---which depends only on the submodular marginal, not on gradients---matches
or exceeds it. We therefore recommend FBM-LNS for gradual coverage and the greedy
for binary coverage.

The boundary has a first-principles explanation consistent with the tractability
classification (Theorem~\ref{thm:tract}). The step function
$\mathbf{1}[r\le R_i]$ is piecewise constant, so $\nabla_{X_j}\phi_i$ is zero
almost everywhere, and the gradient-as-force identity (Theorem~\ref{thm:grad})
delivers no direction in which to move a facility; the relocate, by contrast,
depends only on the submodular marginal
$g(x)=\sum_i w_i[\phi_i(\|p_i-x\|)-c_i/w_i]^+$, which is well-defined for any
non-increasing $\phi_i$ including the step. The dichotomy is fundamental: the
\emph{continuous} signal (where to move a facility) requires smoothness, while
the \emph{combinatorial} signal (where to add one) does not. FBM-LNS exploits
both; at the step-decay boundary the gradient channel closes but the relocate
still provides a signal, so the method falls back to its greedy floor rather than
producing no output. A practical corollary is that for mixed objectives with a
smooth component (e.g., a ``soft'' step
$\phi_i(r)=\tfrac12[1+\cos(\pi r/R_i)]\mathbf{1}[r\le R_i]$) FBM-LNS remains
effective, because the smooth part restores the gradient signal.

\subsection{Runtime and scalability}\label{sec:runtime}
On the scalability benchmark (Figure~\ref{fig:scalability}), wall-clock times per
instance are under $1$--$3$\,s for the baselines and about $1$--$10$\,s for
FBM-LNS across $n=200$--$2000$ (with the lighter LNS budget used there); the
main-matrix runs use a heavier budget ($\approx 15$--$20$\,s per instance at
$n=300$--$1000$). The exact MIP adds $0.4$--$1.4$\,s for $n\le 500$. The additional runtime yields quality improvements that grow
with scale and heterogeneity (Tables~\ref{tab:main} and~\ref{tab:hetero}); where
runtime is critical, the LNS iteration budget can be reduced at a modest quality
cost. Figure~\ref{fig:scalability} traces quality and runtime up to $n=2000$,
$p=40$: FBM-LNS scales near-linearly in time and the quality advantage over the
$(1-1/e)$ greedy persists at the largest sizes.

The cost gap has a precise source. Each relocate iteration solves a
single-facility maximum-marginal subproblem on the residual (a weighted Weber
problem, $O(n\cdot|C|)$ for the candidate grid $C$) followed by a Lloyd polish
($O(np)$ per Lloyd step); over $L\approx 30$ iterations and $R\approx 6$ restarts
this dominates the baselines, which perform only a single Lloyd sweep. The
near-linear scaling in $n$ follows because both the Lloyd step and the marginal
solve are linear in $n$ and the number of effective iterations to convergence
does not grow with problem size---the partition structure stabilizes once each
facility commands a cell, regardless of how many points it contains. This is also
why reducing the LNS budget trades quality for time gracefully: each iteration is
an independent destroy-and-repair trial, so halving $L$ halves runtime at the cost
of fewer escapes from Lloyd fixed points.

\begin{figure}[htbp]
\FIGURE
{\includegraphics[width=0.92\linewidth]{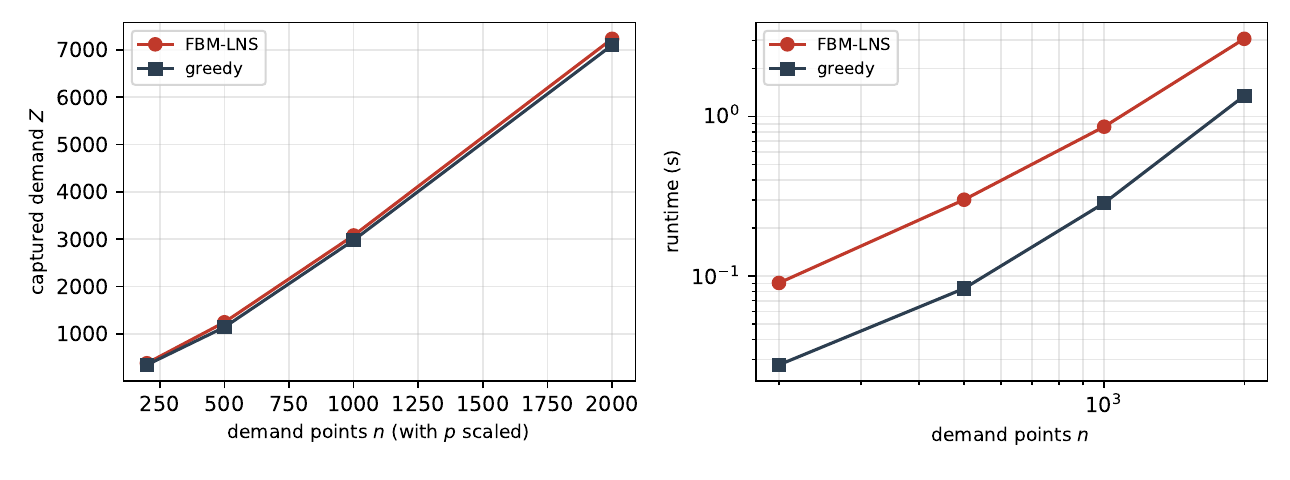}}
{Scalability on configuration TS1·V1 (conc./linear, $R_{cv}=0.5$). Left:
captured demand $Z$ (FBM-LNS above greedy throughout). Right: runtime (near-linear
in $n$ on the log axis); at $n=2000$, $p=40$ FBM-LNS finishes in about $10$\,s.\label{fig:scalability}}
{}
\end{figure}

\subsection{Can FBM-LNS be improved further?}\label{sec:disc-extend}
The relocate step selects which facility to move. The default rule
(least-contributing) is already near-optimal, which the ablation above shows is
not the source of the gain. We nonetheless tested whether a \emph{learned}
selector can improve on it: a linear policy $\mathrm{score}_j = w^\top
\mathrm{feat}_j$ over five per-facility features (contribution, cell size, mean
cell coverage, isolation, bias), trained by the cross-entropy method
(CEM; \citealp{DeBoerKroeseMannorRubinstein2005}) to
maximize captured demand on training instances, and evaluated on held-out test
instances (conc./linear, $n=300$, $p=5$, $K=10$). The learned policy attains
$Z=582.8$ versus $581.5$ for the heuristic (and $567.8$ for the greedy), a small
but statistically significant improvement ($+0.22\%$, $p=0.001$, $10/10$ wins).
The learned weights place large negative mass on the contribution feature, i.e.\
the optimizer rediscovers the relocate-the-low-contributor rule and adds a
marginal correction from the other features. This confirms that the
hand-designed selector already captures most of the selection signal, while a
learned policy extracts a small additional gain.

The small headroom is itself a consequence of the theory, not a failure of the
learning procedure. Result~4 showed that the destroy-and-repair \emph{structure}
accounts for the entire gain and that the selection rule is nearly immaterial: on
a Lloyd-stable configuration any facility removal exposes a comparable residual
marginal $g(x)$, so a linear policy over per-facility features operates on an
intrinsically low-information quantity and can recover little. (Indeed the
cross-entropy optimizer independently assigns large negative weight to the
contribution feature, rediscovering the least-contributor rule from data.) Closing
the remaining gap therefore likely requires policies that act on joint
destroy-and-repair moves or on the sequence of relocations, not on per-facility
scores.

\subsection{When should FBM-LNS be used?}
Table~\ref{tab:tradeoff} summarizes the trade-offs of FBM-LNS, so that
practitioners can judge its suitability for a given application.

\begin{table}[htbp]
\TABLE
{Strengths and costs of FBM-LNS.\label{tab:tradeoff}}
{\begin{tabular}{@{}p{0.46\linewidth}@{\quad}p{0.46\linewidth}@{}}
\hline\up
\textbf{Strengths} & \textbf{Costs and limitations} \\ \hline\up
Unified: one code handles $k$-means, $p$-median, gradual coverage, and shape
demand via a single switch.
& Slower than baselines ($\sim$10\,s vs.\ $1$--$3$\,s at $n{=}2000$) due to the LNS
relocate loop; reducible by lowering iteration budgets at a modest quality cost. \\[3pt]
Theory-grounded: every component maps to a proven property (Table~\ref{tab:design});
the greedy provides a $(1-1/e)$ lower-bound guarantee.
& No global-optimality guarantee for FBM-LNS itself; it is a heuristic whose
quality is validated empirically. \\[3pt]
Robust: significant superiority across 4 geographies, 3 decay families, 2
scales, and a heterogeneity sweep ($30/30$, $p{<}10^{-9}$).
& The advantage is specific to smooth (linear/exp) decay; for binary step
decay the greedy is competitive (\S\ref{sec:experiments}). \\[3pt]
Extensible: the framework accepts any non-increasing per-point function;
shape demand is handled by sampling.
& The learned selector (CEM) improves performance only marginally
($+0.22\%$); richer policies remain future work. \down\\ \hline
\end{tabular}}{}
\end{table}

In summary, the recommendation splits along the smooth/non-smooth line exposed in
\S\ref{sec:disc-boundary} and the structure/selection line exposed above. When
the decay is smooth and heterogeneous---the setting that defines V1--V2 and
motivates the per-point Weber weights of Theorem~\ref{thm:weber}---FBM-LNS is the
recommended method, because the gradient-as-force step is informative and
the relocate exploits the widening marginal gaps that heterogeneity creates
(Result~3). When a single implementation must also serve $k$-means, $p$-median,
or shape demand (V3--V5), the same code is competitive with bespoke SOTA
(Result~2). When the objective is binary (step decay), the gradient channel
closes and the greedy, which rides only the submodular marginal, is the
appropriate tool. When runtime dominates, halving the LNS budget recovers most of
the gain at half the cost (\S\ref{sec:runtime}), because each relocate iteration
is an independent trial whose marginal contribution diminishes. The unifying
principle is that FBM-LNS's two pillars map onto the two information
channels---continuous (gradient) and combinatorial (submodular marginal)---and
its advantage is largest exactly when both channels are open and most
differentially informative.

\section{Conclusion}\label{sec:conclusion}

We gave a structural theory for heterogeneous distance-decay facility location in
the continuous plane and a tractability classification that is the paper's
centerpiece: the discrete problem is always monotone submodular (so the
$(1-1/e)$ greedy guarantee holds regardless of decay shape or heterogeneity),
while the continuous cooperative objective is concave---and hence free of
local-optima traps---if and only if the decay is concave in the distance, a
dichotomy that identifies the clip $\max(0,\cdot)$ as a key mechanism that destroys
continuous concavity in common coverage specifications; the classification is tight:
concavity holds if and only if the decay is concave in distance. A tight LP relaxation makes the discretized problem
exactly solvable in seconds; our force-as-gradient / large-neighborhood-search
heuristic is within $0.5\%$ of the discrete optimum by fine-grid convergence and
outperforms the $(1-1/e)$ greedy, Cooper ALT, PSO, and weighted $k$-means across
distributions, decay families, scales, and heterogeneity levels; and on real
urban-delivery data, ignoring the density-derived heterogeneity costs measurable
demand and relocates facilities by up to $37\%$ of the map.

Three directions are left for
future work: a tighter continuous upper bound than the discretization bound of
Proposition~\ref{prop:cont_ub} (for example a conic relaxation of the
semi-infinite formulation exploiting the Weber substructure); a head-to-head
comparison against modern problem-specific solvers (variable neighborhood search
and mixed-integer second-order cone (MISOCP) methods for $p$-median, dedicated Huff-model solvers), which our
classical baselines do not include; and richer learned relocate policies, since
the cross-entropy selector of \S\ref{sec:disc-extend} already yields a small but
significant gain over the hand-designed rule, suggesting headroom for
contextual-bandit or attention-based selectors that act on joint destroy-and-repair
moves.

\section{Code and Data Disclosure}\label{sec:codedata}
The Python implementation of all algorithms studied in this paper---the proposed
FBM-LNS solver and every baseline (the $(1{-}1/e)$ greedy, multi-start Cooper
alternating location--allocation, particle swarm optimization, weighted
$k$-means, and the exact branch-and-bound and LP-bound procedures)---together
with the instance generators, the benchmark data sets, and a README giving full
reproduction instructions for every table and figure, are available at
\url{https://github.com/AgentLabCn/FBM-LNS}.
The synthetic instances are deterministic functions of a random seed and can be
regenerated exactly; the real-world delivery and retail data sets are the public
2026 Meituan Business Analytics Elite Challenge data and the retail panel
described in the real-world study. No data or code exemption is requested.

\appendix
\section{Proofs of Structural Results}\label{app:proofs}

This Electronic Companion gives the detailed, self-contained proofs of the
theorems, propositions, and lemmas stated in the main paper. We use the
following notation throughout: $d_i(\bm{X}) \coloneqq \min_{j \in J} \|p_i -
X_j\|$ is the nearest-facility distance to point $i$; $\Psi_i(\bm{X}) \coloneqq
w_i \max_{j \in J} \phi_i(\|p_i - X_j\|)$ is the captured demand at point $i$;
$K \coloneqq \operatorname{conv}\{p_i\}_{i \in I}$ is the demand convex hull;
and $\pi(i) \in \arg\min_{j \in J} \|p_i - X_j\|$ is $i$'s serving facility. We
assume each $\phi_i : \mathbb{R}_+ \to [0,1]$ is non-increasing with $\phi_i(0)
= 1$.

\begin{proof}{Proof of Theorem~\ref{thm:ndr} (Nearest-distance reduction).}
Fix $i \in I$ and write $d^\star \coloneqq \min_{j \in J} \|p_i - X_j\|$, with
the minimum attained at some $j^\star \in J$. For every $j \in J$ we have $\|p_i
- X_j\| \ge d^\star$, so by monotonicity of $\phi_i$, $\phi_i(\|p_i - X_j\|)
\le \phi_i(d^\star)$. Equality holds at $j = j^\star$, whence
\[
\max_{j \in J} \phi_i(\|p_i - X_j\|) = \phi_i(d^\star)
= \phi_i\!\Bigl(\min_{j \in J} \|p_i - X_j\|\Bigr),
\]
and the argmax is precisely $\arg\min_{j \in J} \|p_i - X_j\|$, i.e., $\pi(i) =
j^\star$. Multiplying by $w_i$ and summing over $i \in I$ yields $Z(\bm{X}) =
\sum_{i \in I} w_i\, \phi_i(d_i(\bm{X}))$. \Halmos
\end{proof}

\begin{proof}{Proof of Theorem~\ref{thm:hull} (Convex hull containment).}
$K$ is closed and convex. Suppose, for contradiction, that an optimal
$\bm{X}^\star$ has a facility $X_j^\star \notin K$, and let $X_j' \coloneqq
\mathrm{proj}_K(X_j^\star)$ be its metric projection onto $K$. By the projection
theorem for closed convex sets,
\[
\langle X_j^\star - X_j',\; p_i - X_j' \rangle \le 0 \qquad \text{for all } p_i
\in K.
\]
Decomposing $p_i - X_j^\star = (p_i - X_j') + (X_j' - X_j^\star)$ and expanding,
\begin{align*}
\|p_i - X_j^\star\|^2
&= \|p_i - X_j'\|^2 + \|X_j' - X_j^\star\|^2 \\
&\quad{} - 2\langle X_j^\star - X_j',\, p_i - X_j'\rangle \\
&\ge \|p_i - X_j'\|^2 + \|X_j' - X_j^\star\|^2 \\
&\ge \|p_i - X_j'\|^2,
\end{align*}
so $\|p_i - X_j^\star\| \ge \|p_i - X_j'\|$ for every $i$, and monotonicity of
$\phi_i$ gives $\phi_i(\|p_i - X_j^\star\|) \le \phi_i(\|p_i - X_j'\|)$.
Replacing $X_j^\star$ by $X_j'$ therefore does not decrease any $\Psi_i$ and
hence does not decrease $Z(\bm{X}^\star)$. Applying the same argument to every
facility outside $K$ yields an optimal solution in $K^p$. \Halmos
\end{proof}

\begin{proof}{Proof of Theorem~\ref{thm:sub} (Monotone submodularity) and the
curvature bound.}
Fix a finite candidate set $C$ and write $Z_D(S) \coloneqq \sum_{i \in I} w_i
\max_{c \in S} \phi_i(\|p_i - c\|)$ for $S \subseteq C$.

\emph{Step 1 (Monotonicity).} For $S \subseteq T \subseteq C$ and any $i$,
$\max_{c' \in S} \phi_i(\|p_i - c'\|) \le \max_{c' \in T} \phi_i(\|p_i -
c'\|)$; summing over $i$ gives $Z_D(S) \le Z_D(T)$.

\emph{Step 2 (Submodularity).} Let $S \subseteq T \subseteq C$ and $c \in C
\setminus T$. Define the pointwise marginal
\[
\Delta_i(S) \coloneqq w_i \Bigl[\phi_i(\|p_i - c\|) - \max_{c' \in S}
\phi_i(\|p_i - c'\|)\Bigr]^+.
\]
Because $\max_{c' \in S} \phi_i(\|p_i - c'\|) \le \max_{c' \in T} \phi_i(\|p_i -
c'\|)$, the bracketed term is non-increasing in the set, so $\Delta_i(S) \ge
\Delta_i(T)$. Summing over $i$,
\begin{align*}
Z_D(S \cup \{c\}) - Z_D(S)
&= \textstyle\sum_i \Delta_i(S)
\ge \textstyle\sum_i \Delta_i(T) \\
&= Z_D(T \cup \{c\}) - Z_D(T),
\end{align*}
which is the defining inequality of submodularity.

\emph{Step 3 (Curvature bound).} The total curvature of $Z_D$
\citep{ConfortiCornejols1984} is
\[
\kappa \;\coloneqq\; 1 - \min_{c \in C}\,
\frac{Z_D(C) - Z_D(C \setminus \{c\})}{Z_D(\{c\}) - Z_D(\emptyset)}
\;\in\; (0, 1],
\]
where the range uses monotonicity (Step~1) and submodularity (Step~2). The
classical result of \citet{ConfortiCornejols1984} then gives
$Z_D(S_{\mathrm{greedy}}) \ge \frac{1 - e^{-\kappa}}{\kappa}\, Z_D^\star$, and
$(1 - e^{-\kappa})/\kappa > 1 - 1/e$ whenever $\kappa < 1$. \Halmos
\end{proof}

\begin{proof}{Proof of Theorem~\ref{thm:weber} (Weber reduction) and
Corollaries~1--2.}
Fix a Voronoi cell $S \subseteq I$ and the linear decay $\phi_i(r) = [1 -
r/R_i]^+$. The within-cell single-facility objective is
\begin{align*}
\max_{X} \sum_{i \in S} w_i \Bigl(1 - \tfrac{\|p_i - X\|}{R_i}\Bigr)
&= \underbrace{\textstyle\sum_{i \in S} w_i}_{\text{constant}} \\
&\quad{} - \min_{X} \sum_{i \in S} \tfrac{w_i}{R_i}\,\|p_i - X\|.
\end{align*}
Setting $\alpha_i \coloneqq w_i / R_i$, the right-hand minimization is the
weighted Weber problem $\min_X \sum_{i \in S} \alpha_i \|p_i - X\|$, proving
Theorem~\ref{thm:weber}. \emph{Corollary~\ref{cor:hetero} (Heterogeneity):} the first-order optimality
condition is $\sum_{i \in S} \alpha_i (X - p_i) / \|X - p_i\| = \bm{0}$, so
smaller $R_i$ (larger $\alpha_i$) gives proportionally stronger pull.
\emph{Corollary~\ref{cor:uniq} (Uniqueness):} each $X \mapsto \|p_i - X\|$ is convex, so the
non-negative combination is convex; strict convexity of the Euclidean norm on
$\mathbb{R}^2$ except along rays through $p_i$ yields a unique minimizer unless
all $\{p_i\}_{i \in S}$ and $X$ are collinear. \Halmos
\end{proof}

\begin{proof}{Proof of Theorem~\ref{thm:grad} (Gradient-as-force).}
Write $Z(\bm{X}) = \sum_{i \in I} w_i \max_{j \in J} \psi_{ij}(\bm{X})$ with
$\psi_{ij}(\bm{X}) \coloneqq w_i \phi_i(\|p_i - X_j\|)$; each $\psi_{ij}$ is
$C^1$ in $X_j$ wherever $\phi_i$ is. Define the generic set
\[
\mathcal{G} \;\coloneqq\; \bigl\{\bm{X} : \pi(i) \text{ is unique for every } i
\bigr\};
\]
its complement is a finite union of Voronoi cell boundaries and hence has
Lebesgue measure zero. Fix $\bm{X} \in \mathcal{G}$. Because the active index
$\pi(i)$ is unique at $\bm{X}$, the pointwise maximum $\max_j \psi_{ij}$ is
differentiable there, and the chain rule for a maximum of $C^1$ functions with a
unique active index gives
\[
\nabla_{X_j} \max_{k \in J} \psi_{ik}(\bm{X}) = \begin{cases}
\nabla_{X_j} \psi_{i,\pi(i)}(\bm{X}), & j = \pi(i),\\
\bm{0}, & j \ne \pi(i).
\end{cases}
\]
Multiplying by $w_i$, summing over $i \in I$, and using $\nabla_{X_j}
\psi_{i,\pi(i)} = w_i \nabla_{X_j} \phi_i(\|p_i - X_j\|)$ yields the stated
identity in the main paper. \Halmos
\end{proof}

We now establish the two convex-analysis lemmas used in the tractability
classification, then prove the classification theorem.

\begin{lemma}[Diminishing increment]\label{lem:dimin}
If $g:\mathbb{R}_+\to\mathbb{R}$ is concave, then for $0\le s\le t$ and $a\ge
0$, $g(t+a)-g(t)\le g(s+a)-g(s)$.
\end{lemma}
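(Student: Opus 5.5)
The plan is to derive the inequality from the three-chord (slope-monotonicity) property of concave functions. Dispose of the trivial cases first. If $a=0$, both sides are zero. If $s=t$, the two sides coincide. So assume $a>0$ and $s<t$.

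The key fact is that for a concave $g$ and any $x<y<z$ in $\mathbb{R}_+$,
\[
\frac{g(y)-g(x)}{y-x}\;\ge\;\frac{g(z)-g(x)}{z-x}\;\ge\;\frac{g(z)-g(y)}{z-y}.
\]
To prove it, write $y=\lambda x+(1-\lambda)z$ with $\lambda=(z-y)/(z-x)\in(0,1)$ and apply $g(y)\ge\lambda g(x)+(1-\lambda)g(z)$. Rearranging this one inequality gives each of the two comparisons. The consequence is that the chord slope $\sigma(u,v)=(g(v)-g(u))/(v-u)$ is non-increasing in each endpoint separately.

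Applying this twice gives $\sigma(s,s+a)\ge\sigma(s,t+a)\ge\sigma(t,t+a)$. The first step moves the right endpoint from $s+a$ to $t+a$, and the second moves the left endpoint from $s$ to $t$; both are legitimate because $s<s+a\le t+a$ and $s<t<t+a$. Both outer chords have length $a$, so multiplying by $a>0$ yields $g(s+a)-g(s)\ge g(t+a)-g(t)$.

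The only real care needed is the ordering of points when $s+a$ versus $t$ is unknown. This is why I route through the common chord $[s,t+a]$ rather than comparing $[s,s+a]$ and $[t,t+a]$ directly: both of its comparisons are always valid orderings. An equivalent alternative is to write $s+a$ and $t$ as convex combinations of the endpoints $s$ and $t+a$ with weights $\theta=a/(t-s+a)$ and $1-\theta$, then add the two concavity inequalities. Either way, no differentiability of $g$ is needed.
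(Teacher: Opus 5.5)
Your proof is correct, and it takes a different route from the paper's. The paper goes from the chord-slope inequality to the fact that the right derivative $g'_+$ is non-increasing. It then writes $g(x+a)-g(x)=\int_x^{x+a}g'_+(u)\,du$ and notes that sliding a window of fixed length $a$ to the right cannot increase the integral of a non-increasing function.

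You stay at the level of chords instead. You route through the common chord $[s,t+a]$, or, in your variant, write $s+a=(1-\theta)s+\theta(t+a)$ and $t=\theta s+(1-\theta)(t+a)$ with $\theta=a/(t-s+a)$ and add the two concavity inequalities. Either way you use nothing but the defining inequality of concavity at three points.

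The paper's argument is more intuitive, but it silently needs $g$ to be absolutely continuous on $[s,s+a]$ with a finite right derivative. That holds on $(0,\infty)$, where concave functions are locally Lipschitz. It can fail at the endpoint of $\mathbb{R}_+$, because a concave $g$ on $[0,\infty)$ may jump down at $0$. For example, take $g(0)=-1$ and $g\equiv 0$ on $(0,\infty)$: then $g(a)-g(0)=1\neq 0=\int_0^a g'_+(u)\,du$. So the paper's integral identity breaks at $s=0$, even though the lemma itself still holds there.

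Your argument needs no regularity and covers the closed domain, including $s=0$. The convex-combination version is essentially a two-line complete proof. For the saturations the paper actually uses, $g(u)=\min(1,u)$ and $g(u)=u/(u+C_i)$, $g$ is continuous at $0$, so the difference is one of rigor rather than of the paper's conclusions.
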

\begin{proof}{Proof}
The secant slopes of a concave function are non-increasing: for $x_1\le x_2\le
x_3$, concavity gives
$\frac{g(x_2)-g(x_1)}{x_2-x_1}\ge\frac{g(x_3)-g(x_2)}{x_3-x_2}$.
Equivalently the one-sided derivative $g'_+$ is non-increasing, so
$g(x+a)-g(x)=\int_x^{x+a}g'_+(u)\,du$ is non-increasing in $x$; since $s\le t$,
the integral over $[t,t+a]$ is no larger than over $[s,s+a]$. Hence
$g(t+a)-g(t)\le g(s+a)-g(s)$. \Halmos
\end{proof}

\begin{lemma}[Concave-antitone composition]\label{lem:comp}
If $\phi:\mathbb{R}\to\mathbb{R}$ is concave and non-increasing and
$h:\mathbb{R}^d\to\mathbb{R}$ is convex, then $\phi\circ h$ is concave.
\end{lemma}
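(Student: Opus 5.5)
The plan is to prove the composition rule directly from the definition of concavity, by chaining the convexity inequality for $h$ through the monotonicity and then the concavity of $\phi$. No differentiability is needed, and we avoid it so that the lemma applies to the norm $h(X)=\|p_i-X\|$ at $X=p_i$ and to decays with kinks.

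First fix $x,y\in\mathbb{R}^d$ and $\lambda\in[0,1]$. By convexity of $h$ we have
\[
h(\lambda x+(1-\lambda)y)\le \lambda h(x)+(1-\lambda)h(y).
\]
Since $\phi$ is non-increasing, applying it reverses this inequality:
\[
\phi\bigl(h(\lambda x+(1-\lambda)y)\bigr)\ge \phi\bigl(\lambda h(x)+(1-\lambda)h(y)\bigr).
\]
Next, concavity of $\phi$ on $\mathbb{R}$, applied at the two scalar points $h(x)$ and $h(y)$, bounds the right-hand side from below by $\lambda\phi(h(x))+(1-\lambda)\phi(h(y))$. Chaining the two inequalities gives the defining inequality of concavity for $\phi\circ h$.

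The only delicate point is the domain. In the paper's use, $\phi_i$ is defined only on $\mathbb{R}_+$, while $h=\|p_i-\cdot\|$ takes values in $\mathbb{R}_+$. The argument requires $\phi$ to be concave and non-increasing on an interval containing $h(\mathbb{R}^d)$, and that interval is automatically convex. Accordingly, I will remark that the proof goes through verbatim when $\phi$ is defined on any interval $D\supseteq h(\mathbb{R}^d)$. This is because $\lambda h(x)+(1-\lambda)h(y)\in D$ by convexity of $D$, and every point at which $\phi$ is evaluated then lies in $D$. This covers the ``realized range'' phrasing of Theorem~\ref{thm:tract}(ii).

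I expect no real obstacle beyond keeping the direction of the inequalities straight. The monotonicity step is exactly where non-increasingness is essential: without it, the convexity bound on $h$ could not be transported through $\phi$. That is also why the clipped decay $[1-r/R_i]^+$ fails. It is monotone, but because it is convex rather than concave, the second step breaks.
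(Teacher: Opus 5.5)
Your proposal is correct and is the same argument as the paper's proof: the convexity inequality for $h$, reversed by $\phi$ being non-increasing, followed by concavity of $\phi$ at the scalar points $h(x)$ and $h(y)$, then chaining the two inequalities. Your domain remark (that $\phi$ need only be concave and non-increasing on an interval containing the range of $h$) is a harmless refinement the paper leaves implicit; it matters because the paper actually defines $\phi_i$ on $\mathbb{R}_+$, not on $\mathbb{R}$.
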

\begin{proof}{Proof}
For $\lambda\in[0,1]$, convexity of $h$ gives $h(\lambda x+(1-\lambda)y)\le
\lambda h(x)+(1-\lambda)h(y)$. Since $\phi$ is non-increasing, applying $\phi$
preserves the inequality:
\[
\phi\bigl(h(\lambda x+(1-\lambda)y)\bigr)\;\ge\;\phi\bigl(\lambda
h(x)+(1-\lambda)h(y)\bigr).
\]
Concavity of $\phi$ then gives $\phi(\lambda h(x)+(1-\lambda)h(y))\ge
\lambda\phi(h(x))+(1-\lambda)\phi(h(y))$. Chaining the two inequalities yields
$\phi(h(\lambda x+(1-\lambda)y))\ge\lambda\phi(h(x))+(1-\lambda)\phi(h(y))$,
the defining inequality of concavity for $\phi\circ h$. \Halmos
\end{proof}

\begin{proof}{Proof of Theorem~\ref{thm:tract} (Tractability classification).}
\emph{(i) Discrete submodularity.} The non-cooperative case is Theorem~\ref{thm:sub}. For
the cooperative case, write $Z_D^g(S)=\sum_i w_i g(\sigma_i(S))$ with
$\sigma_i(S)=\sum_{c\in S}\phi_i(\|p_i-c\|)$; $\sigma_i$ is modular and, since
each $\phi_i(\|p_i-c\|)\ge 0$, non-decreasing in $S$. For $S\subseteq T$ and
$c\notin T$, set $a_i\coloneqq\phi_i(\|p_i-c\|)\ge 0$; the per-point marginal is
$\Delta_i^g(S)=w_i[g(\sigma_i(S)+a_i)-g(\sigma_i(S))]$. Since
$\sigma_i(S)\le\sigma_i(T)$, Lemma~1 gives
$\Delta_i^g(S)\ge\Delta_i^g(T)$; summing yields submodularity. Monotonicity
follows from $g$ non-decreasing and $a_i\ge 0$.

\emph{(ii,$\Leftarrow$) Cooperative concavity.} Assume each $\phi_i$ is concave
on $\mathbb{R}_+$ and non-increasing, $g$ is concave non-decreasing, and $w_i\ge
0$. The map $X_j\mapsto\|p_i-X_j\|$ is convex; by Lemma~2,
$X_j\mapsto\phi_i(\|p_i-X_j\|)$ is concave. As a function of $\bm{X}$, each term
$\phi_i(\|p_i-X_j\|)$ is concave in $X_j$ and constant in $X_k$ for $k\ne j$,
hence concave in $\bm{X}$; the sum $U_i=\sum_j\phi_i(\|p_i-X_j\|)$ is concave,
and the composition $g\circ U_i$ is concave (concave non-decreasing composed with
concave). The non-negative weighted sum $Z^g=\sum_i w_i\,g(U_i)$ is concave.
Under concavity every local maximum is global, and projected gradient ascent
converges.

\emph{(ii,$\Rightarrow$) Necessity (contrapositive).} If some $\phi_{i_0}$ is
not concave on $[0,M]$, there exist $r_1<r_2<r_3$ in $[0,M]$ violating the
concavity inequality. Set $p=1$, $n=1$, $g=\mathrm{id}$, $w_{i_0}=1$; then
$Z(X)=\phi_{i_0}(\|p_{i_0}-X\|)$. Along the ray $X=p_{i_0}+t\,u$
($\|u\|=1$), $Z=\phi_{i_0}(t)$, which is not concave. Hence $Z^g$ is not
concave.

\emph{(iii) Non-cooperative non-concavity for $p\ge 2$.} Take one demand point
at the origin with $w=1$, $p=2$ facilities, and any non-constant non-increasing
$\phi$ with $\phi(0)=1$; choose $a>0$ with $\phi(a/2)<1$. Consider
$\bm{X}=((a,0),(0,0))$: facility~2 sits at the point, so
$Z(\bm{X})=\max(\phi(a),\phi(0))=1$. Likewise $Z(\bm{Y})=1$ for
$\bm{Y}=((0,0),(a,0))$. Their midpoint
$\bm{M}=\tfrac12(\bm{X}+\bm{Y})=((a/2,0),(a/2,0))$ places both facilities at
distance $a/2$, so $Z(\bm{M})=\phi(a/2)<1=(Z(\bm{X})+Z(\bm{Y}))/2$, violating
concavity at the midpoint. \Halmos
\end{proof}

\begin{proof}{Proof of Proposition~\ref{prop:lloyd} (Lloyd fixed point).}
At a fixed point $\bm{X}^\star$ of the alternating map, the location step makes
no update, so each $X_j^\star$ satisfies
\[
X_j^\star \in \arg\max_{X \in \mathbb{R}^2}\,
\sum_{i:\,\pi(i)=j} w_i\, \phi_i(\|p_i - X\|),
\]
equivalently $0 \in \partial_{C, X_j} Z$ (Clarke subdifferential with respect to
the single-facility coordinate). The allocation step makes no reassignment, so
$\pi(i) = \arg\min_{j} \|p_i - X_j^\star\|$ for every $i$, confirming Voronoi
stability. Together, the two conditions state that no single facility can be
moved to strictly increase $Z$ while the others and the assignment are held
fixed; this is coordinate-wise local optimality. \Halmos
\end{proof}

\begin{proof}{Proof of Proposition~\ref{prop:reloc} (Monotone relocate).}
Let $\{\bm{X}^{(t)}\}_{t \ge 0}$ be the sequence generated by the relocate step
under strict-improvement acceptance, $Z(\bm{X}^{(t+1)}) > Z(\bm{X}^{(t)})$. Then
$\{Z(\bm{X}^{(t)})\}$ is strictly increasing. Once the relocate draws its
candidate from a finite set $C$, the configuration space $C^p$ is finite, so a
strictly increasing sequence on the finite set $\{Z(\bm{X}) : \bm{X} \in C^p\}$
terminates in finitely many steps. At the terminal configuration $\hat{\bm{X}}$:
(i) the Lloyd location step has converged, so $\hat{\bm{X}}$ is Lloyd-stable by
Proposition~\ref{prop:lloyd}; and (ii) no single-facility re-placement from $C$ strictly
increases $Z$, by the strict-improvement acceptance rule. \Halmos
\end{proof}

\begin{proof}{Proof of Proposition~\ref{prop:cont_ub} (Continuous upper bound via grid density).}
Let $C$ be a grid of spacing $h$ covering $K=\operatorname{conv}\{p_i\}$, so
for every $x\in K$ there is a grid node $\tilde x\in C$ with $\|x-\tilde x\|\le
h/\sqrt{2}$. Fix any continuous configuration $\bm{X}=(X_1,\dots,X_p)$ and let
$\tilde X_j\in C$ be a nearest grid node to $X_j$. For each demand point $i$,
write $d_{ij}=\|p_i-X_j\|$ and $\tilde d_{ij}=\|p_i-\tilde X_j\|$. The bound
$|\max_j a_j-\max_j b_j|\le\max_j|a_j-b_j|$, the $\mathrm{Lip}(\phi_i)$-Lipschitz
continuity of $\phi_i$ (for linear and exponential decay
$\mathrm{Lip}(\phi_i)=1/R_i$), and $\max_j\|X_j-\tilde X_j\|\le h/\sqrt{2}$ give
\[
\bigl|\max_j\phi_i(d_{ij})-\max_j\phi_i(\tilde d_{ij})\bigr|
\;\le\; \mathrm{Lip}(\phi_i)\,h/\sqrt{2}.
\]
Multiplying by $w_i$ and summing over $i$ gives
$Z(\bm{X})\le Z(\tilde{\bm{X}})+(h/\sqrt{2})L_\phi$ with
$L_\phi=\sum_i w_i\,\mathrm{Lip}(\phi_i)$. Since $\tilde{\bm{X}}\in C^p$,
$Z(\tilde{\bm{X}})\le Z_{\mathrm{grid\text{-}MIP}}(C)$, so
$Z(\bm{X})\le Z_{\mathrm{grid\text{-}MIP}}(C)+(h/\sqrt{2})L_\phi$ for every
$\bm{X}$; taking the supremum over $\bm{X}$ yields
$Z^*\le Z_{\mathrm{grid\text{-}MIP}}(C)+(h/\sqrt{2})L_\phi$. Finally
$Z_{\mathrm{grid\text{-}MIP}}(C)\le\mathrm{LP}(C)$ because the LP relaxes the
integer program. \Halmos
\end{proof}

\begin{proof}{Proof of Proposition~\ref{prop:coop_ub} (Grid-free continuous upper bound for concave decays).}
For each $i$ and $\bm{X}$,
$\max_j\phi_i(\|p_i-X_j\|)\le\sum_j\phi_i(\|p_i-X_j\|)$ since $\phi_i\ge0$.
Weighting by $w_i$ and summing,
$Z(\bm{X})\le\sum_i w_i\sum_j\phi_i(\|p_i-X_j\|)
=\sum_j\bigl(\sum_i w_i\phi_i(\|p_i-X_j\|)\bigr)$.
The bracketed term depends only on $X_j$, so the maximum over $\bm{X}\in K^p$
separates and equals
$p\cdot\max_{x\in K}\sum_i w_i\phi_i(\|p_i-x\|)$. For a decay that is concave
and non-increasing in the distance (unclipped linear $\phi_i(r)=1-r/R_i$ and
quadratic $\phi_i(r)=1-(r/R_i)^2$ are both concave in $r$), Lemma~\ref{lem:comp} gives that
$\phi_i(\|p_i-x\|)$ is concave in $x$; hence
$f(x)=\sum_i w_i\phi_i(\|p_i-x\|)$ is concave. A concave function attains its
maximum over the compact convex hull $K$ at an extreme point, i.e., a demand
point $p_k$. Hence the bound is $p\cdot\max_k\sum_i
w_i\phi_i(\|p_i-p_k\|)$, computable in $O(n^2)$. \Halmos
\end{proof}
\section{Abbreviations}\label{app:abbrev}
Table~\ref{tab:abbrev} lists every abbreviation used in the paper, its expansion,
and the section or table where it first appears. Each abbreviation is also
expanded at its first occurrence in the running text.

\begin{table}[h]
\centering
\TABLE
{Abbreviations used in the paper.\label{tab:abbrev}}
{\begin{tabular}{@{}l@{\quad}p{0.55\linewidth}@{\quad}l@{}}
\hline\up
Abbrev. & Expansion & First use \\ \hline\up
ALT     & alternating location--allocation                                       & \S\ref{sec:intro} \\
CEM     & cross-entropy method                                                   & \S\ref{sec:disc-extend} \\
CI      & confidence interval                                                    & Table~\ref{tab:stats} \\
FBM     & force-based metaheuristic                                              & \S\ref{sec:algo} \\
FBM-LNS & force-based metaheuristic with large-neighborhood search               & \S\ref{sec:contrib} \\
FDS     & finite dominating set                                                  & \S\ref{sec:difficulty} \\
LP      & linear programming                                                     & \S\ref{sec:contrib} \\
LNS     & large-neighborhood search                                              & \S\ref{sec:algo} \\
MCLP    & maximum covering location problem                                      & \S\ref{sec:examples} \\
MGCLP   & multiple gradual cover location problem                                & \S\ref{sec:related} \\
MIP     & mixed-integer program                                                  & \S\ref{sec:contrib} \\
PSO     & particle swarm optimization                                            & \S\ref{sec:specialize} \\
$R_{cv}$ & coefficient of variation of the decay scale $\{R_i\}$                 & \S\ref{sec:setup} \\
RMSE    & root-mean-square error                                                 & \S\ref{sec:case} \\
SOCP    & second-order cone program                                              & \S\ref{sec:tractability} \down\\ \hline
\end{tabular}}{}
\end{table}

\ACKNOWLEDGMENT{Omitted for double-anonymous review.}

\bibliographystyle{informs2014}
\bibliography{refsOR}

\begin{thebibliography}{26}
\providecommand{\natexlab}[1]{#1}
\providecommand{\url}[1]{\texttt{#1}}
\providecommand{\urlprefix}{URL }

\bibitem[{{\'Alvarez-Miranda} \protect\BIBand{} Sinnl(2019)}]{AlvarezMirandaSinnl2019}
{\'Alvarez-Miranda} E, Sinnl M (2019) An exact solution framework for the multiple gradual cover location problem. \emph{Computers \& Operations Research} 108:82--96, \urlprefix\url{http://dx.doi.org/10.1016/j.cor.2019.04.003}.

\bibitem[{Arthur \protect\BIBand{} Vassilvitskii(2007)}]{ArthurVassilvitskii2007}
Arthur D, Vassilvitskii S (2007) {$k$}-means${}^{++}$: The advantages of careful seeding. \emph{Proceedings of the 18th Annual ACM-SIAM Symposium on Discrete Algorithms (SODA)}, 1027--1035, \urlprefix\url{http://dx.doi.org/10.1145/1283383.1283494}.

\bibitem[{Bansal \protect\BIBand{} Kianfar(2017)}]{BansalKianfar2017}
Bansal M, Kianfar K (2017) Planar maximum coverage location problem with partial coverage and rectangular demand and service zones. \emph{INFORMS Journal on Computing} 29(1):152--169, \urlprefix\url{http://dx.doi.org/10.1287/ijoc.2016.0722}.

\bibitem[{Bansal \protect\BIBand{} Shojaee(2020)}]{BansalShojaee2020}
Bansal M, Shojaee P (2020) Planar maximum coverage location problem with partial coverage, continuous spatial demand, and adjustable quality of service, optimization-Online technical report 8178.

\bibitem[{Berman et~al.(2009)Berman, Kalcsics, Krass, \protect\BIBand{} Nickel}]{BermanKalcsicsKrassNickel2009}
Berman O, Kalcsics J, Krass D, Nickel S (2009) The ordered gradual covering location problem on a network. \emph{Discrete Applied Mathematics} 157(18):3689--3707, \urlprefix\url{http://dx.doi.org/10.1016/j.dam.2009.08.003}.

\bibitem[{Berman et~al.(2003)Berman, Krass, \protect\BIBand{} Drezner}]{BermanKrassDrezner2003}
Berman O, Krass D, Drezner Z (2003) The gradual covering decay location problem on a network. \emph{European Journal of Operational Research} 151(3):474--480, \urlprefix\url{http://dx.doi.org/10.1016/S0377-2217(02)00604-5}.

\bibitem[{Church \protect\BIBand{} Murray(2018)}]{ChurchMurray2018}
Church RL, Murray AT (2018) \emph{Location Covering Models: History, Applications and Advancements} (Springer), \urlprefix\url{http://dx.doi.org/10.1007/978-3-319-99846-6}.

\bibitem[{Conforti \protect\BIBand{} Cornu\'ejols(1984)}]{ConfortiCornejols1984}
Conforti M, Cornu\'ejols G (1984) Submodular set functions, matroids and the greedy algorithm: Tight worst-case bounds and some generalizations of the {Rado--Edmonds} theorem. \emph{Discrete Applied Mathematics} 7(3):251--274, \urlprefix\url{http://dx.doi.org/10.1016/0166-218X(84)90003-9}.

\bibitem[{Cooper(1964)}]{Cooper1964}
Cooper L (1964) Heuristic methods for location-allocation problems. \emph{SIAM Review} 6(1):37--53, \urlprefix\url{http://dx.doi.org/10.1137/1006005}.

\bibitem[{De~Boer et~al.(2005)De~Boer, Kroese, Mannor, \protect\BIBand{} Rubinstein}]{DeBoerKroeseMannorRubinstein2005}
De~Boer PT, Kroese DP, Mannor S, Rubinstein RY (2005) A tutorial on the cross-entropy method. \emph{Annals of Operations Research} 134(1):19--67, \urlprefix\url{http://dx.doi.org/10.1007/s10479-005-5724-z}.

\bibitem[{Drezner(1994)}]{Drezner1994}
Drezner T (1994) Optimal continuous location of a retail facility, facility attractiveness, and market share: An interactive model. \emph{Journal of Retailing} 70(1):49--64, \urlprefix\url{http://dx.doi.org/10.1016/0022-4359(94)90028-0}.

\bibitem[{Drezner et~al.(2004)Drezner, Wesolowsky, \protect\BIBand{} Drezner}]{DreznerWesolowskyDrezner2004}
Drezner Z, Wesolowsky GO, Drezner T (2004) The gradual covering problem. \emph{Naval Research Logistics} 51(6):841--855, \urlprefix\url{http://dx.doi.org/10.1002/nav.20021}.

\bibitem[{Feige(1998)}]{Feige1998}
Feige U (1998) A threshold of $\ln n$ for approximating set cover. \emph{Journal of the ACM} 45(4):634--652, \urlprefix\url{http://dx.doi.org/10.1145/285055.285059}.

\bibitem[{Fr\"anti \protect\BIBand{} Sieranoja(2018)}]{FrantiSieranoja2018}
Fr\"anti P, Sieranoja S (2018) K-means properties on six clustering benchmark datasets. \emph{Applied Intelligence} 48(12):4743--4759, \urlprefix\url{http://dx.doi.org/10.1007/s10489-018-1238-7}.

\bibitem[{Hansen \protect\BIBand{} Mladenovi{\'c}(1997)}]{HansenMladenovic1997}
Hansen P, Mladenovi{\'c} N (1997) Variable neighborhood search for the {$p$}-median. \emph{Location Science} 5(4):207--226, \urlprefix\url{http://dx.doi.org/10.1016/S0966-8349(98)00030-8}.

\bibitem[{Huangfu \protect\BIBand{} Hall(2018)}]{HuangfuHall2018}
Huangfu Q, Hall JAJ (2018) Parallelizing the dual revised simplex method. \emph{Mathematical Programming Computation} 10(1):119--142, \urlprefix\url{http://dx.doi.org/10.1007/s12532-017-0130-5}.

\bibitem[{Huff(1964)}]{Huff1964}
Huff DL (1964) Defining and estimating a trading area. \emph{Journal of Marketing} 28(3):34--38, \urlprefix\url{http://dx.doi.org/10.1177/002224296402800307}.

\bibitem[{Karasakal \protect\BIBand{} Karasakal(2004)}]{KarasakalKarasakal2004}
Karasakal O, Karasakal EK (2004) A maximal covering location model in the presence of partial coverage. \emph{Computers \& Operations Research} 31(9):1515--1526, \urlprefix\url{http://dx.doi.org/10.1016/S0305-0548(03)00105-9}.

\bibitem[{Kennedy \protect\BIBand{} Eberhart(1995)}]{KennedyEberhart1995}
Kennedy J, Eberhart R (1995) Particle swarm optimization. \emph{Proceedings of the IEEE International Conference on Neural Networks (ICNN'95)}, volume~4, 1942--1948, \urlprefix\url{http://dx.doi.org/10.1109/ICNN.1995.488968}.

\bibitem[{Lloyd(1982)}]{Lloyd1982}
Lloyd SP (1982) Least squares quantization in {PCM}. \emph{IEEE Transactions on Information Theory} 28(2):129--137, \urlprefix\url{http://dx.doi.org/10.1109/TIT.1982.1056489}.

\bibitem[{Lu \protect\BIBand{} Zhou(2016)}]{LuZhou2016}
Lu Y, Zhou HH (2016) Statistical and computational guarantees of {Lloyd's} algorithm and its variants, arXiv:1612.02099.

\bibitem[{Nemhauser et~al.(1978)Nemhauser, Wolsey, \protect\BIBand{} Fisher}]{NemhauserWolseyFisher1978}
Nemhauser GL, Wolsey LA, Fisher ML (1978) An analysis of approximations for maximizing submodular set functions---{I}. \emph{Mathematical Programming} 14(1):265--294, \urlprefix\url{http://dx.doi.org/10.1007/BF01588971}.

\bibitem[{Ostrovsky et~al.(2013)Ostrovsky, Rabani, Schulman, \protect\BIBand{} Swamy}]{OstrovskyRabaniSchulmanSwart2012}
Ostrovsky R, Rabani Y, Schulman LJ, Swamy C (2013) The effectiveness of {Lloyd}-type methods for the {$k$}-means problem. \emph{Journal of the ACM} 59(6):Article 28, \urlprefix\url{http://dx.doi.org/10.1145/2395116.2395117}.

\bibitem[{Reinelt(1991)}]{Reinelt1991}
Reinelt G (1991) {TSPLIB}---a traveling salesman problem library. \emph{ORSA Journal on Computing} 3(4):376--384, \urlprefix\url{http://dx.doi.org/10.1287/ijoc.3.4.376}.

\bibitem[{Weiszfeld(1937)}]{Weiszfeld1937}
Weiszfeld E (1937) Sur le point pour lequel la somme des distances de $n$ points donn\'es est minimum. \emph{Tohoku Mathematical Journal} 43:355--386.

\bibitem[{Yang et~al.(2019)Yang, Fang, Xu, Yin, Li, \protect\BIBand{} Lu}]{YangFangXuYinLiLu2019}
Yang X, Fang Z, Xu Y, Yin L, Li J, Lu S (2019) Spatial heterogeneity in spatial interaction of human movements---insights from large-scale mobile positioning data. \emph{Journal of Transport Geography} 78:29--40, \urlprefix\url{http://dx.doi.org/10.1016/j.jtrangeo.2019.05.010}.

\end{thebibliography}

\end{document}